\setlist[enumerate]{leftmargin=0.5 cm}
\setlist[itemize]{leftmargin=0.3 cm}
\theoremstyle{definition}
\declaretheoremstyle[
  headfont=\normalfont\bfseries,
  numbered=unless unique,
  bodyfont=\normalfont,
  qed={$\blacksquare$}
]{exmpstyle2}
\newtheorem{definition}{Definition}
\newtheorem{lemma}{Lemma}
\newtheorem{theorem}{Theorem}
\newtheorem{corollary}{Corollary}
\DeclareMathAlphabet{\mathpzc}{OT1}{pzc}{m}{it}
\newcommand{\Title}{Decoherence-Aware Entangling and Swapping Strategy Optimization for Entanglement Routing in Quantum Networks}
\DeclareAcronym{QN}{short=QN,long=quantum network}
\DeclareAcronym{SN}{short=SN,long=social network}
\DeclareAcronym{SD}{short=SD,long=source-destination}
\DeclareAcronym{QC}{short=QC,long=quantum computer}
\DeclareAcronym{QEC}{short=QEC,long=quantum error correction}
\DeclareAcronym{LP}{short=LP,long=linear programming}
\DeclareAcronym{ILP}{short=ILP,long=integer linear programming}
\DeclareAcronym{DP}{short=DP,long=dynamic programming}
\DeclareAcronym{REI}{short=REI, long=resource efficiency index}
\DeclareAcronym{MIS}{short=MIS, long=maximum independent set}
\DeclareAcronym{QKD}{short=QKD, long=quantum key distribution}
\newcommand{\ProblemName}{Op\textbf{t}imized D\textbf{e}coherence-aware S\textbf{tr}ategy for Entangl\textbf{i}ng and \textbf{S}wapping Scheduling Problem}
\newcommand{\ProblemNameAbbr}{TETRIS}
\newcommand{\ProblemNameAbbrN}{TETRIS-N}
\newcommand{\AlgoNametwo}{\textbf{F}ractional \textbf{N}umerology \textbf{P}acking and \textbf{R}ounding Algorithm}
\newcommand{\AlgoNameAbbrtwo}{FNPR}
\newcommand{\AlgoNameone}{\textbf{F}idelity-\textbf{L}oad \textbf{T}rade-\textbf{O}ff Algorithm}
\newcommand{\AlgoNameAbbrone}{FLTO}
\newcommand{\Merge}{Nesting}
\newcommand{\Linear}{Linear}
\newcommand{\ASAP}{ASAP}
\newcommand{\UB}{UB}
\newcommand{\NMOutformLinear}{$78\%$}
\newcommand{\NMOutformMerge}{$60\%$}
\newcommand{\NMOutformASAP}{$63\%$}
\newcommand{\FLTOOutformLinear}{$77\%$}
\newcommand{\FLTOOutformMerge}{$60\%$}
\newcommand{\FLTOOutformASAP}{$62\%$}
\newcommand{\customlabel}[2]{%
\protected@write \@auxout {}{\string \newlabel {#1}{{#2}{}}}}
\definecolor{orange}{rgb}{1,0.5,0}
\def\BibTeX{{\rm B\kern-.05em{\sc i\kern-.025em b}\kern-.08em T\
kern-.1667em\lower.7ex\hbox{E}\kern-.125emX}}
\begin{document}

\title{\Title}%

\author{
Shao-Min Huang\IEEEauthorrefmark{6},
Cheng-Yang~Cheng\IEEEauthorrefmark{6},
Ming-Huang~Chien\IEEEauthorrefmark{6},
Jian-Jhih~Kuo,
and
Chih-Yu~Wang
\IEEEcompsocitemizethanks{
    \IEEEcompsocthanksitem S.-M. Huang is with the Research Center for Information Technology Innovation, Academia Sinica, Taiwan, and also with the Department of Computer Science and Information Engineering, National Chung Cheng University, Taiwan (e-mail: shaominhuang2019@alum.ccu.edu.tw).
    \IEEEcompsocthanksitem C.-Y.~Cheng is with the Institute of Data Science and Engineering, National Yang Ming Chiao Tung University, Taiwan, and also with the Department of Computer Science and Information Engineering, National Chung Cheng University, Taiwan (e-mail: yang890912.cs12@nycu.edu.tw).
    \IEEEcompsocthanksitem M.-H. Chien is with the Institute of Computer Science and Engineering, National Yang Ming Chiao Tung University, Taiwan, and also with the Department of Computer Science and Information Engineering, National Chung Cheng University, Taiwan (e-mail: qq11123334.cs12@nycu.edu.tw).
    \IEEEcompsocthanksitem J.-J. Kuo is with the Department of Computer Science and Information Engineering, National Chung Cheng University, Taiwan, and also with the Advanced Institute of Manufacturing with High-tech Innovations, National Chung Cheng University, Taiwan (e-mail: lajacky@cs.ccu.edu.tw).
    \IEEEcompsocthanksitem C.-Y. Wang is with the Research Center for Information Technology Innovation, Academia Sinica, Taiwan (e-mail: cywang@citi.sinica.edu.tw).
    \IEEEcompsocthanksitem \IEEEauthorrefmark{6} denotes the equal contributions. Corresponding author: Jian-Jhih Kuo}%
}%


\maketitle
\begin{abstract} 
Quantum teleportation enables high-security communications through end-to-end quantum entangled pairs. 
End-to-end entangled pairs are created by using swapping processes to consume short entangled pairs and generate long pairs. 
However, due to environmental interference, entangled pairs decohere over time, resulting in low fidelity. 
Thus, generating entangled pairs at the right time is crucial. Moreover, the swapping process also causes additional fidelity loss.
To this end, this paper presents a short time slot protocol, where a time slot can only accommodate a process.
It has a more flexible arrangement of entangling and swapping processes than the traditional long time slot protocol. 
It raises a new optimization problem {\ProblemNameAbbr} for finding strategies of entangling and swapping for each request to maximize the fidelity sum of all accepted requests.
To solve the {\ProblemNameAbbr}, we design two novel algorithms with different optimization techniques.
Finally, the simulation results manifest that our algorithms can outperform the existing methods by 
up to $60\sim 78\%$ in general, and by $20\sim 75\%$ even under low entangling probabilities.
\end{abstract}

\begin{IEEEkeywords}
    Quantum networks, fidelity, decoherence, entanglement routing, scheduling, resource management, optimization problem, NP-hardness, inapproximability, bi-criteria approximation algorithm
\end{IEEEkeywords}%

\IEEEpeerreviewmaketitle


\section{Introduction}
\label{sec: introduction}

\acresetall

\IEEEPARstart{A}{s}
the frontiers of modern technology, \acp{QN} have been developed and implemented to evaluate their practicability for information transmission \cite{elliott2002building,chen2021integrated,daiss2021quantum,komar2014quantum,nielsen2002quantum}.
\acp{QN} connect quantum nodes to transmit quantum bits (qubits) using end-to-end entangled pairs \cite{nielsen2002quantum} (i.e., quantum teleportation). These networks serve as the foundation of quantum services such as \ac{QKD} \cite{chen2021integrated}, distributed quantum computing \cite{daiss2021quantum}, and clock synchronization \cite{komar2014quantum}.
As illustrated in Fig. \ref{fig: 6Nswap}, each quantum node in the \ac{QN} has a specific amount of quantum memory (blue squares) to store entangled qubits, mitigating rapid decoherence \cite{schlosshauer2005decoherence,sangouard2011quantum,van2014quantum,nielsen2002quantum}. 
Adjacent nodes are interconnected by optical fibers (black lines) that facilitate the entanglement process.
However, an end-to-end entangled pair is not directly available between two non-adjacent nodes.
In such a case, their end-to-end entangled pair can be achieved via one or more entanglement swapping processes, each of which consumes two short entangled pairs to create a long one. 
For example, at time slot $t_2$ in Fig. \ref{fig: 6Nswap}, there are two entangled pairs $(v_1,v_2)$ and $(v_2,v_3)$. After performing a swapping at the end of $t_2$, we obtain a long entangled pair $(v_1, v_3)$ at time slot $t_3$ and free two quantum memory units at $v_2$ \cite{van2014quantum}.

\begin{figure}[t]
\centering
    \subfigure[Entangling and swapping from time slot $t_1$ to $t_5$ in a network]{\label{fig: 6Nswap}\includegraphics[width= .48\textwidth]{graph/5Nswap.pdf}}
    \subfigure[Skewed strategy tree]{\label{fig: treeskew}\includegraphics[width= 0.23\textwidth]{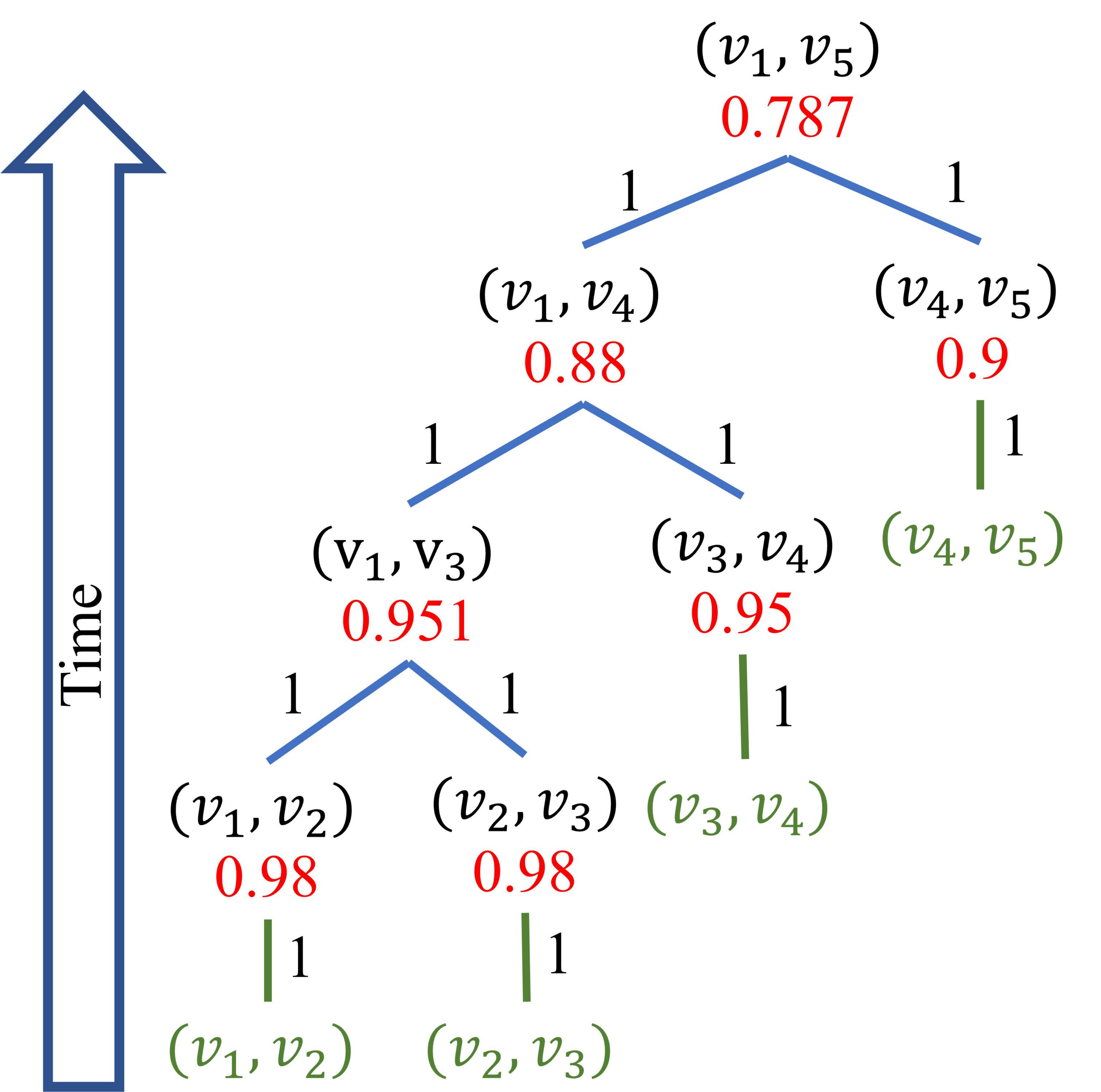}}
    \subfigure[Complete strategy tree]{\label{fig: treefull}\includegraphics[width= .202\textwidth]{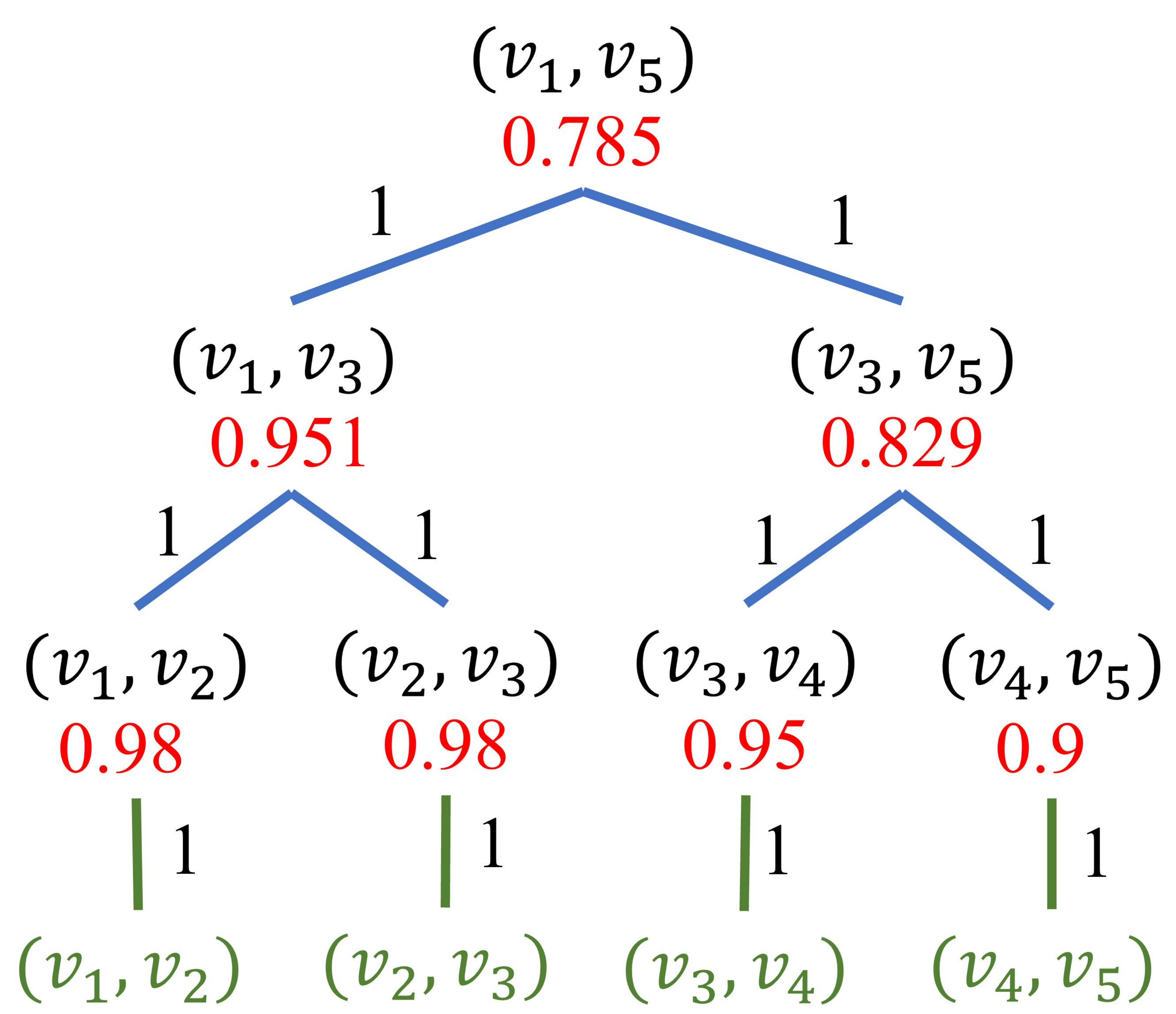}}
    \caption{Scheduling of entangling and swapping in \acp{QN}.}
    \label{fig: qexample}
\end{figure}

However, entangled pairs stored in the quantum memory decohere over time, resulting in a loss of quality, i.e., fidelity loss.
Thus, finding the perfect timing for entangling is crucial; an improper strategy may cause entangled pairs to idle and unnecessary fidelity loss. 
Moreover, swapping also causes fidelity loss as the entangled link generated by swapping exhibits lower fidelity than the two consumed entangled links, and the loss is exacerbated when the fidelity difference between the two pairs is significant.
Thus, the entangling and swapping strategy should be properly designed to improve fidelity.

Existing \ac{QN} protocols, studied in the literature, often employ a standard long time slot protocol setting \cite{pant2019routing, shi2020concurrent, zhao2021redundant, zhao2022e2e, chen2022heuristic, farahbakhsh2022opportunistic, zeng2023entanglement, zeng2023entanglement2, li2022fidelity, chakraborty2020entanglement, pouryousef2022quantum, zhao2023segmented, cicconetti2021request, ghaderibaneh2022efficient},
which may result in unnecessary fidelity loss when handling multiple requests. 
Specifically, in traditional long time slot protocols, the entangling processes of each request (i.e., \ac{SD} pair) take place during the entangling phase, while swapping processes occur during the swapping phase. As a result, shorter requests might have to wait for longer requests to complete because the latter involve more processes to perform.
Further, traditional long time slot protocols bind resources for the entire time slot, even though a swapping process frees two quantum memory units, which originally could be used for other requests but remain blocked by the protocol.

In light of the shortcomings, we propose a short time slot protocol where either entangling or swapping can occur in any given time slot.
Specifically, in our short time slot protocol, a node is free to perform one process within a single time slot, and the process could be entangling, swapping, teleporting, or idling. 
Thus, short requests no longer need to wait for other requests to establish their entangled pairs, as they can perform processes separately, while longer requests can utilize the resource released by the nodes performing swapping processes.

The adoption of a short time slot protocol enables more efficient entangling and swapping strategies, minimizing unnecessary fidelity loss by precisely timing entangled pair generation for subsequent swapping.
Take Figs. \ref{fig: 6Nswap} and \ref{fig: treeskew} for example, we have a request from $v_1$ to $v_5$.%
    \footnote{In this example, all the parameters follow the default settings in Section \ref{sec: evaluation}. In addition, the fidelity after decoherence over time and swapping is calculated using Eqs. (\ref{deco formula}) and (\ref{swapping formula}), respectively, in Section \ref{sec: background}.}
At time slot $t_1$, pairs $(v_1,v_2)$ and $(v_2,v_3)$ start entangling, taking one time slot.
At time slot $t_2$, both the generated pairs $(v_1, v_2)$ and $(v_2, v_3)$ have the initial fidelity $0.98$ and start decoherence while doing the swapping process. 
One may observe that we can choose to start the entangling of pair $(v_3,v_4)$ at either $t_1$ or $t_2$. 
Clearly, entangling at $t_2$ is better than $t_1$ since they will have less wait time for subsequent swapping processes, leading to less decoherence and better fidelity.
On the other hand, swapping processes cause extra fidelity loss, and the fidelity of $(v_1,v_3)$ is $0.951$, which is lower than the fidelity before swapping (i.e., the two links with fidelity $0.98$ after decoherence for $1$ time slot will have fidelity $0.975$). 
At time slot $t_3$, the generated pair $(v_1, v_3)$ with fidelity $0.951$ and $(v_3, v_4)$ with fidelity $0.95$ start decoherence while swapping, leading to pair $(v_1, v_4)$ with fidelity $0.88$ after swapping. 
Meanwhile, pair $(v_4,v_5)$ start entangling.
Finally, $(v_1, v_5)$ are generated at $t_5$ and has fidelity $0.787$.
Unlike the skewed strategy tree in Fig. \ref{fig: treeskew}, the strategy in Fig. \ref{fig: treefull} requires only four slots. Intuitively, it should have better fidelity because it suffers less decoherence time. 
However, the result is counter-intuitive; Fig. \ref{fig: treeskew} has better fidelity than Fig. \ref{fig: treefull} because most of the swapping processes in Fig. \ref{fig: treeskew} consume two entangled pairs with similar fidelity and thus suffer less fidelity loss due to swapping processes.%
    \footnote{More simulations and clarifications of the reasons behind this counter-intuitive example are provided in Appendix \ref{app: discussion1} of the supplementary material.}
From the perspective above, we can guess that if all the links on the path have similar initial fidelity, the complete strategy tree will perform better than others.
For example, if all the generated pairs $(v_1,v_2)$, $(v_2,v_3)$, $(v_3,v_4)$, and $(v_4,v_5)$ have initial fidelity $0.98$, then the final fidelity of the strategy trees in Figs. \ref{fig: treeskew} and \ref{fig: treefull} will be $0.889$ and $0.891$, respectively.

We have two observations from the above examples:
1) swapping two entangled pairs with similar fidelity performs better, and
2) the more skewed the strategy tree, the less memory is required within a time slot.
Moreover, a different strategy leads to varying fidelity and distribution of resource consumption.
In this paper, such a distribution is referred to as \emph{numerology}, representing how the resources (i.e., quantum memory in this paper) are consumed for the request.
The numerology provides a direct representation of the corresponding strategy to examine its feasibility. The quality of the strategy, which is the resulting fidelity of the request, can also be derived directly through the corresponding numerology. Thus, the entangling and swapping strategy formation can be transformed into the numerology selection problem.

In this paper, we aim to choose a numerology for each request to maximize the fidelity sum for all accepted requests while meeting the fidelity threshold within a batch of time slots, i.e., the {\ProblemName} (\ProblemNameAbbr). 
The {\ProblemNameAbbr} introduces four novel challenges:
1) Entangled pairs will decohere over time. How can we choose the right time to generate entangled pairs to reduce their wait time?
2) The swapping process may cause additional fidelity loss. How can we achieve an intelligent swapping strategy to minimize this loss?
3) Numerologies determine the resulting fidelity. How can we identify the feasible numerologies for requests to meet the fidelity threshold, guaranteeing the quality of teleportation?
4) Quantum memory is limited, and its availability will be affected by simultaneous requests. How can we find efficient numerologies to mitigate resource contention and satisfy more requests while ensuring a better fidelity sum?

It can be shown that the {\ProblemNameAbbr} with no fidelity threshold constraint (i.e., {\ProblemNameAbbrN}) is already NP-hard and cannot be approximated within any factor of $|I|^{1-\epsilon}$, where $|I|$ denotes the number of \ac{SD} pairs.
To conquer the above challenges, we propose two novel algorithms.
1) The first one is {\AlgoNametwo} (\AlgoNameAbbrtwo).
{\AlgoNameAbbrtwo} aims to maximize the number of accepted requests.
With \ac{LP} rounding, {\AlgoNameAbbrtwo} can achieve a bi-criteria approximation ratio for the {\ProblemNameAbbrN} as the actual duration of a time slot and the number of time slots are sufficiently small. Then, we extend {\AlgoNameAbbrtwo} to solve the {\ProblemNameAbbr}.
2) The second one is {\AlgoNameone} (\AlgoNameAbbrone).
{\AlgoNameAbbrone} utilizes two efficient \ac{DP} algorithms to find two candidate numerologies for each request with a given path. One maximizes fidelity, and the other considers resource utilization.
Then, {\AlgoNameAbbrone} iteratively invokes them to allocate the resource for each request based on a subtly-designed index called \ac{REI}, inspired by \cite{greedyIndicator}.

In sum, the contributions of this paper are as follows:
1) To the best of our knowledge, this is the first attempt to consider the decoherence of entangled pairs over time for entangling and swapping scheduling while applying a short time slot protocol that offers better resource allocation.
2) We prove that the {\ProblemNameAbbr} is an NP-hard problem, and even its special case, the {\ProblemNameAbbrN}, cannot be approximated within any factor of $|I|^{1-\epsilon}$.
3) We propose a combinatorial algorithm with a clever separation oracle to achieve a bi-criteria approximation. Meanwhile, we develop two \ac{DP} algorithms to find candidate numerologies with different goals and design an index to choose numerologies adaptively.

\section{Related Work}
\label{sec: related work}

Before discussing the related works on optimization problems for entangling and swapping, we review previous works that examine feasibility and realization of \acp{QN} \cite{elliott2002building, design_qrn, muralidharan2016optimal, Pirandola_2017, optimal_routing, zhao2023distributed}. 
Early foundational studies have laid the groundwork for secure communication protocols and robust architectures in quantum networks.
Elliott \emph{et al.} introduced \ac{QN} to realize secure communications \cite{elliott2002building}. Meter \emph{et al.} proposed a large \ac{QN} architecture with layered recursive repeaters, where repeaters may not trust each other, and then designed new protocol layers to support quantum sessions to ensure robustness and interoperable communication \cite{design_qrn}. 
Muralidharan \emph{et al.} presented new quantum nodes that can execute the \ac{QEC} processes and classified the theoretically feasible technologies of quantum nodes into three generations \cite{muralidharan2016optimal}.   
Pirandola \emph{et al.} discussed the limits of repeater-less quantum communications and provided general benchmarks for repeaters \cite{Pirandola_2017}.
Caleffi \emph{et al.} designed a routing protocol to ensure a high end-to-end entanglement rate between any two nodes \cite{optimal_routing}.
Zhao \emph{et al.} proposed two transport layer protocols for quantum data networks that achieve high throughput and fairness \cite{zhao2023distributed}.

Building on these foundational works, subsequent research has focused on path selection and entangling and swapping process optimization for long time slot system-based \acp{QN} \cite{pant2019routing, shi2020concurrent, zhao2021redundant, zhao2022e2e, chen2022heuristic, farahbakhsh2022opportunistic, zeng2023entanglement, zeng2023entanglement2, li2022fidelity, chakraborty2020entanglement, pouryousef2022quantum, zhao2023segmented, cicconetti2021request, ghaderibaneh2022efficient}.
Pant \emph{et al.} presented a greedy algorithm to determine the paths for each request \cite{pant2019routing}.
Shi \emph{et al.} designed a routing method Q-CAST based on the Dijkstra algorithm to find primary paths and recovery paths to mitigate the effect of entanglement failures \cite{shi2020concurrent}.
Zhao \emph{et al.} presented an LP-based algorithm REPS to maximize throughput in SDN-based QNs \cite{zhao2021redundant}.
Zhao \emph{et al.} considered entangled links' fidelity and exploited quantum purification to enhance link fidelity \cite{zhao2022e2e}.
Chen \emph{et al.} proposed two heuristic algorithms for entangling and swapping phases separately \cite{chen2022heuristic}.
Farahbakhsh \emph{et al.} developed an add-up scheme to store and forward data qubits as much as possible \cite{farahbakhsh2022opportunistic}.
Zeng \emph{et al.} studied how to simultaneously maximize the number of quantum-user pairs and their expected throughput \cite{zeng2023entanglement}.
Zeng \emph{et al.} utilized the properties of $n$-fusion to help increase the success probability of constructing a long entangled pair for quantum networks \cite{zeng2023entanglement2}.
Li \emph{et al.} exploited purification to meet the fidelity requirements for multiple SD pairs as many as possible \cite{li2022fidelity}.
Chakraborty \emph{et al.} gave an LP formulation to compute the maximum total entanglement distribution rate \cite{chakraborty2020entanglement}.
Pouryousef \emph{et al.} attempted to stockpile entangled pairs in advance when the traffic demand is low while using them otherwise \cite{pouryousef2022quantum}.
Zhao \emph{et al.} considered a room-size network scenario, where a longer entangled pair can be established by sending one of its photons via all-optical switching without swapping at intermediate nodes \cite{zhao2023segmented}.
However, none of the above works considers the effect of decoherence on fidelity.
To this end, Cicconetti \emph{et al.} studied different policies of path selection and request scheduling and then measured the resulting link fidelity under the influence of dephasing and depolarizing noises (i.e., decoherence) \cite{cicconetti2021request}.
Ghaderibaneh \emph{et al.} further considered time decoherence when determining the swapping order for each SD pair to minimize entangled pair generation latency \cite{ghaderibaneh2022efficient}.
However, all of the above works use long time slot systems and conduct entangling and swapping sequentially, which may block short requests to wait for long requests, resulting in more memory idle time.

To address the limitations of long time slot systems, some studies \cite{huang2023socially, yang2023asynchronous} have explored short time slot systems or even asynchronous protocols to improve efficiency.
Huang \emph{et al.} employed a short time slot protocol while opportunistically forwarding data qubits via social nodes to ensure security. However, they neglected decoherence over time, fidelity loss due to swapping processes, and freed qubits after swapping processes \cite{huang2023socially}. 
Yang \emph{et al.} proposed an asynchronous model to freely generate entangled pairs or conduct swapping processes without time slot limitations.
It helps establish end-to-end connections more quickly while utilizing more resources, as resources can be freed immediately \cite{yang2023asynchronous}. However, it did not consider decoherence or fidelity loss.  
In contrast, our short time slot system considers more practical fidelity losses due to time decoherence and swapping processes. It further leverages diverse numerologies for requests to maximize the fidelity sum, achieving efficient entangling and swapping scheduling.

Although some short time slot approaches have been proposed to enhance efficiency, they still overlook fidelity degradation over time due to decoherence and swapping processes. To this end, the following works \cite{Haldar2024Fast, Inesta2023Optimal, Haldar2024Reducing, Kenneth2024On} consider decoherence when constructing remote entangled pairs along a specific path.
Haldar \emph{et al.} employed the $Q$-learning reinforcement-learning (RL) algorithm to discover policies that optimize both average waiting time and fidelity for a single request \cite{Haldar2024Fast}.
I\~{n}esta \emph{et al.} proposed a method based on Markov decision processes-based method with value and policy iteration to minimize the expected needed time to achieve end-to-end entanglement for a single request \cite{Inesta2023Optimal}.
Haldar \emph{et al.} proposed the quasi-local multiplexing policies, following the SWAP-ASAP approach and incorporating entanglement purification, to optimize both average waiting time and fidelity for a single request in linear chain quantum networks \cite{Haldar2024Reducing}.
Goodenough \emph{et al.} analyzed the value of fidelity up to $25$ segments over time using a generating function. The approach can be applied to find cut-off policies in \ac{QKD} \cite{Kenneth2024On}.
Since the above studies focused on handling single requests, their policies will bind sufficient resources for each accepted request, enabling re-entangling and swapping until a successful end-to-end entangled link is established. 
Thus, they may perform well in single-request scenarios but are less effective when dealing with multiple requests. 

Table \ref{tab: related work} summarizes the related works based on their setups (e.g., time slot length, fidelity decoherence, scheduling method, path selection) and optimization objectives (e.g., throughput, fidelity, latency, waiting time).

\begin{table*}[ht]
    \small
    \setcounter{table}{0}
    \centering
    \caption{Comparison of related works}
    \label{tab: related work}
    \begin{tabular}{|>{\centering\arraybackslash}m{5em}|>{\centering\arraybackslash}m{5em}|>{\centering\arraybackslash}m{5em}|>{\centering\arraybackslash}m{5.5em}|>{\centering\arraybackslash}m{5em}|>{\centering\arraybackslash}m{8.5em}|>{\centering\arraybackslash}m{13.5em}|}
    \hline
    \textbf{Literature} & \textbf{Time slot}  & \textbf{Fidelity decohere over time} & \textbf{Multi-request scheduling} & \textbf{Path selection} & \textbf{Objective} &\textbf{Solution strategy}\\
    \hline 
    \cite{pant2019routing} & Long & No & No & Yes & Max throughput & Greedy \\ 
    \hline 
    \cite{shi2020concurrent} & Long & No & No & Yes & Max throughput & Dijkstra-based \\
    \hline
    \cite{zhao2021redundant} & Long & No & No & Yes & Max throughput & LP rounding + Heuristic \\
    \hline
    \cite{zhao2022e2e} & Long & No & No & Yes & Max throughput & Dijkstra-based + LP rounding \\
    \hline
    \cite{chen2022heuristic} & Long & No & No & Yes & Max throughput & Heuristic + DP \\
    \hline
    \cite{farahbakhsh2022opportunistic} & Long & No & No & No & Min waiting time & Greedy \\
    \hline
    \cite{zeng2023entanglement} & Long & No & No & Yes & Max user pairs \& Max throughput & Dijkstra-based + LP rounding + Branch and bound algorithm \\
    \hline
    \cite{zeng2023entanglement2} & Long & No & No & Yes & Max throughput & Dijkstra-based \\
    \hline
    \cite{li2022fidelity} & Long & No & No & Yes & Max throughput & Dijkstra-based + Greedy \\
    \hline
    \cite{chakraborty2020entanglement} & Long & No & No & Yes & Max throughput & Multicommodity flow-based algorithm \\
    \hline
    \cite{pouryousef2022quantum} & Long & No & No & Yes & Max throughput \& Min delay & LP \\
    \hline
    \cite{zhao2023segmented} & Long & No & No & Yes & Max throughput & LP rounding + Heuristic \\
    \hline
    \cite{cicconetti2021request} & Long & No & Yes & Yes & Max throughput & Heuristic + Dijkstra-based \\
    \hline
    \cite{ghaderibaneh2022efficient} & Long & Yes & No & No & Min latency & DP \\ 
    \hline
    \cite{huang2023socially} & Short & No & No & Yes & Min waiting time & Greedy \\
    \hline
    \cite{yang2023asynchronous} & Async & No & No & Yes & Max network efficiency & Dijkstra-based + Primal-Dual-based algorithm \\
    \hline
    \cite{Haldar2024Fast} & Short & Yes & No & No & Min waiting time \& Max fidelity & $Q$-learning reinforcement-learning\\ 
    \hline
    \cite{Inesta2023Optimal} & Short & Yes & No & No & Min delivery time & Markov decision process-based algorithm \\
    \hline
    \cite{Haldar2024Reducing} & Short & Yes & No & No & Min waiting time \& Max fidelity & SWAP-ASAP-based algorithm \\
    \hline
    Ours & Short & Yes & Yes & Yes & Max fidelity \& Max throughput & Combinatorial algorithm with DP-based separation oracle + LP rounding \& Greedy + DP \\
    \hline 
    \end{tabular}
\end{table*}

\section{Background and Assumptions}
\label{sec: background}

\subsection{Fidelity and Decoherence of Entangled Pairs}
\label{subsec: Fidelity and Decoherence curve}
Fidelity is a classic index to qualify whether an entangled pair is good. An entangled pair with fidelity $F$ can be written in a density matrix as $\rho= F\ket{\Phi^+}\bra{\Phi^+}+b\ket{\Phi^-}\bra{\Phi^-}+c\ket{\Psi^+}\bra{\Psi^+}+d\ket{\Psi^-}\bra{\Psi^-}$, where $F+b+c+d=1$ and $\ket{\Phi^+}$ is our wanted state.
In this paper, we consider our entangled pairs are Werner state \cite{van2014quantum,werner1989quantum,panigrahy2022capacity}, which has the relation $b=c=d$. 
A quantum state will interact with the environment and gradually lose its quantum properties (i.e., the fidelity decreases), called decoherence.
As $F=b=c=d=0.25$, it loses all its quantum properties and behaves like a classic random behavior \cite{van2014quantum}.
The decoherence speed depends on the type of quantum memory. The general decoherence can be described as follows:
\begin{align}
\label{deco formula}
F_d(t)= A+B{\times}e^{-(t/\mathcal{T})^{\kappa}},
\end{align}
which is an empirical formula fitting to the experimental data \cite{abobeih2018one,bradley2019ten}, as plotted in Fig. \ref{fig: decocuv}, while the unit of $t$ is second.
Note that $A$, $B$, $\mathcal{T}$, and $\kappa$ are the constants according to the adopted technique.
In addition, there is a one-to-one mapping between wait time and fidelity because $F_d(t)$ is invertible.

\begin{figure}[t]
\centering
    \subfigure[Decoherence curve]{\label{fig: decocuv}\includegraphics[width= .276\textwidth]{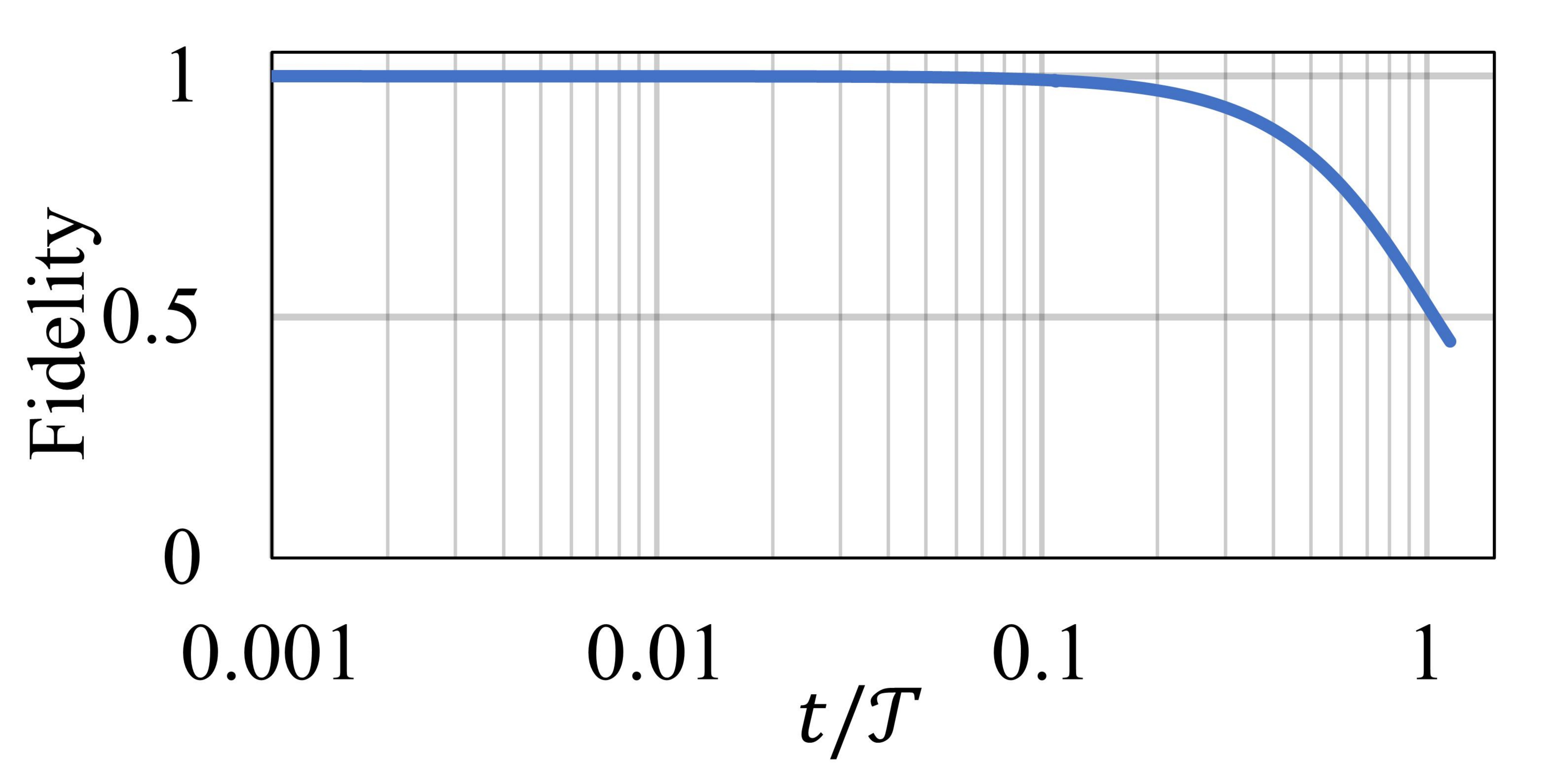}}
    \subfigure[Fidelity of swapping]{\label{fig: Fswap}\includegraphics[width= .204\textwidth]{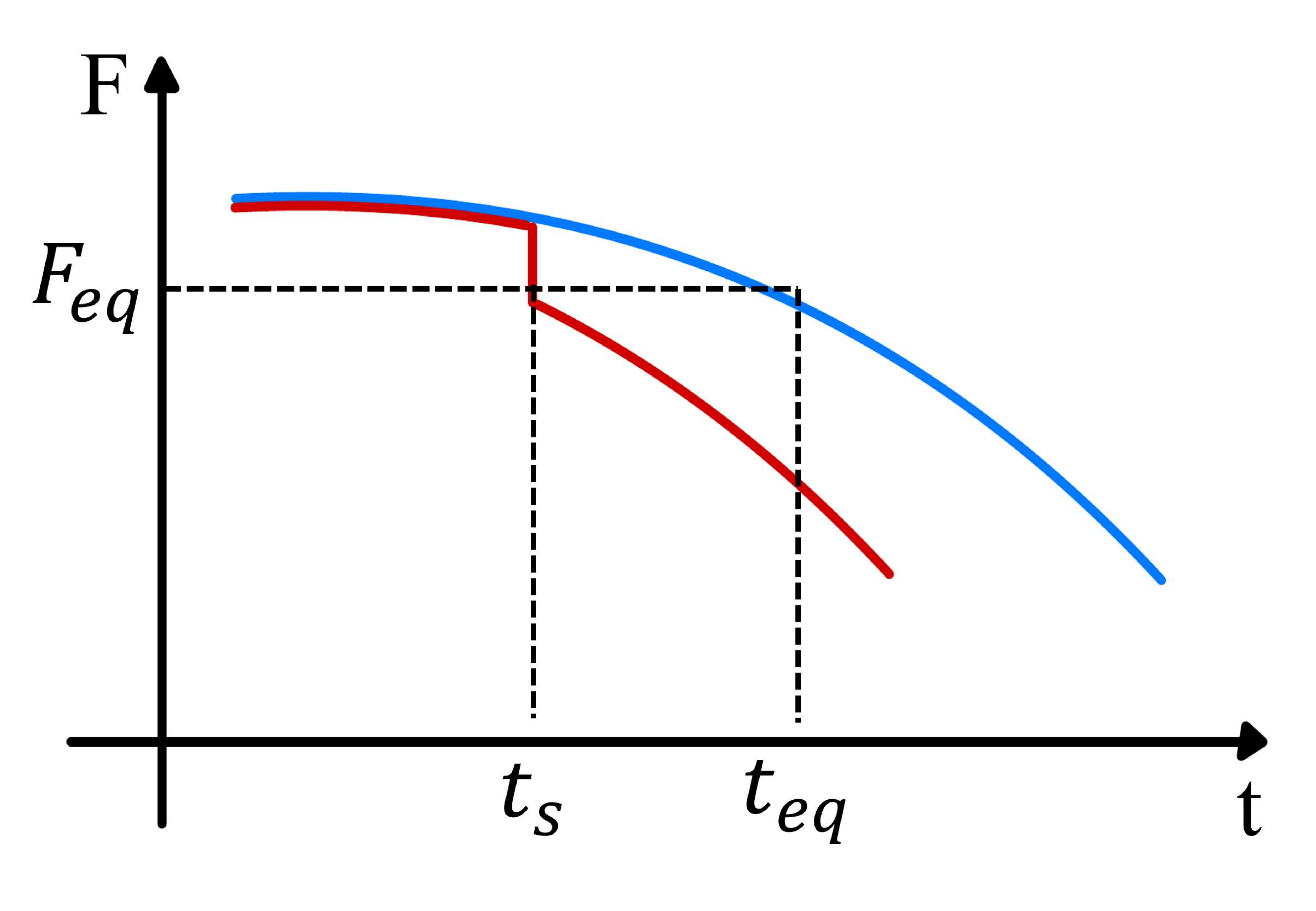}}
    \caption{Fidelity loss due to elapsed time and swapping.}
    \label{fig: Fdeco}
\end{figure}

\subsection{Entanglement Swapping}
\label{subsec: Swapping process}

Entanglement swapping causes extra fidelity loss \cite{van2014quantum}.
Swapping two Werner states with the fidelity of $F_1$ and $F_2$ will lead to the fidelity as follows:
\begin{align}
\label{swapping formula}
F_{s}(F_1,F_2)= F_1{\times}F_2+\frac{1}{3}(1-F_1)\times(1-F_2).
\end{align}
Fig. \ref{fig: Fswap} shows how the fidelity changes after swapping.
The blue curve represents the fidelity change of an entangled pair due to decoherence; on the other hand, the red curve depicts the fidelity of the entangled pair that conducts swapping at $t_s$. 
After swapping, there is a sudden drop of fidelity to $F_{eq}$, and it keeps decohering after the drop.
It can be viewed as a shift from the blue curve, i.e., the red curve after $t_s$ behaves the same as the blue curve after $t_{eq}$. 
We call $t_{eq}$ the equivalent wait time after swapping, which can be calculated by $t_{eq}=F_{d}^{-1}(F_{eq})$.

The success probability of entangling decreases exponentially with channel distance \cite{shi2020concurrent}.
Specifically, the entangling
probability between any \emph{adjacent} nodes $u$ and $v$ is defined as:
\begin{align} \label{eq: link prob}
    \Pr(u,v) = 1 - (1 - e^{-\lambda \cdot l(u,v)})^\xi,
\end{align}
where $\xi = \lfloor \frac{\tau}{\mathfrak{T}} \rfloor$ represents the number of entangling attempts, $\tau$ is the length of a short time slot, $\mathfrak{T}$ is the entangling time, $l(u,v)$ is the length of the quantum channel between $u$ and $v$, and $\lambda$ is a constant determined by the optical fiber material \cite{sangouard2011quantum}.
Then, swapping also has a different success probability at each node $v$, denoted as $\Pr(v)$ \cite{shi2020concurrent}.
Thus, the success probability of a path $p$ between the source $s$ and the destination $d$ can be expressed as follows:
\begin{align} \label{eq: path prob}
    \Pr(p) = \prod_{(u,v) \in p}\Pr(u,v) \prod_{v\in p\setminus\{s, d\}} \Pr(v).
\end{align}%
\vspace{-5mm}

\subsection{Assumptions}
\label{subsec: Assumptions}
For ease of presentation, this paper has the following assumptions:
1) Each entangled pair is described by a Werner state \cite{van2014quantum,werner1989quantum,panigrahy2022capacity}, a mixture of a specific pure state and white noise. The white noise is represented by the normalized identity operator in the Hilbert space corresponding to the state \cite{sen2005entanglement}.
2) This paper assumes that the fiber resource is always sufficient because it is relatively cheaper than quantum memory.
In other words, we focus on quantum memory consumed by the entangling process while assuming all entangling processes will not be blocked by fiber availability.
3) Following \cite{zhao2021redundant}, all nodes communicate with a central controller and share global knowledge.
4) The entangled pairs between the same pair of two nodes have identical initial fidelity. Besides, all the entangled pairs decohere as the same relation, as shown in Fig. \ref{fig: decocuv}, because of using the same quantum memory technique (i.e., constants $A$, $B$, $\mathcal{T}$, and $\kappa$ are the same for all nodes).
5) For synchronization, the time slot length should be at least the duration required for a single swapping process (e.g., around $1.1\ \textrm{ms}$ \cite{pompili2021realization}).
Since an entangling process generally requires less time than swapping (e.g., $0.25\ \textrm{ms}$ \cite{pompili2021realization}), it can repeatedly attempt entangling within a time slot to improve the entangling success probability \cite{shi2020concurrent}, as described in Eq. (\ref{eq: link prob}).

\section{System Model \& Problem Formulation}
\label{sec: problem}

\subsection{System Model}
\label{subsec: the scenario}

We consider a \ac{QN} with multiple quantum nodes, each with limited quantum memory.
Nodes connected through a bunch of fibers are adjacent nodes.
Data qubits are transmitted via end-to-end entangled pairs in \ac{QN}.
A one-hop entangled pair can be directly constructed via the fiber between two adjacent nodes $u$ and $v$ with the success probability of $\Pr(u,v)$ and the initial fidelity of $F(u,v)$.
Plus, a long entangled pair $(u,w)$ can be constructed by performing the swapping process at a repeater $v$ to consume two short entangled pairs $(u,v)$ and $(v,w)$ with the success probability of $\Pr(v)$.
The quality of an entangled pair is measured by fidelity. 
A generated entangled pair will decohere over time, as defined by Eq. (\ref{deco formula}) in Section \ref{subsec: Fidelity and Decoherence curve}.

A node can perform an entanglement swapping process when it possesses two qubits, each from different entangled pairs.
Subsequently, the memory occupied by these two qubits after swapping will be freed. 
The fidelity of the resulting longer entangled pair after swapping follows Eq. (\ref{swapping formula}). 

We consider a short time slot protocol where a single time slot only accommodates a single process for each memory unit.
Each node can freely execute one process within a time slot for each memory unit, such as entangling, swapping, or even idling.
The protocol periodically performs all necessary computations in advance, producing a complete operation plan before any entangling or swapping begins. The controller then instructs all quantum nodes to synchronously execute these operations within a fixed batch of time slots.
Any request that fails within a batch is deferred to the next batch for reattempt.
For clarity, let $T$ be the set of time slots in the current batch.

To consider the resource distribution, we define the strategy tree, numerology, and fidelity formulas in our system.
\begin{definition}
\label{defi: Strategy Tree}
A strategy tree $\gamma$ is an activity-on-vertex tree structure, describing an entangling and swapping strategy between node $v_1$ and node $v_n$ on path 
$p=\{v_1,v_2,...,v_n\}$.
It consists of two parts; one is the entangling part ($\gamma_e$), and the other is the swapping part ($\gamma_s$), as shown by the green and blue lines in Fig. \ref{fig: treeskew}, respectively. 
The entangling part $\gamma_e$ includes all the external pairs. 
Each external pair denotes a quantum pair $(v_{i}, v_{i+1})$ being entangled, where $i\in \{1, 2, ..., n-1\}$, as shown by the green pair in Fig. \ref{fig: treeskew}.
On the other hand, the swapping part $\gamma_s$ is an edge-weighted binary tree with a root pair $(v_1, v_n)$, where
each leaf pair represents an entangled pair $(v_i,v_{i+1})$ for each $i\in \{1,2,...,n-1\}$ and each edge's weight denotes the number of elapsed time slots for the corresponding entangling or swapping processes.
Besides, each non-leaf pair $(v_i,v_j)$ has exactly two child pairs, $(v_i, v_k)$ and $(v_k, v_j)$, where $i< k< j$, denoting a long pair created by consuming two pairs.
Last, each leaf pair $(v_i,v_{i+1})$ in $\gamma_s$ is connected to a corresponding external pair $(v_i,v_{i+1})$ in $\gamma_e$ by an edge with a cost to form $\gamma$.
\end{definition}

For a given strategy tree $\gamma$, each edge in $\gamma$ has a positive cost (i.e., the number of elapsed time slots for the corresponding entangling or swapping processes). Let $\rho_{r}$ denote the root pair of the strategy tree $\gamma$, and $\rho_{a}$ denote any pair in $\gamma$. We define the cost sum $\delta(\rho_{r}, \rho_{a})$ as the total cost along the simple path from $\rho_{a}$ to $\rho_{r}$ in $\gamma$, i.e., the total number of elapsed time slots from $\rho_{a}$ to $\rho_{r}$.
Following the definition above, for a given strategy tree $\gamma$, the root pair $\rho_{r}$ is assigned a designated time slot $t_r \in T$, where $T = \{1, 2, \dots, |T|\}$ represents the set of time slots in the current batch. Therefore, the corresponding time slot for a pair $\rho_a$ in $\gamma$ can be expressed as $t_a = t_r - \delta(\rho_{r}, \rho_{a})$. 
A strategy tree $\gamma$ is \emph{feasible} if satisfying the condition:
The time slot $t_a$ of each pair $\rho_a\in \gamma$ is within $T$ (i.e., $t_a\in T$). This condition is essential; otherwise, $t_a$ may fall outside the set $T$.

\begin{figure}[t]
\centering
    \subfigure[Three possible strategy trees for the path $v_1 \rightarrow v_2 \rightarrow v_3 \rightarrow v_4$]{\label{fig: strategy tree}\includegraphics[width= .48\textwidth]{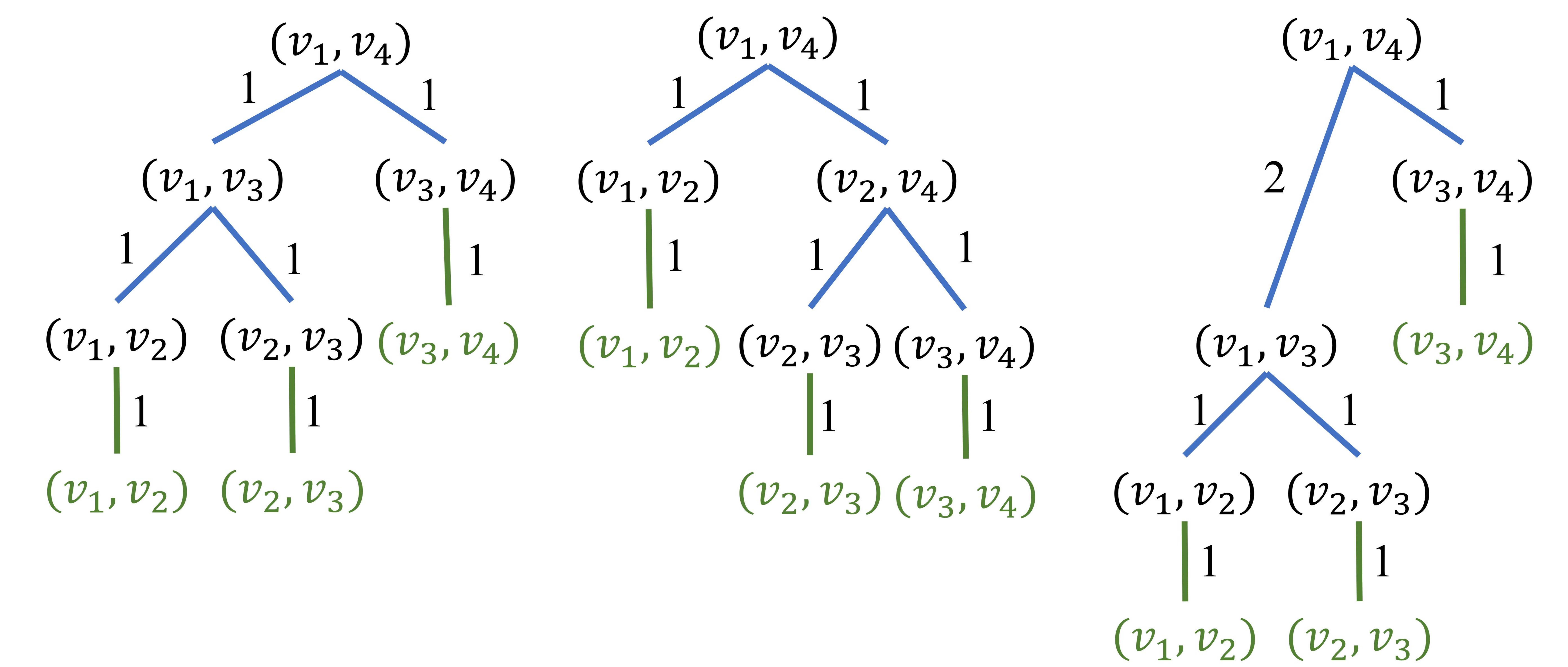}}
    \subfigure[The corresponding numerologies for the strategy trees in (a)]{\label{fig: numerology}\includegraphics[width= .48\textwidth]{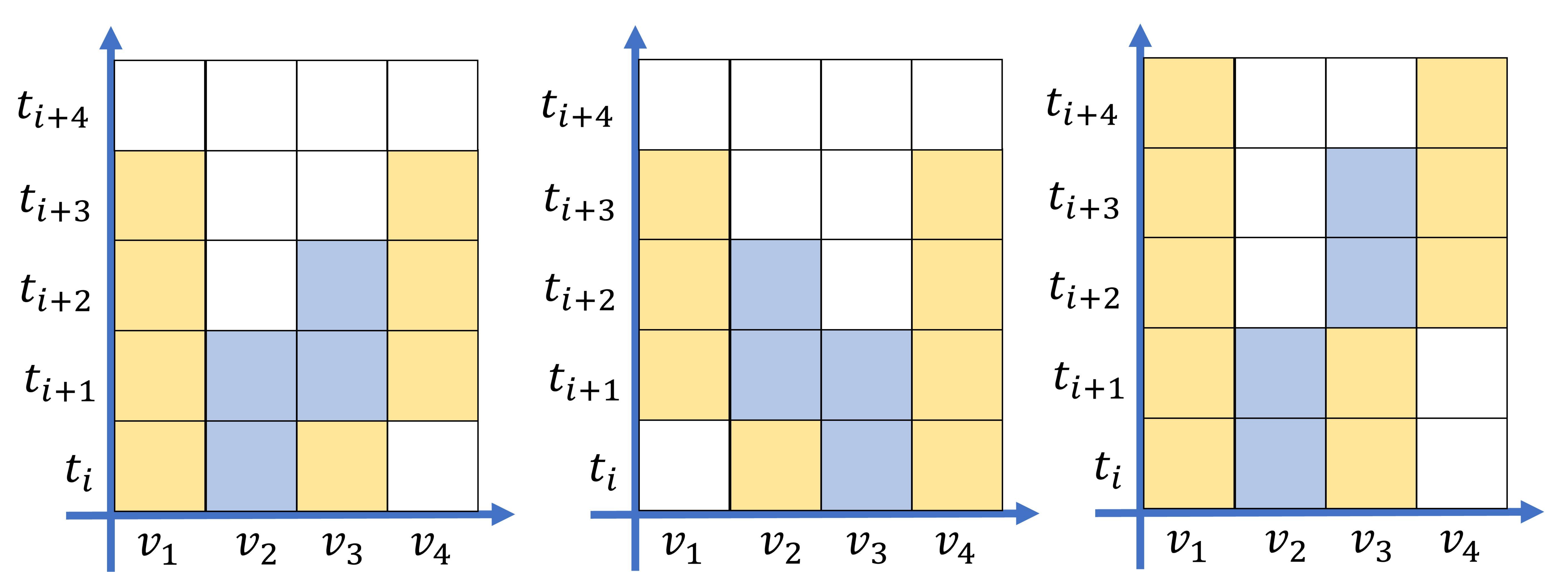}}
    \caption{Numerologies induced by different strategy trees.}
    \label{fig: nume}
\end{figure}

\begin{figure*}[t]
\centering
    \subfigure[Scanning numerology from time slot $t_5$ to $t_4$]{\label{fig: prof1}\includegraphics[width= 0.3577\textwidth]{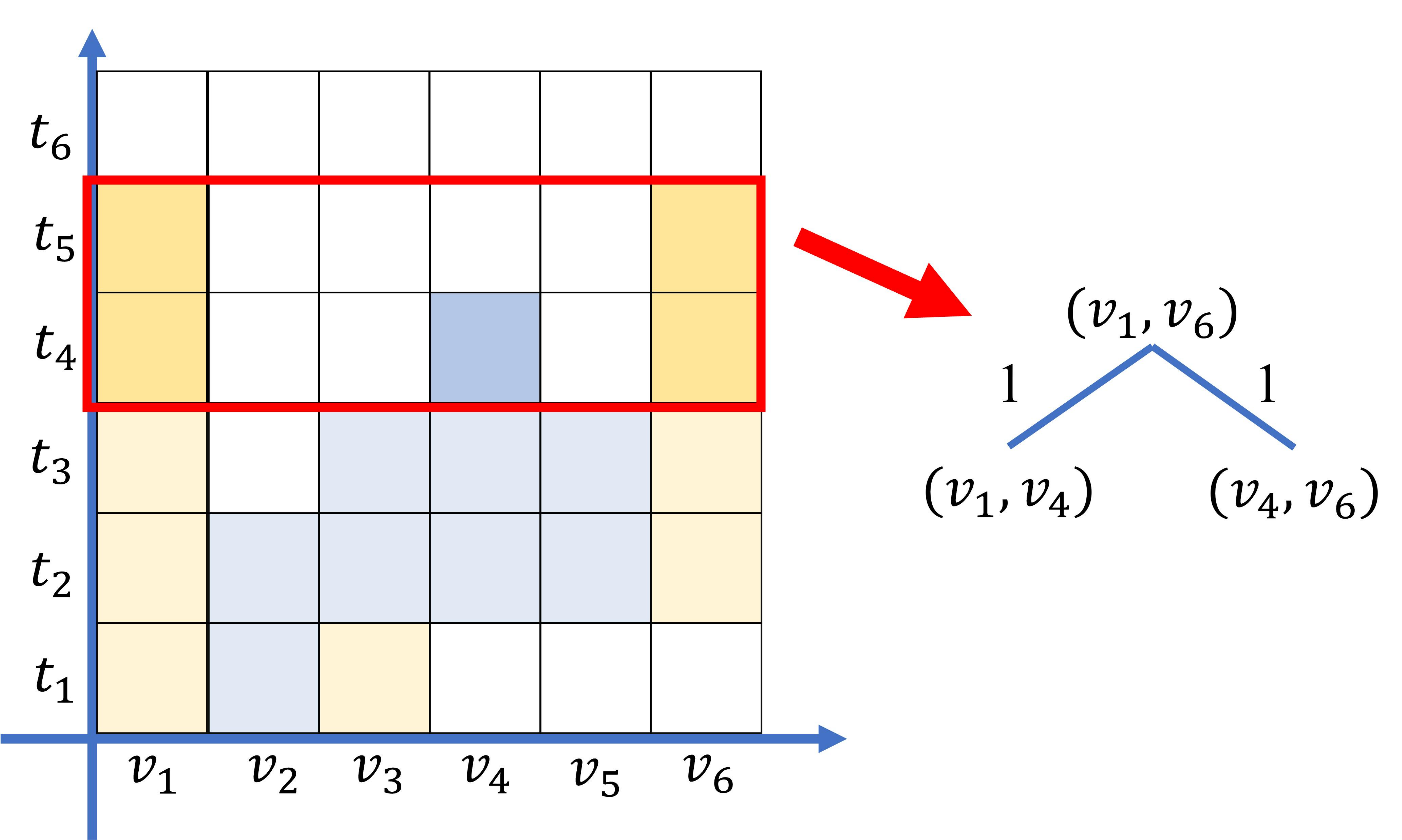}}
    \hfill
    \subfigure[Scanning numerology from time slot $t_4$ to $t_3$]{\label{fig: prof2}\includegraphics[width= 0.5918\textwidth]{graph/prof2.pdf}}
    \caption{An illustrating example of transforming a numerology into a strategy tree.}
    \label{fig: oneoneprof}
\end{figure*}

For example, consider the right strategy tree in Fig. \ref{fig: strategy tree}. Suppose the root pair $\rho_{r} = (v_1, v_4)$ is with $t_r = 3$. Now, taking the external pair $(v_1, v_2)$ as $\rho_{a}$, the corresponding time slot $t_a$ is calculated as $t_a = 3 - \delta(\rho_{r}, \rho_{a}) = 3 - 4 = -1$, which means $t_a \notin T$.
Thus, if $t_r = 3$, the right strategy tree is infeasible.

\begin{definition}
\label{defi: numerology}
A numerology $m$ represents the resource distribution of a specific strategy tree $\gamma$. Specifically, let $\rho_p = (v_i, v_j)$ denote a pair at time slot $t_p$ in $\gamma$, with a left child pair $\rho_{c_l} = (v_i, v_k)$ at time slot $t_{c_l}$ and a right child pair $\rho_{c_r} = (v_k, v_j)$ at time slot $t_{c_r}$, where $t_p > t_{c_l}$ and $t_p > t_{c_r}$. Consequently, the left child pair $\rho_{c_l}$ occupies one memory unit of $v_i$ and $v_k$ during time slots $T(\rho_{c_l}) = \{t\in \mathbb{Z} \mid t_{c_l} \le t < t_p\}$, while the right child pair $\rho_{c_r}$ occupies one memory unit of $v_k$ and $v_j$ during time slots $T(\rho_{c_r}) = \{t \in \mathbb{Z} \mid t_{c_r} \le t < t_p\}$.
Furthermore, if $\rho_p$ has only a child pair $\rho_c=(v_i, v_j)$ (i.e., the external pair in $\gamma$) at time slot $t_c$, this child pair occupies one memory unit of $v_i$ and $v_j$ during time slots $T(\rho_{c}) = \{t \in \mathbb{Z} \mid t_{c} \le t < t_p\}$.
Plus, $T(\rho_{r}) = \{ t_r\}$ for the root pair $\rho_{r}$ at time slot $t_r$ in $\gamma$. 
Therefore, numerology $m$ captures the resource distribution across the nodes in the strategy tree $\gamma$ and can be formally defined as $m=\{(\rho_{a}, T(\rho_{a})) \mid \rho_{a}\in\gamma\}$, where each pair $\rho_a\in\gamma$ and its associated time slots are grouped in a tuple $(\rho_a, T(\rho_a))$.
The amount of memory on node $v$ occupied by numerology $m$ at time slot $t$ is $\theta_{m}(t,v) = \big|\{(\rho_{a}, T(\rho_{a}))\in m \mid v\in \rho_{a}, t\in T(\rho_{a})\}\big|$. Note that $\theta_{m}(t,v) \in \{0, 1, 2\}$.
\end{definition}

We further illustrate the example in Fig. \ref{fig: nume}.
Each strategy tree in Fig. \ref{fig: strategy tree} corresponds to a distinct numerology shown in Fig. \ref{fig: numerology}.
In this figure, yellow squares indicate a node using one memory unit during that time slot, while blue squares represent a node using two memory units.
For the numerology $m$ on the right in Fig. \ref{fig: numerology}, we observe that $\theta_{m}(t_{i+1},v_{2})$ equals $2$, as node $v_2$ simultaneously uses one memory unit for each of the pairs $(v_1, v_2)$ and $(v_2, v_3)$ during time slot $t_{i+1}$.
Additionally, it is worth noting that for any given path, feasible strategy trees have a one-to-one and onto mapping to feasible numerologies.

\begin{figure}[t]
\centering
    \includegraphics[width= .48\textwidth]{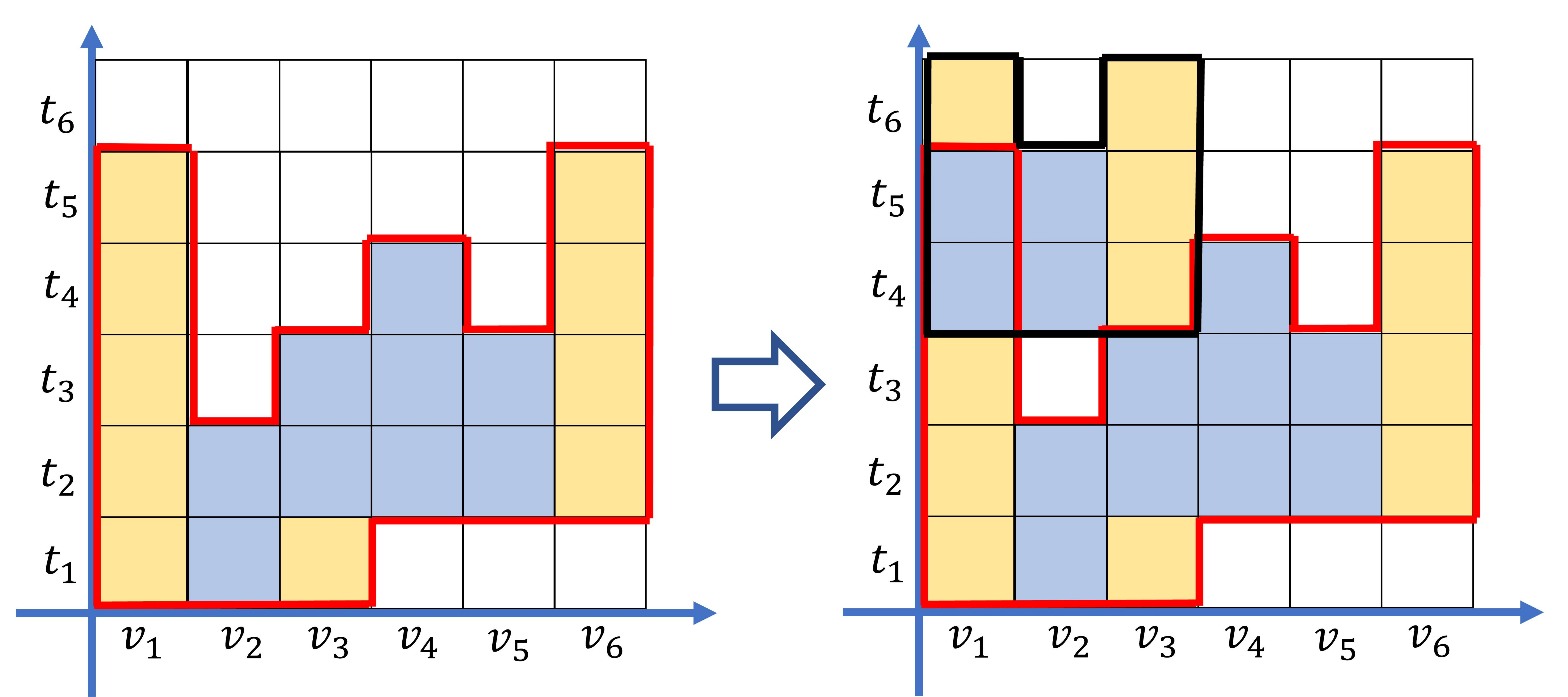}
    \caption{Overlapping numerologies for two accepted requests.}
    \label{fig: puz}
\end{figure}

\begin{lemma}
For any given path $p$, there exists a one-to-one and onto mapping $f$, which maps a feasible strategy tree $\gamma$ to a feasible numerology $m$, while $f^{-1}$ denotes the inverse mapping, i.e., $f(\gamma)=m$ and $f^{-1}(m)=\gamma$.
\end{lemma}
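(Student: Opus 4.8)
The plan is to prove the bijection by exhibiting the inverse map $f^{-1}$ explicitly and verifying that $f$ and $f^{-1}$ compose to the identity in both directions. Since Definition~\ref{defi: numerology} already specifies how a strategy tree $\gamma$ induces its numerology $m$, the forward map $f$ is given; the first task is to confirm it is well-defined, i.e., that a \emph{feasible} $\gamma$ is sent to a \emph{feasible} $m$. This is immediate from the fact that $T$ is the contiguous block $\{1,\dots,|T|\}$: for every pair $\rho_a$ the set $T(\rho_a)=\{t : t_a \le t < t_p\}$ lies between its own slot $t_a=t_r-\delta(\rho_r,\rho_a)\in T$ and its parent's slot $t_p\in T$, so $T(\rho_a)\subseteq T$, and the memory bound $\theta_m(t,v)\in\{0,1,2\}$ holds because each interior node $v_k$ is shared by at most its two incident child pairs at any slot.

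The second and central task is to reconstruct $\gamma$ from $m$. The key observation is that each tuple $(\rho_a,T(\rho_a))$ locally encodes everything needed: the pair's own time slot is $\min T(\rho_a)$, its parent's time slot is $\max T(\rho_a)+1$, and the weight of the edge joining $\rho_a$ to its parent is exactly $|T(\rho_a)|=\max T(\rho_a)+1-\min T(\rho_a)$. Thus the edge weights and all vertex time slots are recovered directly from the $T(\rho_a)$'s, with the root identified as the unique pair whose slot set is a singleton. What remains is to recover the parent--child structure of the swapping tree. Here I would use that the node-pairs $(v_i,v_j)$ appearing in $m$ correspond to sub-intervals of the path $\{v_1,\dots,v_n\}$ and form a laminar family: the parent of any non-root pair $\rho_a=(v_i,v_k)$ is the unique pair in $m$ that is its minimal strict super-interval (equivalently, the unique pair sharing an endpoint with $\rho_a$ whose own slot equals $\max T(\rho_a)+1$), and its sibling is then forced. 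This matching is precisely the backward scan illustrated in Figs.~\ref{fig: prof1} and~\ref{fig: prof2}, processed from the latest slot to the earliest. Defining $f^{-1}(m)$ to be the tree assembled from these pairs, time slots, and edges completes the construction.

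Finally I would verify the two compositions. For $f^{-1}\circ f=\mathrm{id}$, note that applying the local readings above to $m=f(\gamma)$ returns exactly the pairs, slots, weights, and laminar parent links of $\gamma$, so the assembled tree equals $\gamma$; injectivity of $f$ is then a corollary. For $f\circ f^{-1}=\mathrm{id}$, I would feed $f^{-1}(m)$ back through Definition~\ref{defi: numerology} and check that the regenerated slot sets coincide with the original $T(\rho_a)$, which holds because $\min T(\rho_a)$ and $\max T(\rho_a)+1$ reproduce the interval $\{t:t_a\le t<t_p\}$; surjectivity onto feasible numerologies follows.

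The step I expect to be the main obstacle is the structural recovery in the second paragraph: showing that the parent of each pair is \emph{unique} and that the reconstructed links genuinely form a valid binary swapping tree, with every non-leaf pair $(v_i,v_j)$ getting exactly two children $(v_i,v_k),(v_k,v_j)$ where $i<k<j$, and every leaf correctly attached to its external pair. The temporal conditions alone do not obviously forbid two candidate parents sharing a node at the same slot, so the argument must lean on the laminarity of the interval family together with the memory bound $\theta_m(\cdot,\cdot)\le 2$ to certify that the minimal-super-interval rule yields a well-formed tree; this is where I would spend the most care, likely via induction on the number of time slots scanned or on the path length $n$.
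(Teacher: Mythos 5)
Your proposal is correct in substance and takes essentially the same route as the paper: the paper also proves injectivity by showing that a backward scan of the numerology, from the latest slot toward the earliest, reconstructs a \emph{unique} strategy tree (it phrases this as a proof by contradiction rather than as an explicit inverse, and it scans the occupancy pattern $\theta_m(t,v)$, locating at each step the unique interior node holding two memory units to identify the swap point, where you instead read the tuples $(\rho_a, T(\rho_a))$ directly and orient the tree by the laminar structure of the interval family); the onto direction is handled definitionally in both. One detail of your reconstruction is wrong, though not load-bearing: the root is \emph{not} ``the unique pair whose slot set is a singleton.'' Any pair consumed exactly one slot after its creation also has $|T(\rho_a)|=1$ (e.g., a leaf pair entangled at slot $t$ whose parent swap occurs at $t+1$ yields $T(\rho_a)=\{t\}$); Definition \ref{defi: numerology} forces $T(\rho_r)$ to be a singleton but not uniquely so. Identify the root instead as the tuple whose pair is $(v_1,v_n)$, equivalently the maximal interval of the laminar family (or the pair with the latest time slot). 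Relatedly, your parenthetical ``equivalently, the unique pair sharing an endpoint with $\rho_a$ whose own slot equals $\max T(\rho_a)+1$'' is not a genuine equivalence, since an unrelated pair elsewhere in the tree can share an endpoint with $\rho_a$ and happen to be created at that slot without being its parent; but your primary rule --- parent equals the minimal strict super-interval --- is sound, and laminarity of the binary-splitting intervals does deliver the uniqueness you flagged as the main obstacle, which matches how the paper's recursive decomposition at the two-memory-unit node certifies a unique reconstruction.
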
%
\begin{proof}
    Definition \ref{defi: numerology} indicates that each strategy tree has a corresponding numerology, i.e., $f(\gamma)=m$.
    Then, function $f$ is \emph{onto} since only the numerologies derived from feasible strategy trees are feasible.
    It suffices to show that for any two different strategy trees $\gamma_1$ and $\gamma_2$, their corresponding numerologies $m_1$ and $m_2$ are different, i.e., $m_1=f(\gamma_1)\neq f(\gamma_2)=m_2$.
    We prove it by contradiction. Assume that there exist two different strategy trees mapping to the same numerology for the given path $p=\{v_1, v_2, v_3, ..., v_n\}$, i.e., $f(\gamma_1)=f(\gamma_2)=m$. 
    Let $t_r$ denote the time slot of root pair $r$ in $\gamma_1$.
    Therefore, $m$ uses one memory unit of nodes $v_1$ and $v_n$ at time $t_r$, respectively, denoted by $(v_1,v_n)$ in the strategy tree. By Definition \ref{defi: Strategy Tree}, we know that a non-leaf node $(v_i,v_j)$ in the swapping part must have exactly two children $(v_i,v_k)$ and $(v_k,v_j)$, where $i<k<j$. 
    This means that if we start scanning $m$ from time slot $t_r$ in decreasing order, we can find exactly one node $k$ which spends two memory units at some time slot.
    Take Fig. \ref{fig: prof1} as an example. We start scanning from root pair $(v_1,v_6)$ at time $t_5$.
    Then, when reaching $t_4$, we can find node $v_4$ with two occupied memory units, implying two children $(v_1,v_4)$ and $(v_4,v_6)$.
    Similarly, we keep scanning the remaining time slots from $t_4$ to $t_3$.
    Note that the resource distribution at $t_4$ is the combination of resource distributions of pairs $(v_1,v_4)$ and $(v_4,v_6)$, each of which also meets Definition \ref{defi: Strategy Tree}, as shown in Fig. \ref{fig: prof2}.
    Thus, we can divide $m$ into two sub-distributions from $v_4$, find their children individually, and construct the sub-strategy tree recursively.
    By doing so, we can reconstruct only one possible strategy tree, implying a contradiction. Then, $f$ is a one-to-one and onto mapping, and the lemma follows.
\end{proof}%

\begin{definition}
In the short time slot protocol, with Eq. (\ref{deco formula}), the fidelity of an entangled pair after one-time-slot idling becomes 
    \begin{align}
    \label{eq: deco over time}
    F^{\tau}(F) = F_{d}(F_{d}^{-1}(F)+ \tau),
    \end{align}
where $\tau$ is the actual duration (i.e., length) of one time slot and $F$ is the fidelity before decay.
As two short entangled pairs with fidelity $F_1$ and $F_2$ conduct swapping, with Eqs. (\ref{swapping formula}) and (\ref{eq: deco over time}), the fidelity of the long entangled pair becomes%
    \begin{align} \label{eq: swap deco over time}
    F^{\tau}_s(F_1,F_2) = F_s(F^{\tau}(F_1),F^{\tau}(F_2)).
    \end{align}
\end{definition}

We continue to demonstrate the example in Fig. \ref{fig: nume}.
The left strategy tree in Fig. \ref{fig: strategy tree} (or the left numerology in Fig. \ref{fig: numerology}) and its mirror strategy tree (i.e., the middle case) will have the same fidelity if all the entangled pairs have the same initial fidelity.
The right case is similar to the left one but useful as $v_2$ and $v_4$ have no available memory at $t_{i+2}$ and $t_{i+1}$, respectively (i.e., the left and middle cases cannot be chosen). 
However, it costs more resources and has lower fidelity than the left case.
Note that requests' numerologies could overlap, as shown in Fig. \ref{fig: puz}, where two requests from $v_1$ to $v_6$ and from $v_1$ to $v_3$ are denoted by the red- and black-frame numerologies.

As the numerology indicates the strategy tree in the form of quantum resource distribution, the \ac{QN} strategy tree selection problem can map to a quantum resource allocation problem.

\begin{figure*}[t]
\centering
    \subfigure[\ac{MIS} instance $\mathcal{G}$]{\label{fig: NPH1}\includegraphics[width= .1514\textwidth]{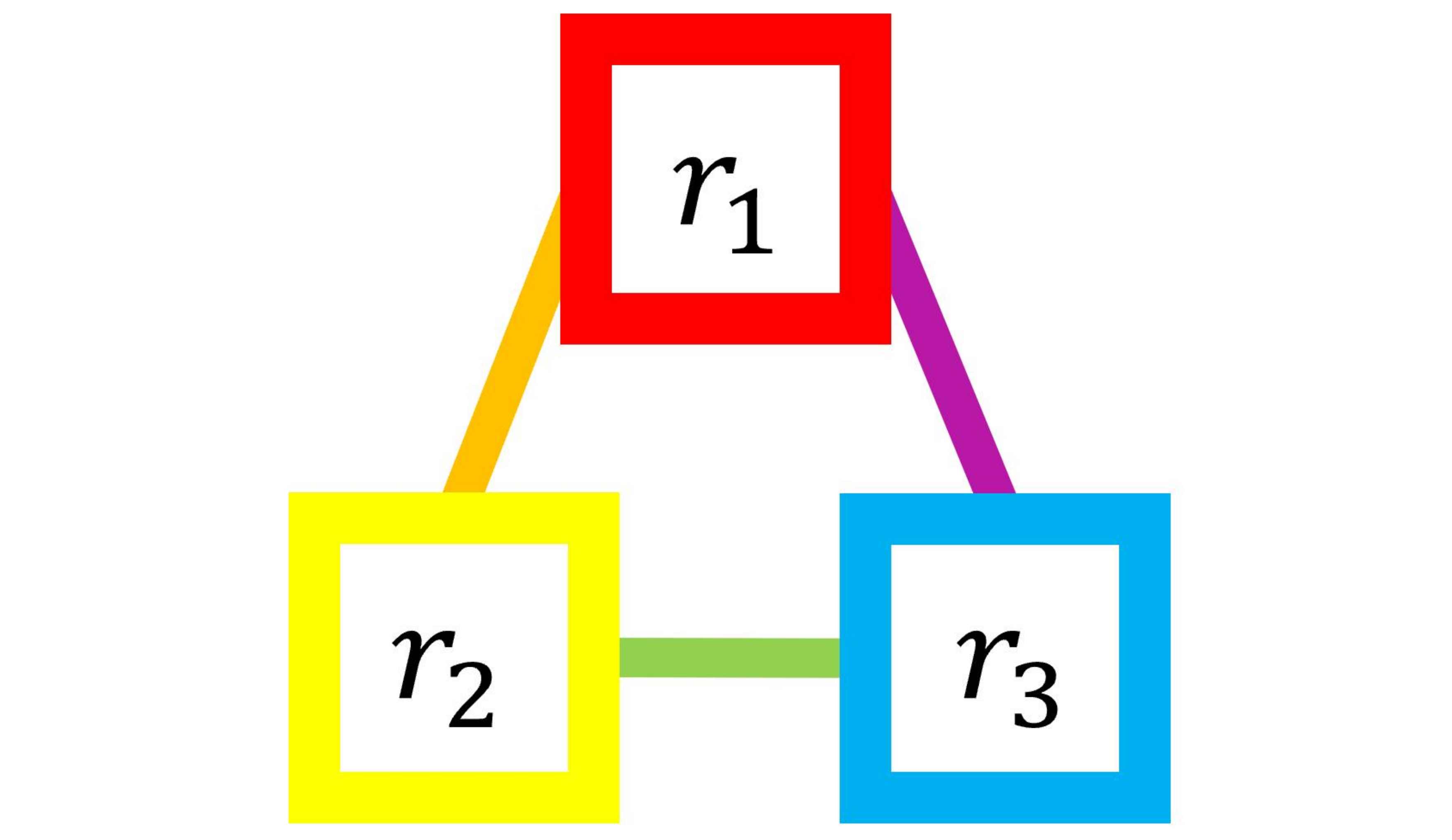}}
    \hfill
    \subfigure[Path construction for $G$]{\label{fig: NPH2}\includegraphics[width= 0.2142\textwidth]{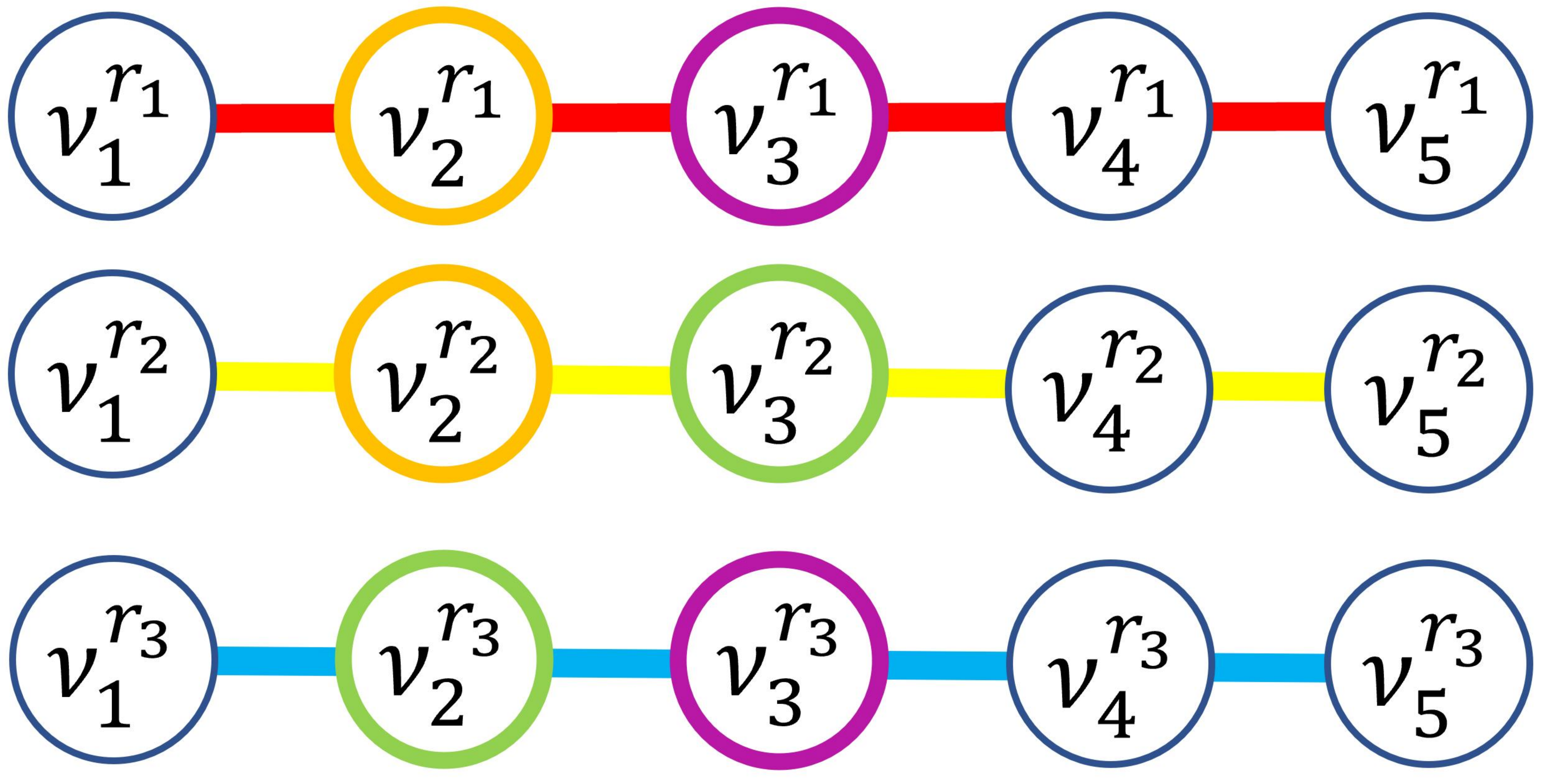}}
    \subfigure[{\ProblemNameAbbrN} instance $G$]{\label{fig: NPH3}\includegraphics[width= .2142\textwidth]{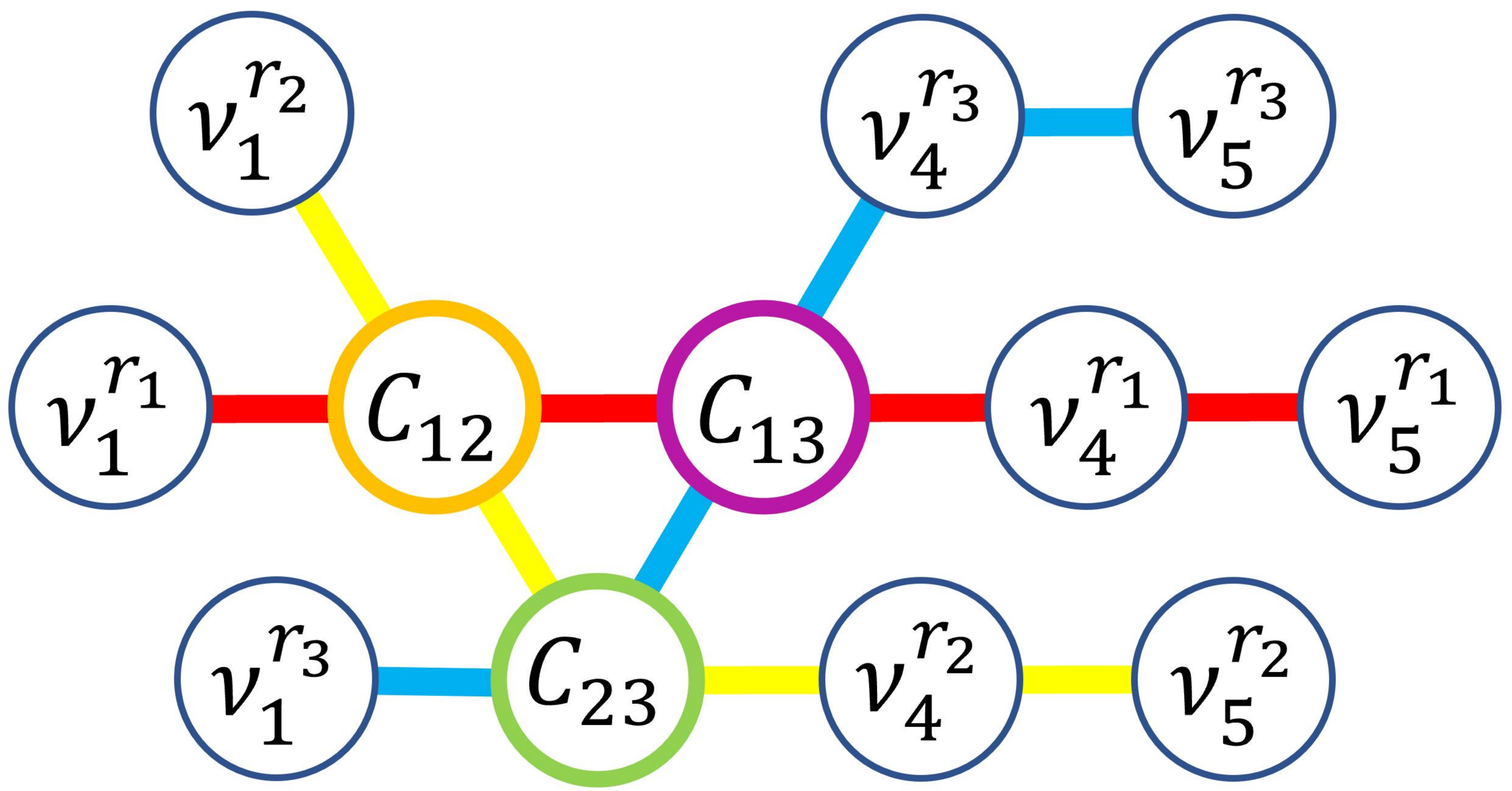}}
    \subfigure[Numerology of $r_1$]{\label{fig: NPH4}\includegraphics[width= .1525\textwidth]{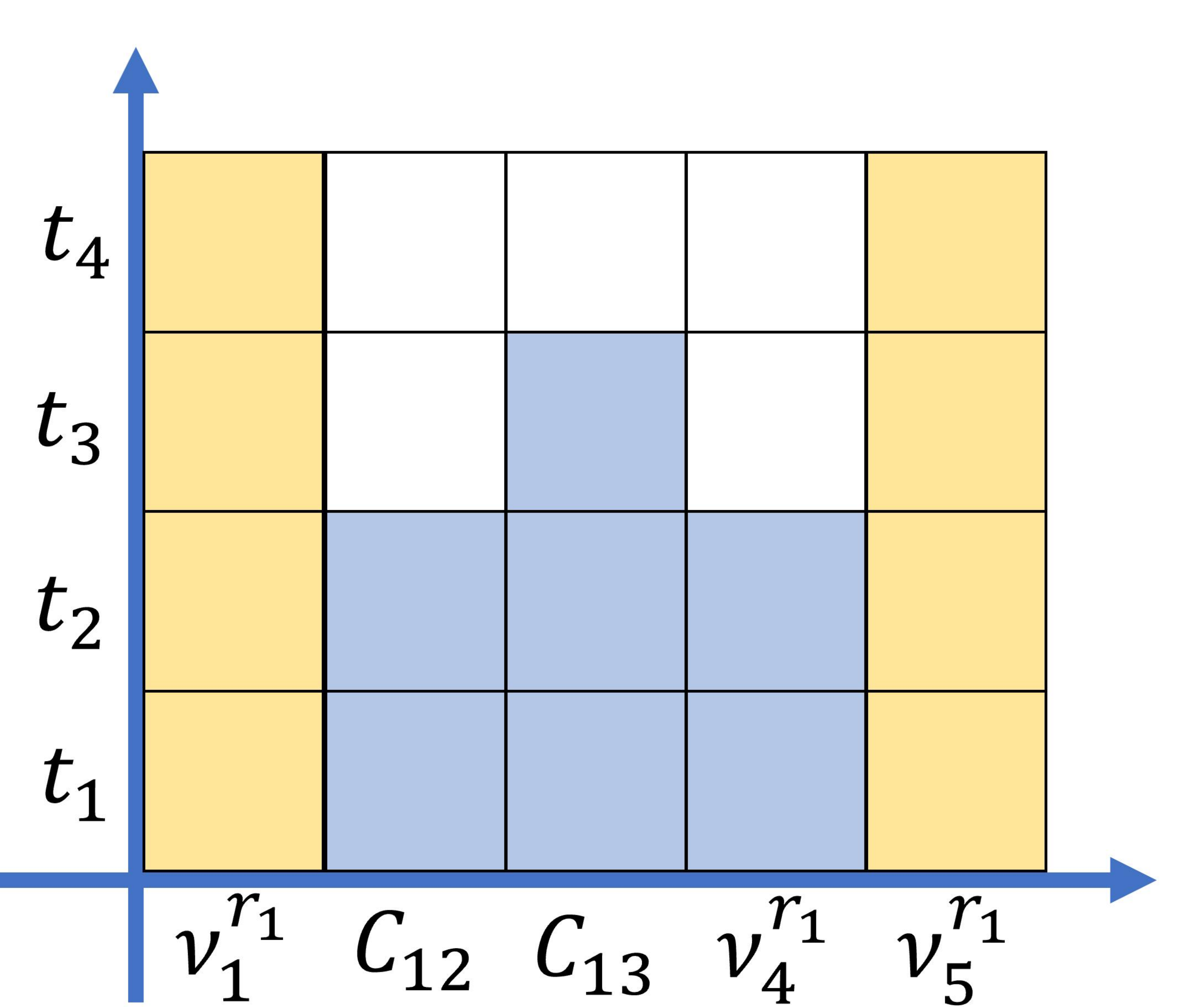}}
    \subfigure[Resource contention for $r_2$]{\label{fig: NPH5}\includegraphics[width= .228\textwidth]{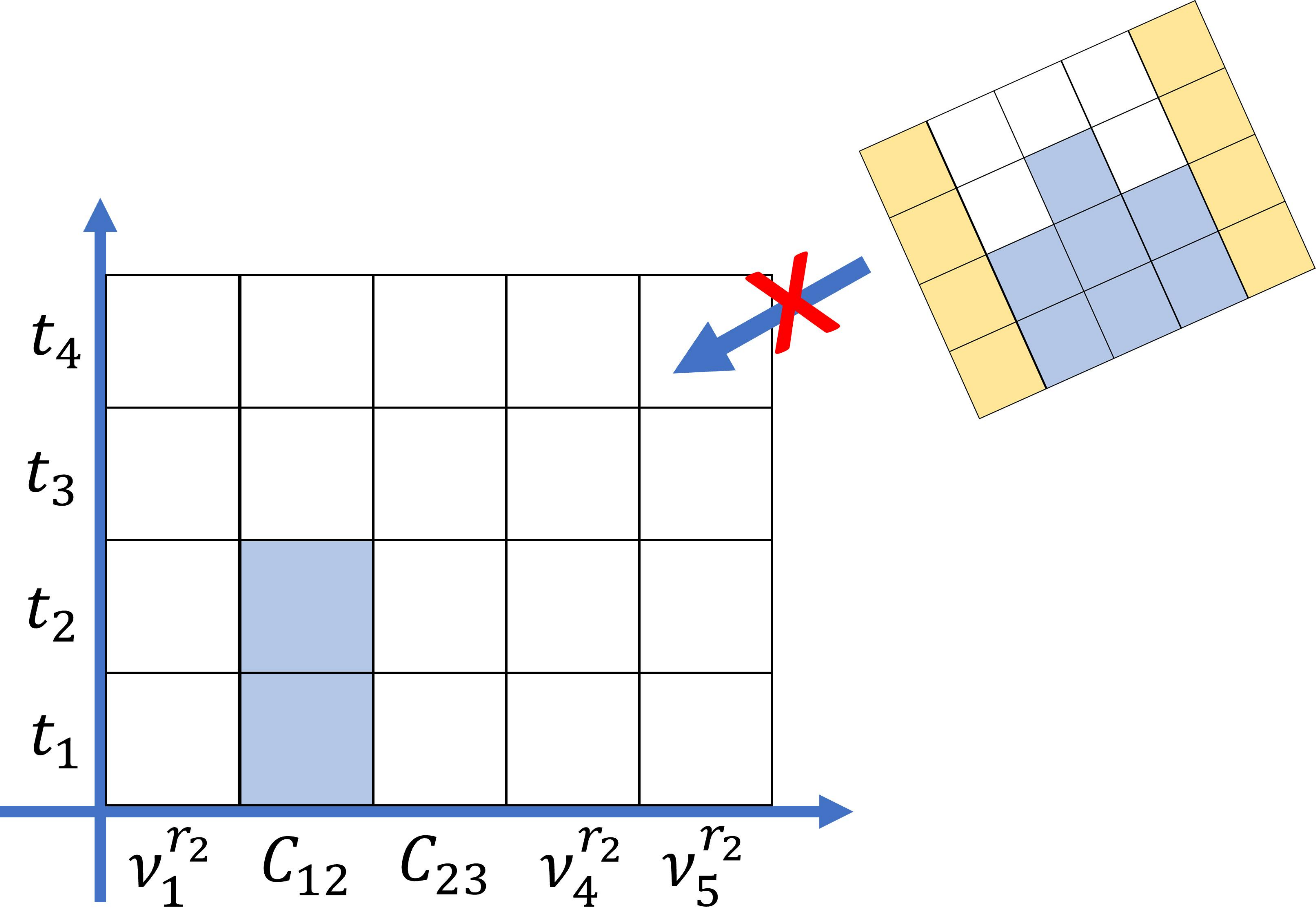}}
    \caption{An example of the reduction from the \ac{MIS} to the {\ProblemNameAbbr}.}
    \label{fig: NPH}
 \end{figure*}

\subsection{Problem Formulation}
\label{subsec: the problem}

We formulate the problem based on the above scenario.
\begin{definition} \label{defi: problem} 
Consider a network $G = (V, E)$, where each node $v \in V$ has a memory limit $c^t(v) \in \mathbb{Z}^+$ for each time slot $t \in T = \{1, 2, ..., |T|\}$. The network also includes a set of \ac{SD} pairs $I$. Each pair $i \in I$ has a predefined path set $P(i)$ derived by an unspecified routing algorithm.%
    \footnote{There are many algorithms for finding routing paths for \ac{SD} pairs \cite{pant2019routing,shi2020concurrent,zhao2021redundant} to derive the predefined path set $P(i)$ for each pair $i\in I$.\label{footnote: routing algorithm}}
Each path $p\in P(i)$ has the success probability $\Pr(p)$.
Given the fidelity threshold $\widehat{F}$, 
the {\ProblemName} (\ProblemNameAbbr) aims to maximize the expected fidelity sum of the chosen numerology for each \ac{SD} pair, subject to the following constraints.
\begin{enumerate}
    \item At most, a numerology associated with one path from the path set $P(i)$ can be selected for each \ac{SD} pair.\footref{footnote: routing algorithm}
    \item The amount of occupied memory on each node $v$ does not exceed its memory limit $c^t(v)$ at each time slot $t$.
    \item The selected numerology's fidelity should be no less than the threshold $\widehat{F}$ to ensure high-quality teleportation.
\end{enumerate}
\end{definition}

Let $M_p(i)$ denote the set of all numerologies that can be implemented from the path $p \in P(i)$ within $T$ (i.e., the period from time slot 1 to time slot $|T|$) for an \ac{SD} pair $i\in I$, and let $F(m)$ denote the fidelity of an entangled pair constructed with numerology $m \in M_p(i)$.
Following Definition \ref{defi: numerology}, let $\theta_m(t, v) \in \{0, 1, 2\}$ denote the required amount of memory on node $v\in V$ at time slot $t \in T$ when implementing a numerology $m\in M_p(i)$ for \ac{SD} pair $i\in I$.
In this way, {\ProblemNameAbbr} can be formulated as the following \ac{ILP}, where the binary variable $x^{ip}_m$ indicates whether a numerology $m \in M_p(i)$ is chosen ($x^{ip}_m=1$) or not ($x^{ip}_m =0$) for \ac{SD} pair $i\in I$ and path $p\in P(i)$.
\begin{subequations} \label{eq: primal ILP}
\begin{align}
    \label{eq: primal objective}
    \text{maximize } 
    & \sum_{i\in I} \sum_{p\in P(i)} \sum_{m\in M_p(i)} \Pr(p) \cdot F(m) \cdot x^{ip}_m
    \\
    \label{eq: only select one constraint}
    \text{subject to }
    & \sum_{p\in P(i)} \sum_{m\in M_p(i)} x^{ip}_m \le 1, \quad \forall i \in I
    \\
    \begin{split}
        \label{eq: memory constraint}
        & \sum_{i\in I} \sum_{p\in P(i)} \sum_{m\in M_p(i)} \theta_m(t, v) \cdot x^{ip}_m \le c^t(v), \\ 
        & \qquad\qquad\qquad\qquad\qquad\;\; \forall v \in V, \forall t \in T
    \end{split}\\
    \begin{split}
        \label{eq: fidelity threshold constraint}
        & ( F(m) - \widehat{F} ) \cdot x^{ip}_m \geq 0, \\
        & \qquad\qquad\; \forall i \in I, \forall p\in P(i), \forall m \in M_p(i)
    \end{split}\\
    \begin{split}
        \label{eq: x range constraint}
        & x^{ip}_m \in \{0, 1\}, \\
        & \qquad\qquad\; \forall i \in I, \forall p\in P(i), \forall m \in M_p(i)
    \end{split}
\end{align}
\label{straincomponent}%
\end{subequations}
The objective (\ref{eq: primal objective}) maximizes the expected fidelity sum of all \ac{SD} pairs, where $\Pr(p)$ is the success probability of path $p$. Constraint (\ref{eq: only select one constraint}) ensures that at most one numerology can be chosen from a single
path $p$ within the path set $P(i)$ for each \ac{SD} pair $i\in I$.
Constraint (\ref{eq: memory constraint}) guarantees that the total amount of memory occupied on each node $v\in V$ at every time slot $t\in T$ does not exceed its memory limit.
Constraint (\ref{eq: fidelity threshold constraint}) ensures that the selected numerology fidelity should be no less than the threshold $\widehat{F}$ the controller sets based on its policy.

\subsection{NP-hardness and Inapproximability}

We prove the hardness of {\ProblemNameAbbr} by demonstrating that even without the constraint (\ref{eq: fidelity threshold constraint}), the problem, termed {\ProblemNameAbbrN}, is very challenging. 
Specifically, we prove the NP-hardness and inapproximability of {\ProblemNameAbbrN} in Theorem \ref{theo: np-hard} by reducing the well-known NP-hard problem, the \ac{MIS} \cite{MIS-hard}, to the {\ProblemNameAbbrN}.
Therefore, the {\ProblemNameAbbr}, as a general case of the {\ProblemNameAbbrN}, must be at least as hard as {\ProblemNameAbbrN}, thereby implying Corollary \ref{coro: np-hard}.
Note that the \ac{MIS} asks for a maximum set of vertices in a graph, no two of which are adjacent.

\begin{theorem} \label{theo: np-hard}
    The {\ProblemNameAbbrN} is NP-hard and cannot be approximated within any factor of $|I|^{1-\epsilon}$ unless $NP = P$ for any fixed $\epsilon>0$, where $|I|$ is the number of \ac{SD} pairs.
\end{theorem}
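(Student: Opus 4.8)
The plan is to give an approximation-preserving reduction from the \ac{MIS}, which is NP-hard and known to be inapproximable within $n^{1-\epsilon}$ on an $n$-vertex graph unless $P=NP$. Given an \ac{MIS} instance $\mathcal{G}=(\mathcal{V},\mathcal{E})$ as in Fig.~\ref{fig: NPH1}, I would construct a {\ProblemNameAbbrN} instance $G$ in which each vertex $v_i\in\mathcal{V}$ becomes exactly one \ac{SD} pair $r_i$, so that $|I|=|\mathcal{V}|=n$, and each edge of $\mathcal{E}$ becomes a memory-contention gadget (Figs.~\ref{fig: NPH2}–\ref{fig: NPH3}). The entire purpose of the construction is to force the equivalence: two requests $r_i$ and $r_j$ can be served simultaneously within the memory budget if and only if $v_i$ and $v_j$ are \emph{not} adjacent in $\mathcal{G}$. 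Concretely, for each edge $(v_i,v_j)\in\mathcal{E}$ I would place a shared bottleneck node whose limit $c^t(\cdot)$ at the relevant time slot admits the numerology of only one of the two requests, so that the two numerologies (Figs.~\ref{fig: NPH4}–\ref{fig: NPH5}) collide under constraint~\eqref{eq: memory constraint}.

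Next, I would arrange the paths so that every request is structurally uniform — the same path length and swap count, hence the same success probability $\Pr(p)$ and the same numerology fidelity $F(m)$. Writing this common contribution as $c=\Pr(p)\cdot F(m)>0$, the objective~\eqref{eq: primal objective} of any feasible solution equals $c$ times the number of accepted requests, because in {\ProblemNameAbbrN} the fidelity-threshold constraint~\eqref{eq: fidelity threshold constraint} is dropped and each accepted request contributes exactly $c$. Combined with the contention equivalence, this shows that a set of accepted requests is feasible if and only if $\{v_i : r_i \text{ accepted}\}$ is an independent set of $\mathcal{G}$, and that the optimum objective value is precisely $c\cdot\omega(\mathcal{G})$, where $\omega(\mathcal{G})$ denotes the \ac{MIS} size.

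The two directions of this equivalence are the formal core. For the forward direction I would take any independent set $S$, assign each $r_i$ with $v_i\in S$ its unique feasible numerology, and verify that at every node and time slot the occupied memory respects $c^t(v)$; here independence guarantees no two chosen numerologies share a bottleneck. For the reverse direction I would argue by contradiction: if two accepted requests corresponded to adjacent vertices, their numerologies would overlap at the edge gadget and violate~\eqref{eq: memory constraint}. Since the reduction is polynomial and value-preserving up to the factor $c$, any polynomial-time algorithm approximating {\ProblemNameAbbrN} within $|I|^{1-\epsilon}$ would approximate \ac{MIS} within $n^{1-\epsilon}$ (as $|I|=n$), contradicting the known inapproximability; NP-hardness follows from the same correspondence applied to the decision version. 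Even a bounded, constant-factor spread among the $F(m)$ would be harmless, since such constants are absorbed into $n^{1-\epsilon}$ for large $n$.

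The main obstacle I anticipate is the gadget design itself: I must encode the adjacency structure of $\mathcal{G}$ purely through memory contention at shared nodes and time slots, while simultaneously keeping every request's numerology uniform in length, swap count, and fidelity. Reconciling these two requirements — a request incident to many edges must still look identical to one incident to few, yet contend only with its true neighbors — is the delicate part, and it is exactly what the staged path construction and overlapping numerologies in Figs.~\ref{fig: NPH2}–\ref{fig: NPH5} are set up to realize.
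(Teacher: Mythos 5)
Your proposal matches the paper's proof essentially step for step: an approximation-preserving reduction from the \ac{MIS} with one \ac{SD} pair per vertex, one merged bottleneck node per edge, structurally uniform dedicated paths with $\Pr(p)=1$ so the objective is proportional to the number of accepted requests, and the $|I|=n$ correspondence transferring the $n^{1-\epsilon}$ inapproximability. The one piece you flag as the delicate obstacle---forcing each request into a unique feasible numerology so that adjacent requests must collide at the shared node---is resolved in the paper exactly as you anticipate: each path gets $2^{\lceil\log|\mathcal{V}|\rceil}$ edges while $|T|$ is set to $\lceil\log|\mathcal{V}|\rceil+2$, making the complete binary strategy tree the only feasible choice, with endpoint memory $1$ and intermediate memory $2$ so that two adjacent requests' forced numerologies exceed the merged node's limit.
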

\begin{proof}
    The proof idea is to create an \ac{SD} pair and a dedicated path and then add them to the {\ProblemNameAbbrN} instance for each node in the \ac{MIS} instance such that the corresponding \ac{SD} pairs of any two neighboring nodes in the \ac{MIS} instance cannot be satisfied simultaneously within $T$.
    In the following, for any given \ac{MIS} instance $\{\mathcal{G}=\{\mathcal{V}, \mathcal{E}\}\}$, we show how to construct the corresponding {\ProblemNameAbbrN} instance in detail.
    
    For each node in $\mathcal{V}$, we create a corresponding \ac{SD} pair and add it to the {\ProblemNameAbbrN} instance.
    Subsequently, for each created \ac{SD} pair, we construct a dedicated path that consists of $ 2^{\lceil \log |\mathcal{V}|\rceil}+1$ nodes with $2^{\lceil \log |\mathcal{V}|\rceil}$ edges and then add the path to $G$ of the {\ProblemNameAbbrN} instance.
    Afterward, $|T|$ is set to $\lceil \log |\mathcal{V}|\rceil+2$.
    In addition, the amounts of memory of the source and destination on the path are set to $1$, and those of the other nodes (i.e., intermediate nodes) on the path are set to $2$.
    Then, the probability of each path is uniformly set to $1$ (i.e., $\Pr(p)=1$). 
    Last, for each pair of neighboring nodes in $\mathcal{G}$, we pick an intermediate node that has not been picked yet from each of their corresponding paths in $G$ and then merge the two picked nodes into a single node. Note that the node induced by merging cannot be picked for merging again.
    The construction can be done in polynomial time.

    Fig. \ref{fig: NPH} illustrates an example of instance construction. The \ac{MIS} instance $\mathcal{G}$ includes a clique with three nodes, $r_1$, $r_2$, and $r_3$, which are drawn in the red, yellow, and blue frames, respectively. 
    Then, for each node in $\mathcal{G}$, we create an \ac{SD} pair and construct a dedicated path that consists of $2^{\lceil \log 3\rceil}+1 =5$ nodes, as shown in Fig. \ref{fig: NPH2}.
    Note that the color of the edges on each path corresponds to the color of relevant node in $\mathcal{G}$, and thus the path $\{v^{r_1}_1, v^{r_1}_2, ...,v^{r_1}_5\}$ of \ac{SD} pair $r_1$ in $G$ represents the node $r_1$ in $\mathcal{G}$. 
    In $\mathcal{G}$, an orange edge is between $r_1$ and $r_2$. Therefore, $v^{r_1}_2$ and $v^{r_2}_2$ are selected from the two corresponding paths, respectively, and merged into an orange node denoted by $C_{12}$ in $G$, as shown in Fig. \ref{fig: NPH3}.
    Similarly, $C_{23}$ and $C_{13}$ are derived due to the edges $(r_2,r_3)$ and $(r_1,r_3)$, respectively.
    
    We then show that the solution of the \ac{MIS} instance is one-to-one mapping to that of the constructed {\ProblemNameAbbrN} instance. 
    In this {\ProblemNameAbbrN} instance, every \ac{SD} pair has only one path with scarce memory, and its candidate strategy trees include only the full binary tree due to the subtle setting of $|T|$, as shown in Fig. \ref{fig: NPH4}, where $|T|$ is set to $\lceil \log3\rceil+2 = 4$.
    Therefore, for any two neighboring nodes in $\mathcal{G}$, the numerologies of the two corresponding pairs in the {\ProblemNameAbbrN} instance cannot be selected at the same time since the merged node on the two paths in $G$ has no sufficient memory to serve the two pairs within $T$.
    In contrast, they can be selected simultaneously if the two corresponding nodes in $\mathcal{G}$ are not adjacent.
    Figs. \ref{fig: NPH4} and \ref{fig: NPH5} continue to show the example. 
    Assume that \ac{SD} pair $r_1$ is admitted to construct numerology on path $\{v^{r_1}_1, \mathcal{C}_{12}, \mathcal{C}_{13}, v^{r_1}_4, v^{r_1}_5\}$.
    In such a case, $r_2$ cannot construct the numerology on the path $\{v^{r_2}_1, \mathcal{C}_{12}, \mathcal{C}_{23}, v^{r_2}_4, v^{r_2}_5\}$ since the memory of $\mathcal{C}_{12}$ has been occupied by $r_1$ (i.e., resource contention), as shown in Fig. \ref{fig: NPH5}.
    Thus, the solutions of any \ac{MIS} instance and its corresponding {\ProblemNameAbbrN} instance are one-to-one mapping. Thereby, {\ProblemNameAbbrN} is NP-hard.
    
    We continue to show that the {\ProblemNameAbbrN} does not admit any approximation algorithm within a factor of $|I|^{1-\epsilon}$ for any fixed $\epsilon>0$ by contradiction. To this end, suppose there exists an $|I|^{1-\epsilon}$-approximation algorithm $A$ for the {\ProblemNameAbbrN}.
    Following the above-mentioned instance construction, any arbitrary \ac{MIS} instance has a corresponding {\ProblemNameAbbrN} instance.
    Assume the optimal solution for the \ac{MIS} instance has $k$ nodes, implying the optimal solution for the {\ProblemNameAbbrN} instance maximizes the expected fidelity sum by satisfying the corresponding $k$ pairs.
    Then, we can employ algorithm $A$ to find a solution that satisfies at least $\frac{k}{|I|^{1-\epsilon}}$ pairs, and the found solution corresponds to a solution with at least $\frac{k}{n^{1-\epsilon}}$ nodes for the \ac{MIS} instance, where $n$ denotes the number of nodes in the corresponding \ac{MIS} instance. This is because $|I|$ is set to $n$ during the instance construction. 
    However, unless $NP = P$, the \ac{MIS} does not admit any approximation ratio within $n^{1-\epsilon}$ for any fixed $\epsilon>0$ \cite{MIS-hard}.
    Thus, algorithm $A$ must not exist; otherwise,
    $A$ could be employed to derive an $n^{1-\epsilon}$-approximation algorithm for the \ac{MIS}. The theorem follows.
\end{proof}
\begin{corollary} \label{coro: np-hard}
    The {\ProblemNameAbbr} is at least as hard as the {\ProblemNameAbbrN}.
\end{corollary}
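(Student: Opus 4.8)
The plan is to prove Corollary \ref{coro: np-hard} by a direct special-case reduction, exhibiting {\ProblemNameAbbrN} as an instance class that embeds verbatim into {\ProblemNameAbbr} once the fidelity threshold is made vacuous. Since Theorem \ref{theo: np-hard} already settles the NP-hardness and $|I|^{1-\epsilon}$-inapproximability of {\ProblemNameAbbrN}, showing that every {\ProblemNameAbbrN} instance is (efficiently) a {\ProblemNameAbbr} instance with the same optimum immediately transfers both lower bounds.

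First I would recall, from the ILP (\ref{eq: primal ILP}), that {\ProblemNameAbbr} and {\ProblemNameAbbrN} share the identical objective (\ref{eq: primal objective}), selection constraint, and memory constraint, and differ \emph{only} by the fidelity-threshold constraint (\ref{eq: fidelity threshold constraint}), which enforces $(F(m)-\widehat{F})\cdot x^{ip}_m \ge 0$ for every candidate numerology. Thus {\ProblemNameAbbrN} is exactly {\ProblemNameAbbr} with constraint (\ref{eq: fidelity threshold constraint}) dropped.

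Next I would construct, from an arbitrary {\ProblemNameAbbrN} instance, a {\ProblemNameAbbr} instance with the same network $G$, the same \ac{SD}-pair set $I$, path sets $P(i)$, per-slot memory limits $c^t(v)$, success probabilities $\Pr(p)$, and horizon $T$, and set the threshold to a value below every attainable fidelity, e.g. $\widehat{F}=0$ (or $\widehat{F}=0.25$, using the Werner-state lower bound $F(m)\ge 0.25$ from Section \ref{sec: background}). This construction is trivially polynomial. I would then verify that under this choice constraint (\ref{eq: fidelity threshold constraint}) is satisfied by \emph{every} $x^{ip}_m\in\{0,1\}$ and is therefore redundant, so the feasible region and objective of the constructed {\ProblemNameAbbr} instance coincide exactly with those of the original {\ProblemNameAbbrN} instance. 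Consequently, any optimal (respectively, $\alpha$-approximate) solution of the {\ProblemNameAbbr} instance is an optimal (respectively, $\alpha$-approximate) solution of the {\ProblemNameAbbrN} instance, establishing that {\ProblemNameAbbr} is at least as hard as {\ProblemNameAbbrN} and inherits its NP-hardness and inapproximability.

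The only subtlety — and it is a mild one — lies in justifying that the chosen $\widehat{F}$ renders constraint (\ref{eq: fidelity threshold constraint}) non-binding for \emph{all} admissible numerologies, not merely for those eventually selected; this is settled at once by the uniform lower bound $F(m)\ge 0.25$ guaranteed by the Werner-state decoherence and swapping model of Section \ref{sec: background}. Hence no genuine obstacle arises, and the corollary follows essentially by the inclusion of one problem as a special case of the other.
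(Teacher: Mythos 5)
Your proposal is correct and takes essentially the same route as the paper, which offers no separate proof beyond the remark that {\ProblemNameAbbr} is a general case of {\ProblemNameAbbrN}; your construction simply makes that containment explicit by choosing a vacuous threshold (e.g., $\widehat{F}=0$, which suffices unconditionally since $F(m)\ge 0$) so that constraint (\ref{eq: fidelity threshold constraint}) is non-binding and the two feasible regions coincide. This is a faithful, slightly more detailed version of the paper's special-case argument.
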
%

\section{The Design of {\AlgoNametwo} ({\AlgoNameAbbrtwo})}
\label{sec: solution part 2}

In this section, we first design a bi-criteria approximation algorithm {\AlgoNameAbbrtwo} for the {\ProblemNameAbbrN} (i.e., it has an approximation ratio while relaxing the memory limit by a bounded ratio), then remedy the relaxation on the memory limit, and finally extend it to support the {\ProblemNameAbbr}. 
In the beginning, we observe the case where $\tau$ and $|T|$ are small.
In this case, the impact of time slot decoherence diminishes, therefore, all feasible numerologies within $T$ achieve similar fidelity.
To better illustrate this perspective, we conduct the simulation to observe the results, as shown in Fig. \ref{fig: FmaxFminRatio}, where the average path length of each \ac{SD} pair is about seven.
This figure shows that as the time slot length $\tau$ and the batch size of time slots $T$ decrease, the ratio of $\frac{F_{\max}}{F_{\min}}$ approaches $1$, where $F_{\max}$ denotes maximum fidelity that can be achieved among all feasible numerologies within $T$, while $F_{\min}$ denotes the minimum one.
Specifically, this is because:
1) A smaller time slot length $\tau$ reduces decoherence and fidelity loss within each time slot, thereby increasing $F_{\min}$. 2) With a smaller batch size of time slots $T$, the overall processing time decreases and further limits the feasible numerologies available for each request. Thus, these feasible numerologies yield comparable fidelity levels, as $T$ restricts the strategies for entangling and swapping.

Therefore, in this case, we can temporarily neglect $F(m)$ in the {\ProblemNameAbbrN} and consider it later when deriving the bi-criteria approximation ratio.
Then, the modified problem formulation has the objective as 
Eq. (\ref{eq: revised primal objective}) and the constraints (\ref{eq: only select one constraint}), (\ref{eq: memory constraint}), and (\ref{eq: x range constraint}).%
\begin{align}
    \label{eq: revised primal objective}
    \text{maximize } 
    \sum_{i\in I} \sum_{p\in P(i)} \sum_{m\in M_p(i)} \Pr(p) \cdot x^{ip}_m.
\end{align}%

Specifically, the {\AlgoNameAbbrtwo} employs a combinatorial algorithm with a cleverly-designed \ac{DP}-based separation oracle, a randomized rounding technique, and two heuristics for ensuring feasibility, as detailed in Sections \ref{subsec: the combinatorial algorithm}, \ref{subsec: the separation oracle}, \ref{subsec: the randomized rounding algorithm}, and \ref{subsec: heuristic algorithm for solution improvement}, respectively.
{\color{black}
Furthermore, in Section \ref{subsec: heuristic algorithm for solution improvement}, we extend the {\AlgoNameAbbrtwo} to address the constraint (\ref{eq: fidelity threshold constraint}), guaranteeing the solution feasibility for the {\ProblemNameAbbr}.}
Overall, the {\AlgoNameAbbrtwo} can approximate the optimum solution of the {\ProblemNameAbbrN} within an approximation ratio of $O(\frac{F_{\max}}{F_{\min}})$ in the price of a bounded relaxation ratio of $O(\log(|V|\cdot|T|))$ on the memory limit (i.e., bi-criteria approximation).
Then, with the two heuristics, {\AlgoNameAbbrtwo}'s solution can be improved to meet the fidelity threshold and approach the optimum solution as {\color{black} $\tau$ and $|T|$ are small enough}, i.e., $\frac{F_{\max}}{F_{\min}}\approx 1$.

\subsection{The Combinatorial Algorithm} \label{subsec: the combinatorial algorithm}

We first obtain the relaxed \ac{LP} by replacing constraint (\ref{eq: x range constraint}) with constraint (\ref{eq: primal relaxed x}) as follows:
\begin{align}
    \label{eq: primal relaxed x}
    x^{ip}_m \ge 0, && \forall i\in I, \forall p\in P(i), \forall m \in M_p(i).
\end{align}
However, the number of variables in our \ac{ILP} grows exponentially with the input size since the number of feasible numerologies for each \ac{SD} pair may be exponential (i.e., the number of possible binary trees). 
Thus, the relaxed \ac{LP} (\ref{eq: revised primal objective}), (\ref{eq: only select one constraint}), (\ref{eq: memory constraint}), (\ref{eq: primal relaxed x}) cannot be solved by an \ac{LP} solver in polynomial time.
On the other hand, by \cite{introducetoalgorithm}, if an \ac{LP} is in standard form and has a polynomial number of constraints (except for non-negativity constraints of variables), the number of variables in its dual \ac{LP} is also polynomial, as described in Definition \ref{defi: linear program}.
Then, by combinatorial algorithms, we can obtain a near-optimum solution to the primal \ac{LP} in polynomial time if the following properties are satisfied \cite{primal_dual}.
\begin{enumerate}
    \item Each coefficient on the left-hand side is not greater than the constant on the right-hand side for every inequality of the constraint in the primal \ac{LP} (except for the non-negativity constraints of variables).
    \item In the dual \ac{LP}, all coefficients on the left-hand-side of inequality and the constant on the right-hand-side of inequality are positive in each constraint (except for the non-negativity constraints of variables).
    \item A separation oracle exists to tell whether there is a violated constraint for the dual \ac{LP}.
\end{enumerate}

\begin{figure}[t]
\centering
\includegraphics[width=.48\textwidth]{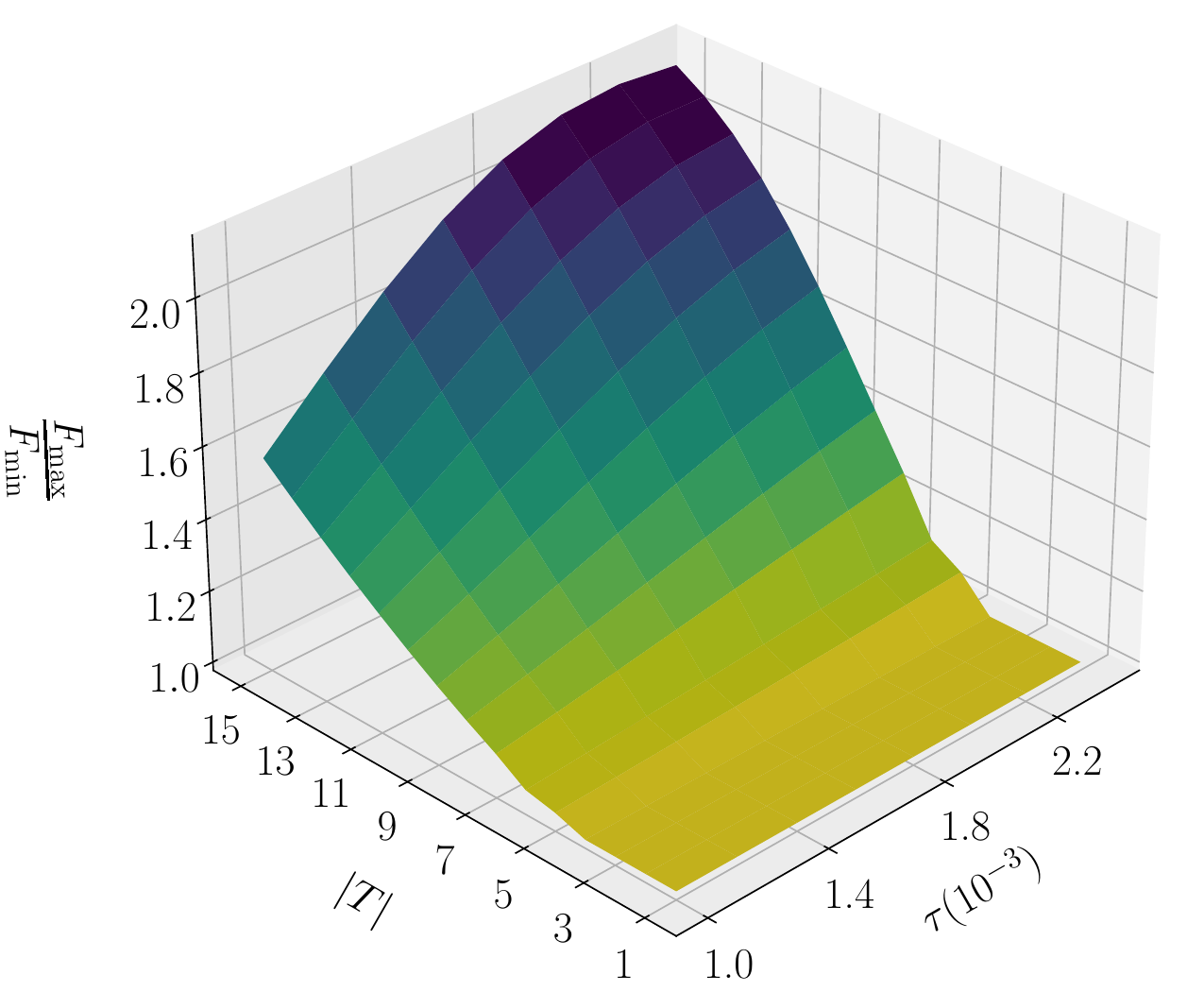}
    \caption{The ratio of $\frac{F_{\max}}{F_{\min}}$ for different values of $\tau$ and $|T|$.}
    \label{fig: FmaxFminRatio}
\end{figure}

\begin{definition} \label{defi: linear program}
\cite{introducetoalgorithm} The \ac{LP} standard form is $\max \{c^T x| Ax\le b, x \ge 0\}$, where $x$ is an $n \times 1$ variable vector, and $c$, $b$, and $A$ denote $n \times 1$, $m \times 1$, $m \times n$ constant vectors, respectively.
The corresponding dual \ac{LP} is $\min\{b^T y| A^T y \ge c, y \ge 0\}$, where $y$ is an $m \times 1$ variable vector.
Note that $x \ge 0$ and $y\ge 0$ are the non-negativity constraints of $x$ and $y$.
\end{definition}

Clearly, the relaxed primal \ac{LP} (\ref{eq: revised primal objective}), (\ref{eq: only select one constraint}), (\ref{eq: memory constraint}), and (\ref{eq: primal relaxed x}) meets the first property.
By Definition \ref{defi: linear program}, we associate dual variables $\alpha_i$ and $\beta^t_v$ with each constraint in (\ref{eq: only select one constraint}) and (\ref{eq: memory constraint}), respectively. Thus, we obtain the corresponding dual \ac{LP} (\ref{eq: dual objective})$-$(\ref{eq: dual beta constraint}).
It can be seen that the dual \ac{LP} also satisfies the second property.
\begin{subequations}
\begin{align}
    \label{eq: dual objective}
    \text{minimize \ \ } 
        & \sum_{i\in I} \alpha_i+\sum_{v\in V}\sum_{t\in T} c^t(v)\cdot\beta^t_v
    \\
    \begin{split}
        \label{eq: dual length constraint}
        \text{subject to \ \ }
        & \alpha_i + \sum_{v\in V}\sum_{t\in T} \theta_m(t,v) \cdot \beta^{t}_v\ge \Pr(p),\\
        & \qquad\qquad\, \forall i \in I, \forall p\in P(i), \forall m \in M_p(i)
    \end{split}\\
    \label{eq: dual beta constraint}
    & \alpha_i, \beta^{t}_{v} \ge 0, \quad\qquad \forall i\in I, \forall v \in V, \forall t \in T
\end{align}
\label{dual linear programmming}%
\end{subequations}%
Then, we design a separation oracle for the dual \ac{LP} and get the fractional solution for the relaxed primal \ac{LP} in Section \ref{subsec: the separation oracle}. However, the fractional solution may be infeasible for the \ac{ILP} (\ref{eq: revised primal objective}), (\ref{eq: only select one constraint}), (\ref{eq: memory constraint}), and (\ref{eq: x range constraint}). 
To this end, we propose an \ac{LP} rounding algorithm and improve the solution in Sections \ref{subsec: the randomized rounding algorithm} and \ref{subsec: heuristic algorithm for solution improvement}. Last, we analyze its performance in Section \ref{subsec: the analysis of randomized rounding algorithm}.

\subsection{The Separation Oracle}
\label{subsec: the separation oracle}

Given an arbitrary (fractional) solution ($\alpha, \beta$) to the dual \ac{LP}, the separation oracle tells whether a violated constraint exists.
It is easy to examine every constraint in (\ref{eq: dual beta constraint}) in polynomial time, but the challenge arises in identifying whether a violated constraint exists in (\ref{eq: dual length constraint}).
Note that the number of \ac{SD} pairs and the size of path set for each \ac{SD} pair are both polynomial. 
Therefore, it suffices to identify the most violated constraint in (\ref{eq: dual length constraint}) for each pair $i \in I$ and each path $p\in P(i)$. This can be done by computing
\begin{align}
    \label{eq: shortest length function}
    \min_{m\in M_p(i)} \sum_{v\in V}\sum_{t\in T}\theta_m(t, v)\cdot\beta^{t}_v.
\end{align}%

To this end, we subtly design a \ac{DP} algorithm to iteratively solve a larger subproblem by examining its smaller subproblems.
Specifically, any numerology $m\in M_p(i)$ has a one-to-one and onto mapping to a strategy tree through path $p$ with a root represented by the \ac{SD} pair $(s_i,d_i)$, as shown in Fig. \ref{fig: nume}. 
We introduce the function $\hat{f}(T,(s,d))$ to find a numerology $m$ that can generate an entangled pair from node $s$ to node $d$ through the path $p$ within $T$, while minimizing $\sum_{v\in V}\sum_{t\in T}\theta_m(t,v)\cdot\beta^t_v$.
In this way, the desired numerology for a given \ac{SD} pair $i$ and path $p \in P(i)$ can be found by using $\hat{f}(T,(s_i,d_i))$.

To compute $\hat{f}(T,(s,d))$, we examine all local minimum numerologies, each corresponding to a strategy tree rooted at a time slot $t\in T$.
We know that any feasible numerology $m$ requires $\theta_m(t,v)$ memory units on node $v$ at time slot $t$, and a strategy tree can be built by combining two sub-strategy trees.
Thus, we define a function $\hat{g}(t,(s,d),(\sigma_s,\sigma_d))$ to find a numerology $m$ with a (sub-)strategy tree rooted at $(s,d)$ at time slot $t$, which can minimize $\sum_{v\in V}\sum_{t'\in T}\theta_m(t',v)\cdot\beta^{t'}_v$ while ensuring the memory limit $c^t(s)\ge \sigma_s$ and $c^t(d)\ge \sigma_d$, where $\sigma_s,\sigma_d\in \{1,2\}$.
Then, Eq. (\ref{eq: dp root case}) derives $\hat{f}(T,(s,d))$.
\begin{align}
    \hat{f}(T,(s,d))
    =
    \min_{t\in T}
        \big\{\hat{g}(t,(s,d),(1,1))\big\}    
    \label{eq: dp root case}
\end{align}%

\begin{table*}[ht]
\small
\centering
\begin{minipage}{1\textwidth}

\begin{align} 
    & \hspace{-0.30cm} \hat{g}(t,(s,d),(\sigma_s,\sigma_d)) =  
    \begin{cases}
    \beta^t_s + \beta^t_d + \beta^{t-1}_s + \beta^{t-1}_d
        ~
        & \mbox{if } \substack{(s,d)\in E, \, t\ge 2, \\  c^t(s) \text{ and } c^{t-1}(s)\ge \sigma_s, \\  c^t(d) \text{ and } c^{t-1}(d)\ge \sigma_d;} \\
    \beta^t_s + \beta^t_d + 
    \min \bigg\{\hat{g}(t-1,(s,d),(\sigma_s,\sigma_d)), \displaystyle\min_{k \in \mathcal{I}(s, d)} \big\{\hat{h}(t-1,(s,k,d),(\sigma_s,\sigma_d))\big\}\bigg\}
        ~
        & \mbox{else if } \substack{t\ge 2, \, c^t(s)\ge \sigma_s, \\ c^t(d)\ge \sigma_d;} \\
    \infty 
        & \mbox{otherwise.} 
    \end{cases} \notag
    \\[-7.5ex] \label{eq: dp two cases}
    \\[6ex]
    & \hspace{-0.30cm} \hat{h}(t,(s,k,d),(\sigma_s,\sigma_d)) = \min \bigg\{ 
    \hat{g}(t,(s,k),(\sigma_s,2)) + \hat{g}(t,(k,d),(1,\sigma_d)), \; 
    \hat{g}(t,(s,k),(\sigma_s,1)) + \hat{g}(t,(k,d),(2,\sigma_d))
    \bigg\}.
    \label{eq: dp state case}
\end{align}

\hrule
\end{minipage}
\end{table*}

We derive $\hat{g}(t,(s,d),(\sigma_s,\sigma_d))$ according to the three cases.
\begin{enumerate}
    \item \emph{Leaves of strategy tree}. If $(s,d)\in E$ and $t\ge 2$, then we reach a pair of leaves in a strategy tree. Thus, it returns the sum of values of these two leaves and their external nodes as long as the memory limits of $s$ and $d$ are sufficient.
    \item \emph{Non-leaves of strategy tree}. If $(s,d)\notin E$ and $t\ge 2$, then we have two possible cases at time slot $t-1$. 1) Both $s$ and $d$ idle. 2) $s$ and $d$ conduct swapping to consume two entangled links $(s,k)$ and $(k,d)$ to acquire an entangled link $(s,d)$, where $k$ is an intermediate node between $s$ and $d$ on path $p$.
    It examines the values of $s$ and $d$ plus the value of every case and then returns the minimum if the memory limits of $s$ and $d$ are sufficient.
    \item \emph{Wrong time or no capacity}. If the time slot $t\le 1$, it is impossible to have an entangled link $(s,d)$ at time slot $t$. Moreover, if the memory limits of $s$ and $d$ do not meet the requirements, the numerology does not exist. Thus, these cases result in an infeasible solution.
\end{enumerate}%
Based on the above cases, to derive $\hat{g}(t,(s,d),(\sigma_s,\sigma_d))$, we additionally define the function $\hat{h}(t,(s,k,d),(\sigma_s,\sigma_d))$ in Eq. (\ref{eq: dp state case}) to represent the minimum $\sum_{v\in V}\sum_{t'\in T} \theta_m(t',v)\cdot \beta^{t'}_v$ of the entangled pair $(s,d)$ that can be generated by limiting the amount of used memory on node $k$ in either the left or right subproblem.
Specifically, if the amount of memory on node $k$ is limited in the left (or right) subproblem, then we set $\sigma_d=2$ (or $\sigma_s=2$) to check whether the amount of memory of $k$ is at least $2$.
Otherwise, the left (or right) subproblem has no limit, and we set $\sigma_d=1$ (or $\sigma_s=1$).
Then, the recurrence relation of $\hat{g}(t,(s,d),(\sigma_s,\sigma_d))$ can be expressed as Eq. (\ref{eq: dp two cases}), where $\mathcal{I}(s,d)$ denotes the set of intermediate nodes on the path $p$ between $s$ and $d$ (i.e., the nodes on $p$ but excluding $s$ and $d$).
Eqs. (\ref{eq: dp root case})$-$(\ref{eq: dp state case}) can help find the numerology $m$ that minimize $\sum_{v\in V}\sum_{t\in T}\theta_m(t,v)\cdot\beta^{t}_{v}$ for any given $i\in I$ and $p\in P(i)$.

With Eqs. (\ref{eq: dp root case})$-$(\ref{eq: dp state case}), we can efficiently identify the most violated constraint in (\ref{eq: dual length constraint}), satisfying the third property.
As all three properties are satisfied, we can solve the relaxed primal \ac{LP} and obtain a fractional solution $\hat{x}^{ip}_m$ using the combinatorial algorithm in \cite{primal_dual} that incorporates our \ac{DP} separation oracle.

\subsection{The Randomized Rounding}
\label{subsec: the randomized rounding algorithm}

Given a fractional solution $\hat{x}^{ip}_{m}$ of our primal \ac{LP}, we design a randomized rounding algorithm to obtain the solution for {\ProblemNameAbbrN}. 
Specifically, let $\hat{M}_p(i)$ denote the set of numerologies with $\hat{x}^{ip}_m > 0$ for each \ac{SD} pair $i\in I$ and path $p\in P(i)$. 
Then, let $\hat{M}(i) = \hat{M}_{p_1}(i) \cup \hat{M}_{p_2}(i) \cup \dots \cup \hat{M}_{p_n}(i)$, where $\{p_1,p_2,\dots,p_n\} = P(i)$.
It is worth noting that the size of $\hat{M}(i)$ is polynomial since the combinatorial algorithm terminates in polynomial time \cite{primal_dual}.
We then interpret $\hat{x}^{ip}_m$ as the probability of selecting the numerology $m$ for each \ac{SD} pair $i$. 
For example, assume that \ac{SD} pair $i$ has three numerologies $m_1$, $m_2$, and $m_3$ in $\hat{M}(i)$, and they are $\hat{x}^{ip_1}_{m_1} = 0.1$, $\hat{x}^{ip_1}_{m_2} = 0.2$, and $\hat{x}^{ip_2}_{m_3} = 0.3$, respectively. 
Thus, the probabilities of selecting $m_1$, $m_2$, and $m_3$ are $0.1$, $0.2$, and $0.3$, respectively, while the probability of not selecting any numerology for \ac{SD} pair $i$ is $0.4$.

\subsection{Bi-Criteria Approximation Ratio and Time Complexity}
\label{subsec: the analysis of randomized rounding algorithm}

We first analyze the feasibility and the relaxation bound of Eqs. (\ref{eq: only select one constraint}) and (\ref{eq: memory constraint}) after randomized rounding, and the time complexity in Lemmas \ref{lemma: demand violet probability}, \ref{lemma: memory violet probability}, and \ref{lemma: time complexity}, respectively. We then derive the bi-criteria approximation ratio in Theorem \ref{theo: QR ratio}. To this end, we use the Chernoff bound in Theorem \ref{chernoff bound} as follows.
\begin{theorem} [Chernoff bound \cite{chernoff_bound}] \label{chernoff bound}
    There is a set of $n$ independent random variables $x_1,...,x_n$, where $x_i \in [0, 1]$ for each $i\in[1,n]$. Let $X=\sum_{i=1}^n x_i$ and $\mu = \mathbb{E}{[X]}$. Then,
    \begin{equation}
        \text{Pr}\left[\sum_{i=1}^n x_i \ge (1+\epsilon)\mu\right] \le e^{\frac{-\epsilon^2\mu}{2+\epsilon}}.
    \end{equation}
\end{theorem}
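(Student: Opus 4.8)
The plan is to prove the bound by the exponential-moment (Chernoff--Cram\'er) method, which converts a tail event into a statement about the moment generating function. First I would fix any $s>0$ and apply Markov's inequality to the nonnegative random variable $e^{sX}$, obtaining $\Pr[X \ge (1+\epsilon)\mu] = \Pr[e^{sX} \ge e^{s(1+\epsilon)\mu}] \le e^{-s(1+\epsilon)\mu}\,\mathbb{E}[e^{sX}]$. Since $x_1,\dots,x_n$ are independent, the moment generating function factorizes as $\mathbb{E}[e^{sX}] = \prod_{i=1}^n \mathbb{E}[e^{sx_i}]$, which reduces the whole problem to controlling a single-variable factor.

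Next I would bound each factor using only the hypothesis $x_i \in [0,1]$. By convexity of $z \mapsto e^{sz}$ on $[0,1]$, the chord inequality gives the pointwise bound $e^{sx_i} \le 1 + (e^s-1)x_i$; taking expectations and then applying $1+z \le e^z$ yields $\mathbb{E}[e^{sx_i}] \le 1 + (e^s-1)\mathbb{E}[x_i] \le e^{(e^s-1)\mathbb{E}[x_i]}$. Multiplying over $i$ and using $\mu = \sum_i \mathbb{E}[x_i]$ collapses the product to $\mathbb{E}[e^{sX}] \le e^{(e^s-1)\mu}$, so that $\Pr[X \ge (1+\epsilon)\mu] \le e^{\mu[(e^s-1) - s(1+\epsilon)]}$. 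I would then choose $s$ to minimize the exponent; differentiating with respect to $s$ shows the optimum is $s = \ln(1+\epsilon)$, producing the classical form $\Pr[X \ge (1+\epsilon)\mu] \le e^{\mu[\epsilon - (1+\epsilon)\ln(1+\epsilon)]}$.

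The remaining task, and the step I expect to be the main obstacle, is the purely analytic replacement of this tight-but-awkward exponent by the stated cleaner bound, i.e.\ showing $\epsilon - (1+\epsilon)\ln(1+\epsilon) \le -\frac{\epsilon^2}{2+\epsilon}$ for all $\epsilon \ge 0$, equivalently $(1+\epsilon)\ln(1+\epsilon) \ge \epsilon + \frac{\epsilon^2}{2+\epsilon}$. I would prove this by defining $f(\epsilon) = (1+\epsilon)\ln(1+\epsilon) - \epsilon - \frac{\epsilon^2}{2+\epsilon}$ and verifying $f(0)=0$ and $f'(0)=0$, then computing $f''(\epsilon) = \frac{1}{1+\epsilon} - \frac{8}{(2+\epsilon)^3}$. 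The sign condition $f''(\epsilon)\ge 0$ reduces to $(2+\epsilon)^3 \ge 8(1+\epsilon)$, which expands to $\epsilon(4 + 6\epsilon + \epsilon^2)\ge 0$ and therefore holds for every $\epsilon \ge 0$. Convexity then forces $f'\ge 0$, hence $f\ge 0$, closing the chain. Substituting this inequality into the optimized exponent gives exactly $\Pr[\sum_{i=1}^n x_i \ge (1+\epsilon)\mu] \le e^{-\epsilon^2\mu/(2+\epsilon)}$, and the theorem follows.
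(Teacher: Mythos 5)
Your proof is correct: the paper itself does not prove this statement---it is quoted as a known result with a citation---and your argument is precisely the standard Chernoff--Cram\'er derivation found in the cited literature (Markov's inequality applied to $e^{sX}$, factorization by independence, the chord bound $e^{sx}\le 1+(e^s-1)x$ on $[0,1]$, optimization at $s=\ln(1+\epsilon)$). The final analytic step also checks out: $f''(\epsilon)=\frac{1}{1+\epsilon}-\frac{8}{(2+\epsilon)^3}$ is nonnegative since $(2+\epsilon)^3-8(1+\epsilon)=\epsilon(4+6\epsilon+\epsilon^2)\ge 0$, and together with $f(0)=f'(0)=0$ this yields $\epsilon-(1+\epsilon)\ln(1+\epsilon)\le -\frac{\epsilon^2}{2+\epsilon}$, exactly the weakening needed to pass from the tight exponent to the stated bound.
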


\begin{lemma}
    \label{lemma: demand violet probability}
     The {\AlgoNameAbbrtwo} will satisfy constraint (\ref{eq: only select one constraint}).
\end{lemma}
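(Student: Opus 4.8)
The plan is to show that the randomized rounding procedure in Section~\ref{subsec: the randomized rounding algorithm} never selects more than one numerology per \ac{SD} pair, which is precisely what constraint (\ref{eq: only select one constraint}) demands. First I would recall the key structural fact established just above: for each \ac{SD} pair $i\in I$, the fractional solution induces a probability distribution over the candidate numerologies in $\hat{M}(i)$, where numerology $m$ (associated with some path $p\in P(i)$) is selected with probability $\hat{x}^{ip}_m$. The rounding is a single categorical draw for each $i$.

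The central observation I would invoke is that the combinatorial algorithm solves the relaxed primal \ac{LP}, so the fractional solution $\hat{x}^{ip}_m$ is itself \ac{LP}-feasible and therefore satisfies constraint (\ref{eq: only select one constraint}) in its fractional form, namely $\sum_{p\in P(i)} \sum_{m\in M_p(i)} \hat{x}^{ip}_m \le 1$ for every $i\in I$. This means the total selection probability assigned across all numerologies of a single \ac{SD} pair is at most $1$, leaving a residual probability of at least $0$ of selecting nothing for $i$ (the ``not selecting any numerology'' outcome illustrated in the worked example with residual mass $0.4$). Because these probabilities are valid (nonnegative and summing to at most one), the rounding for each $i$ can be realized as drawing at most one numerology from $\hat{M}(i)$: I would model it as partitioning the unit interval into subintervals of lengths $\hat{x}^{ip}_m$ plus a leftover subinterval for the empty choice, then drawing a single uniform random point. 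At most one subinterval is hit, so at most one numerology is chosen for pair $i$.

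Consequently, for every \ac{SD} pair $i\in I$, the rounded integral solution has $\sum_{p\in P(i)}\sum_{m\in M_p(i)} x^{ip}_m \le 1$ \emph{deterministically}, so constraint (\ref{eq: only select one constraint}) holds with certainty rather than merely in expectation. I would emphasize this deterministic guarantee explicitly, since it contrasts with the memory constraint (\ref{eq: memory constraint}), whose violation is only bounded in probability via the Chernoff bound of Theorem~\ref{chernoff bound} in the subsequent lemmas. The argument does not require any concentration inequality; it rests entirely on the feasibility of the fractional solution and the fact that each pair is rounded by an independent single draw.

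The main obstacle, if any, is conceptual rather than technical: one must confirm that the rounding procedure genuinely performs \emph{one} draw per \ac{SD} pair (a categorical choice over $\hat{M}(i)$) rather than independent Bernoulli draws per numerology, since the latter could select two numerologies for the same $i$ and violate (\ref{eq: only select one constraint}). I would therefore make the single-draw interpretation precise at the outset, tying it directly to the ``probability of not selecting any numerology'' phrasing in Section~\ref{subsec: the randomized rounding algorithm}, after which the lemma follows immediately from the \ac{LP}-feasibility of $\hat{x}^{ip}_m$.
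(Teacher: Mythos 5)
Your proposal is correct and follows essentially the same route as the paper: the paper's (one-line) proof likewise rests on the fact that the randomized rounding of Section~\ref{subsec: the randomized rounding algorithm} makes a single categorical draw over $\hat{M}(i)$ per \ac{SD} pair $i$, so at most one numerology is selected deterministically. Your additional observations---that \ac{LP}-feasibility of $\hat{x}^{ip}_m$ guarantees $\sum_{p\in P(i)}\sum_{m\in M_p(i)}\hat{x}^{ip}_m\le 1$ (making the distribution well-defined), and that the guarantee is deterministic rather than probabilistic, unlike Lemma~\ref{lemma: memory violet probability}---are accurate elaborations of details the paper leaves implicit.
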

\begin{proof}
The {\AlgoNameAbbrtwo} selects at most one numerology from the union of all sets $M_p(i)$ for each \ac{SD} pair $i$ using the randomized rounding in Section \ref{subsec: the randomized rounding algorithm}. Thus, the lemma holds.
\end{proof}

\begin{lemma}
    \label{lemma: memory violet probability}
    The probability that the {\AlgoNameAbbrtwo} relaxes constraint (\ref{eq: memory constraint}) for any node $v$ at any time slot $t$ by more than a factor of $(1 + 4 \ln(|V|\cdot|T|))$ is at most $\frac{1}{|V|^2\cdot|T|^2}$, which is negligible.
\end{lemma}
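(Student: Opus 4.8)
The plan is to apply the Chernoff bound of Theorem~\ref{chernoff bound} to the random memory load induced by the randomized rounding of Section~\ref{subsec: the randomized rounding algorithm}. Fix a node $v\in V$ and a time slot $t\in T$, and let $Y^t_v=\sum_{i\in I}\sum_{p\in P(i)}\sum_{m\in M_p(i)}\theta_m(t,v)\cdot X^{ip}_m$ denote the memory occupied on $v$ at $t$ after rounding, where $X^{ip}_m$ is the indicator that numerology $m$ is selected for \ac{SD} pair $i$ on path $p$. By the rounding rule each \ac{SD} pair $i$ independently selects at most one numerology (cf. Lemma~\ref{lemma: demand violet probability}), so writing $Z_i=\sum_{p}\sum_{m}\theta_m(t,v)\cdot X^{ip}_m$ yields \emph{independent} random variables $\{Z_i\}_{i\in I}$, each taking a value in $\{0,1,2\}$ (since $\theta_m(t,v)\in\{0,1,2\}$ by Definition~\ref{defi: numerology}), with $Y^t_v=\sum_{i\in I}Z_i$.

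First I would compute the expectation. Because numerology $m$ is selected for pair $i$ with probability $\hat{x}^{ip}_m$, linearity gives $\mathbb{E}[Y^t_v]=\sum_{i,p,m}\theta_m(t,v)\cdot\hat{x}^{ip}_m$, which is at most $c^t(v)$ by feasibility of the fractional solution $\hat{x}$ for constraint~(\ref{eq: memory constraint}). Since Theorem~\ref{chernoff bound} requires summands in $[0,1]$ whereas each $Z_i$ lies in $[0,2]$, the next step is to rescale: set $X=\tfrac12 Y^t_v=\sum_{i\in I}\tfrac12 Z_i$ so that each $\tfrac12 Z_i\in[0,1]$, and let $\mu=\mathbb{E}[X]=\tfrac12\mathbb{E}[Y^t_v]\le\tfrac12 c^t(v)$. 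The target event ``$Y^t_v$ exceeds $c^t(v)$ by more than a factor $1+4\ln(|V|\cdot|T|)$'' is then exactly $\{X\ge(1+4\ln(|V|\cdot|T|))\,c^t(v)/2\}$.

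The key step is the choice of the deviation parameter. Because the threshold $(1+4\ln(|V|\cdot|T|))\,c^t(v)/2$ is at least $(1+4\ln(|V|\cdot|T|))\,\mu$ (using $\mu\le c^t(v)/2$), I would invoke Theorem~\ref{chernoff bound} with a multiplier $\epsilon\ge 4\ln(|V|\cdot|T|)$; since the Chernoff exponent $e^{-\epsilon^2\mu/(2+\epsilon)}$ is monotone in the mean, the bound remains valid when the true mean is replaced by its upper bound $c^t(v)/2$ in the exponent. The goal is then to show that, after substituting $\epsilon=4\ln(|V|\cdot|T|)$ and the worst-case mean $c^t(v)/2$ and using $c^t(v)\ge 1$, the exponent dominates $2\ln(|V|\cdot|T|)$, which delivers the claimed failure probability $e^{-2\ln(|V|\cdot|T|)}=\tfrac{1}{|V|^2\cdot|T|^2}$.

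The main obstacle I anticipate is the \emph{small-mean regime}: unlike a textbook rounding argument where the capacity (hence $\mu$) is large and concentration is automatic, here $c^t(v)$ may be as small as $1$, so the logarithmic blow-up factor $4\ln(|V|\cdot|T|)$ is essential and the inequality is close to tight. Handling this cleanly requires (i) using the \emph{upper bound} on the mean inside the Chernoff exponent rather than the exact mean, and (ii) absorbing the factor-$2$ loss coming from $\theta_m(t,v)\in\{0,1,2\}$ into the constant. Verifying that the constant $4$ suffices uniformly over all admissible $c^t(v)\ge 1$ (the inequality being tightest precisely at the smallest capacities) is the delicate part of the calculation; everything else is routine algebra on the Chernoff exponent.
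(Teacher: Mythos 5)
Your overall architecture (per-pair independent rounding variables, mean bounded by $c^t(v)$ via fractional feasibility of constraint (\ref{eq: memory constraint}), then Theorem \ref{chernoff bound}) matches the paper, but your normalization step contains a genuine gap, and it is exactly the point you flagged as ``delicate.'' You rescale by $\tfrac12$ to force the summands into $[0,1]$, which halves the padded mean to $\mu_0 = c^t(v)/2$. Writing $N = |V|\cdot|T|$ and taking the worst case $c^t(v)=1$, your Chernoff exponent with $\epsilon = 4\ln N$ becomes $\epsilon^2\mu_0/(2+\epsilon) = 8\ln^2 N/(2+4\ln N)$, and the inequality $8\ln^2 N/(2+4\ln N) \ge 2\ln N$ reduces to $0 \ge 4\ln N$, which is false for every $N\ge 2$. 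So your per-$(v,t)$ failure probability is strictly worse than $N^{-2}$, and the constant $4$ does \emph{not} suffice uniformly over $c^t(v)\ge 1$ under your rescaling; moreover, the paper's version of the claim is established via a union bound over all $N$ node--slot pairs, which requires the exponent to reach $3\ln N$ (this holds for $N\ge 5$ in the paper's calculation), a target your bound misses by an even wider margin. No choice of algebra rescues this, because the blow-up factor $1+4\ln N$ is fixed by the lemma statement.

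The missing idea is structural rather than computational: the paper normalizes each variable by $c^t(v)$ itself, i.e., works with $z^i_{t,v}/c^t(v)$, and observes that these lie in $[0,1]$ \emph{not} because $\theta_m(t,v)\le 1$ (it can be $2$) but because the \ac{DP} separation oracle in Eq. (\ref{eq: dp two cases}) only ever produces numerologies respecting the memory limits --- whenever $\theta_m(t,v)=2$ the capacity checks guarantee $c^t(v)\ge 2$. Hence the normalized sum has mean at most $1$, the threshold $1+4\ln N$ is a $(1+\epsilon)$-multiple of the (padded) mean $1$, and the full exponent $16\ln^2 N/(2+4\ln N)\ge 3\ln N$ is available, with no factor-$2$ loss in the small-capacity regime. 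Your incidental claim that the Chernoff bound ``remains valid when the true mean is replaced by its upper bound in the exponent'' is correct for the upper tail (via the standard padding/MGF argument), so that step is fine; the failure is solely in dividing by $2$ instead of by $c^t(v)$, which discards the oracle's capacity-awareness precisely where the bound is tightest.
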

\begin{proof}
For each node $v$ at time slot $t$, we define a random variable $z^{i}_{t,v}$ that denotes the amount of memory occupied on node $v$ at time slot $t$ for each \ac{SD} pair $i$. Note that a quantum node at time slot $t$ may require $\theta_m(t, v)$ units of memory to perform numerology $m$. Thus, $z^{i}_{t,v}$ can be expressed as:
\begin{align} \notag
\displaystyle z^{i}_{t,v} &= 
    \begin{cases}
        \theta_m(t, v) & \text{with probability} \displaystyle \ \hat{x}^{ip}_m;\\
        0 & \text{otherwise}.
    \end{cases}
\end{align}
Note that their sum $Z_{t,v}= \sum_{i\in I} z^{i}_{t,v}$ is exactly the amount of memory needed on $v$ at time slot $t$ after rounding.
Then, we derive the upper bound of the expectation of the amount of memory occupied on node $v$ at time slot $t$ as follows:
\begin{align}
    \mathbb{E}[Z_{t,v}] & = \mathbb{E}\left[\sum_{i\in I}z^{i}_{t,v}\right] = \sum_{i\in I}\mathbb{E}[z^{i}_{t,v}] \notag \\
    & = \sum_{i \in I}\bigg(\sum_{p\in P(i)} \sum_{m\in M_p(i)}\theta_m(t,v) \cdot \hat{x}^{ip}_m\bigg) 
    \le  c^t(v). \notag
\end{align}
Note that the last inequality directly follows the memory limit constraint (\ref{eq: memory constraint}).
Afterward, we can prove the lemma by Chernoff bound as follows:
\begin{align}
    &~ \Pr(\exists v\in V, \exists t\in T: Z_{t,v} \ge (1+4\ln(|V|\cdot|T|))\cdot c^t(v)) \notag\\
    \le &~ \sum_{v \in V}\sum_{t \in T}\Pr(\frac{Z_{t,v}}{c^t(v)} \ge 1 + 4\ln(|V|\cdot|T|)) \notag \\
    \le &~ \sum_{v \in V}\sum_{t \in T}\Pr(\sum_{i\in I}\frac{z^{i}_{t,v}}{c^t(v)} \ge 1 + 4 \ln(|V|\cdot|T|)) \notag \\
    \le &~ |V|\cdot |T|\cdot e^{\frac{-(4\ln(|V|\cdot|T|))^2}{2+ 4\ln(|V|\cdot|T|)}}
    \le |V|^{-2}\cdot |T|^{-2}.
    \notag
\end{align}
Note that $\frac{z^{i}_{t,v}}{c^t(v)}\le 1$ since the separation oracle adopts Eq. (\ref{eq: dp two cases}) to examine the memory limit, meeting the condition of the Chernoff bound (i.e., every variable $\frac{z^{i}_{t,v}}{c^t(v)}$ ranges from $0$ to $1$).
Besides, the last inequality holds when $|V|\cdot|T|\ge 5$.
\end{proof}

\begin{lemma} \label{lemma: time complexity}
    The {\AlgoNameAbbrtwo} can be executed in polynomial time.
\end{lemma}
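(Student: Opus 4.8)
The plan is to account for the running time of each stage of \AlgoNameAbbrtwo{} separately and argue that each is polynomial, so that their composition is polynomial. The four stages are: (i) the combinatorial primal-dual solver of \cite{primal_dual} that uses the separation oracle, (ii) the \ac{DP} separation oracle itself defined by Eqs.~(\ref{eq: dp root case})$-$(\ref{eq: dp state case}), (iii) the randomized rounding of Section~\ref{subsec: the randomized rounding algorithm}, and (iv) the bookkeeping to build $\hat{M}(i)$ for each \ac{SD} pair. The key point I would emphasize first is that, although the primal \ac{LP} has exponentially many variables $x^{ip}_m$ (one per feasible numerology), the dual \ac{LP}~(\ref{eq: dual objective})$-$(\ref{eq: dual beta constraint}) has only $|I| + |V|\cdot|T|$ variables, which is polynomial; by the ellipsoid-style argument of \cite{primal_dual}, a polynomial-time separation oracle for the dual suffices to obtain a near-optimum fractional primal solution in polynomial time, and in particular the solver makes only polynomially many oracle calls and produces a support $\hat{M}(i)$ of polynomial size.

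The main work is bounding the \ac{DP} oracle. First I would count the distinct states. The table $\hat{g}(t,(s,d),(\sigma_s,\sigma_d))$ is indexed by a time slot $t\in T$, an ordered pair $(s,d)$ of nodes lying on the path $p$, and $(\sigma_s,\sigma_d)\in\{1,2\}^2$; since a path has at most $|V|$ nodes, there are at most $O(|T|\cdot|V|^2\cdot 4)=O(|T|\cdot|V|^2)$ entries for $\hat{g}$, and similarly the auxiliary table $\hat{h}(t,(s,k,d),(\sigma_s,\sigma_d))$ has $O(|T|\cdot|V|^3)$ entries because of the extra intermediate index $k$. Evaluating one entry of $\hat{g}$ takes a minimization over $k\in\mathcal{I}(s,d)$, i.e.\ $O(|V|)$ work plus constant-time lookups, so the full oracle computation for a single pair and path is $O(|T|\cdot|V|^4)$, which is polynomial. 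Backtracking through the filled tables to recover the minimizing numerology adds only a polynomial overhead.

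I would then assemble these bounds: the oracle is invoked over all $i\in I$ and all $p\in P(i)$, and both $|I|$ and $|P(i)|$ are polynomial by the problem setup in Definition~\ref{defi: problem}, so each oracle sweep is polynomial; the solver of \cite{primal_dual} makes polynomially many such sweeps; the randomized rounding in stage~(iii) simply draws one sample per \ac{SD} pair from a distribution over the polynomial-size set $\hat{M}(i)$, costing $O(|I|\cdot\max_i|\hat{M}(i)|)$. Summing the polynomial costs of all stages yields a polynomial overall running time, and the lemma follows. The step I expect to be the main obstacle is justifying that the number of oracle calls and hence $|\hat{M}(i)|$ stays polynomial: this does not follow from the recurrence by itself but relies on the termination guarantee of the combinatorial algorithm in \cite{primal_dual}, so I would state that dependence explicitly rather than re-deriving it.
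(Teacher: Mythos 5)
Your proposal is correct and follows essentially the same route as the paper's proof: both invoke the termination guarantee of \cite{primal_dual} to bound the number of separation-oracle calls (the paper states this explicitly as $O(\omega^{-2}\eta_1\log\eta_1)$ with $\eta_1 = |V|\cdot|T|+|I|$) and then bound the DP oracle by counting table entries, with the rounding stage trivially polynomial over the polynomial-size support. The only deviation is that your per-call DP bound $O(|T|\cdot|V|^4)$ is slightly loose---with $\hat{h}$ memoized at $O(1)$ per entry, charging the $O(|V|)$ minimization over $k$ to each of the $O(|T|\cdot|V|^2)$ entries of $\hat{g}$ yields $O(|T|\cdot|V|^3)$ per pair and path, matching the paper's $\eta_2 = O(|V|^3\cdot|T|\cdot|I|)$ per oracle call---but a looser polynomial bound does not affect the lemma's conclusion.
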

\begin{proof}
According to \cite{primal_dual}, the combinatorial algorithm executes $O(\omega^{-2} \eta_1\log \eta_1)$ times of separation oracle, where $\omega$ is a constant error bound of primal \ac{LP} solution and $\eta_1$ is the number of constraints (except for the non-negativity constraints), i.e., $\eta_1 = (|V|\cdot|T| + |I|)$. 
Thus, it outputs $O(\omega^{-2} \eta_1 \log \eta_1)$ numerologies for randomized rounding.
In addition, each time of separation oracle takes $\eta_2 = O( |V|^3 \cdot |T| \cdot |I|)$.
Then, the combinatorial algorithm takes $O(\omega^{-2}\eta_1 \eta_2 \log \eta_1)$, and thus
the overall time complexity of {\AlgoNameAbbrtwo} is $O(\omega^{-2} \eta_1 \eta_2 \log \eta_1)$. 
\end{proof}

\begin{theorem}
    \label{theo: QR ratio}
    \color{black}
    The bi-criteria approximation ratio of the {\AlgoNameAbbrtwo} for the {\ProblemNameAbbrN} is $(O(\frac{F_{\max}}{F_{\min}}), O(\log(|V|\cdot|T|))$.
\end{theorem}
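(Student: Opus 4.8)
The plan is to decouple the two criteria: first establish the $O(F_{\max}/F_{\min})$ bound on the objective, then invoke the already-proven memory lemma for the $O(\log(|V|\cdot|T|))$ relaxation, and finally argue that both guarantees hold on a single realization. The crux is that {\AlgoNameAbbrtwo} optimizes the \emph{fidelity-free} surrogate objective (\ref{eq: revised primal objective}) rather than the true objective (\ref{eq: primal objective}), so I must control the gap introduced by stripping and reinserting $F(m)$. Since every feasible numerology obeys $F_{\min}\le F(m)\le F_{\max}$, any feasible $x$ satisfies $F_{\min}\sum_{i,p,m}\Pr(p)\,x^{ip}_m \le \sum_{i,p,m}\Pr(p)\,F(m)\,x^{ip}_m \le F_{\max}\sum_{i,p,m}\Pr(p)\,x^{ip}_m$. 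Writing $\mathrm{OPT}$ for the optimum of the true problem and $\mathrm{OPT}'$ for that of the surrogate (same feasible region as the {\ProblemNameAbbrN}, only a different objective), feeding the true optimum into the upper bound gives $\mathrm{OPT}\le F_{\max}\cdot\mathrm{OPT}'$, because the true optimizer is also feasible for the surrogate.

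Next I would trace the value through the algorithm. The combinatorial solver of Section \ref{subsec: the combinatorial algorithm} returns a fractional $\hat x$ whose surrogate value $\sum_{i,p,m}\Pr(p)\,\hat x^{ip}_m$ is within a constant factor $(1-\omega)$ of the relaxed-LP optimum, hence at least $(1-\omega)\,\mathrm{OPT}'$ (the LP optimum dominates the integral $\mathrm{OPT}'$). The rounding of Section \ref{subsec: the randomized rounding algorithm} selects $m$ with probability $\hat x^{ip}_m$, so for the rounded indicator $X^{ip}_m$ we have $\mathbb{E}[X^{ip}_m]=\hat x^{ip}_m$, giving an expected \emph{true} objective $\mathbb{E}\big[\sum_{i,p,m}\Pr(p)\,F(m)\,X^{ip}_m\big]=\sum_{i,p,m}\Pr(p)\,F(m)\,\hat x^{ip}_m \ge F_{\min}\sum_{i,p,m}\Pr(p)\,\hat x^{ip}_m \ge F_{\min}(1-\omega)\,\mathrm{OPT}'$. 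Chaining with $\mathrm{OPT}'\ge \mathrm{OPT}/F_{\max}$ yields $\mathbb{E}[\text{alg}]\ge (1-\omega)\tfrac{F_{\min}}{F_{\max}}\,\mathrm{OPT}$, i.e.\ an expected approximation ratio of $O(F_{\max}/F_{\min})$.

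For the second criterion I would simply invoke Lemma \ref{lemma: memory violet probability}: with probability at least $1-\frac{1}{|V|^2\cdot|T|^2}$ no node and time slot violates its memory limit by more than the factor $1+4\ln(|V|\cdot|T|)=O(\log(|V|\cdot|T|))$, while Lemma \ref{lemma: demand violet probability} supplies constraint (\ref{eq: only select one constraint}) for free. The main obstacle is that the objective guarantee is in \emph{expectation} whereas the memory guarantee holds only with \emph{high probability}, so I must certify that a single draw attains both. I would handle this by conditioning on the good memory event $\bar B$ with $\Pr(B)\le |V|^{-2}|T|^{-2}$: since the rounded objective is bounded and $\Pr(B)$ is negligible, $\mathbb{E}[\text{alg}\mid\bar B]$ still matches the unconditional bound up to a $(1-o(1))$ factor, so by the probabilistic method some realization achieves objective $\Omega(\tfrac{F_{\min}}{F_{\max}})\mathrm{OPT}$ while relaxing memory by only $O(\log(|V|\cdot|T|))$. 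Combined with the polynomial running time of Lemma \ref{lemma: time complexity}, this establishes the claimed bi-criteria ratio $(O(\tfrac{F_{\max}}{F_{\min}}),\,O(\log(|V|\cdot|T|)))$.
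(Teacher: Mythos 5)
Your proposal matches the paper's proof essentially step for step: the same $F_{\min}\le F(m)\le F_{\max}$ sandwich giving $\mathrm{OPT}\le F_{\max}\cdot \mathrm{OPT}'$, the same chain through the LP relaxation and the combinatorial solver's constant error bound (the paper writes it as a $\tfrac{1}{1+\omega}$ factor rather than $(1-\omega)$), the same expectation-preserving randomized rounding, and the same invocation of Lemmas \ref{lemma: demand violet probability}, \ref{lemma: memory violet probability}, and \ref{lemma: time complexity} for constraint (\ref{eq: only select one constraint}), the $O(\log(|V|\cdot|T|))$ memory relaxation, and polynomial running time. Your final conditioning argument, certifying that a single realization simultaneously attains the expected-objective and high-probability memory guarantees, is a small refinement that the paper leaves implicit rather than a different route.
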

\begin{proof}
With Lemmas \ref{lemma: demand violet probability}, \ref{lemma: memory violet probability}, and \ref{lemma: time complexity}, we can know that {\AlgoNameAbbrtwo} relaxes the memory limit by a ratio of at most $O(\log(|V|\cdot|T|))$ with high probability and runs in polynomial time.
It suffices to focus on the gap between the optimal solution and the {\AlgoNameAbbrtwo}'s solution.
For ease of presentation, let $OPT$, $OPT^{nF}$, and $OPT^{nF}_{PL}$ be the optimum values of the {\ProblemNameAbbrN} (i.e., the \ac{ILP} (\ref{eq: primal ILP}) with no constraint (\ref{eq: fidelity threshold constraint})), the {\ProblemNameAbbrN} without considering fidelity (i.e., the \ac{ILP} (\ref{eq: revised primal objective}), (\ref{eq: only select one constraint}), (\ref{eq: memory constraint}), (\ref{eq: x range constraint})), and the primal \ac{LP} (\ref{eq: revised primal objective}), (\ref{eq: only select one constraint}), (\ref{eq: memory constraint}), (\ref{eq: primal relaxed x}), respectively.

Clearly, $OPT \le OPT^{nF}\cdot F_{\max}$ because $OPT^{nF}$ has the maximum number of \ac{SD} pairs with an allocated numerology.
Besides, $OPT^{nF} \le OPT^{nF}_{PL}$ since the primal \ac{LP} (\ref{eq: revised primal objective}), (\ref{eq: only select one constraint}), (\ref{eq: memory constraint}), (\ref{eq: primal relaxed x}) is a \ac{LP} relaxation of the \ac{ILP} (\ref{eq: revised primal objective}), (\ref{eq: only select one constraint}), (\ref{eq: memory constraint}), (\ref{eq: x range constraint}).
{\color{black}
In addition, let $\hat{x}^{ip}_m$ denote the solution output by the combinatorial algorithm.
Then, let $\bar{x}^{ip}_{m}$ be the rounded solution.}
Since the expected value of the rounded solution is equal to the optimal solution of the primal \ac{LP}, the expected fidelity sum of the rounded solution can be bound as follows:
\begin{align}
    &~
    \mathbb{E}\left[\sum_{i\in I}\sum_{p\in P(i)}\sum_{m\in M_p(i)} \Pr(p)\cdot F(m)\cdot \bar{x}^{ip}_m \right] \notag \\ 
    \ge &~ \sum_{i\in I}\sum_{p\in P(i)}\sum_{m\in M_p(i)} \mathbb{E}\left[\Pr(p)\cdot F_{\min}\cdot \bar{x}^{ip}_m \right] \notag \\ 
    = &~ F_{\min} \cdot \sum_{i\in I}\sum_{p\in P(i)}\sum_{m\in M_p(i)} \Pr(p)\cdot \hat{x}^{ip}_{m}  \notag \\ 
    = &~ F_{\min} \cdot APP^{nF}_{PL} \ge \frac{F_{\min}}{1 + \omega}\cdot OPT^{nF}_{PL} \notag \\ 
    \ge &~ \frac{F_{\min}}{1 + \omega}\cdot OPT^{nF} \ge \frac{F_{\min}}{F_{\max}} \cdot \frac{OPT}{1+\omega}, \notag 
\end{align}
where $APP^{nF}_{PL}$ is the near-optimal value of primal \ac{LP} from the combinatorial algorithm with a user-defined constant error bound $\omega>0$, and the second inequality holds due to \cite{primal_dual}.
\end{proof}%

\subsection{Heuristic Algorithms for Solution Improvement}
\label{subsec: heuristic algorithm for solution improvement}

\subsubsection{Remedy for the Memory Limit Relaxation}
\label{subsubsec: heuristic algorithm for solution improvement 1}
Since the solution acquired by the {\AlgoNameAbbrtwo} after rounding may relax the memory limit on nodes with a bounded ratio, the following steps are adopted to deal with the memory limit relaxation.
Let $\bar{x}^{ip}_m$ and $\bar{M}_p(i)$ denote the variables in the rounded solution and the set of numerologies with $\bar{x}^{ip}_m>0$ for each $i\in I$ and $p\in P(i)$, respectively. The heuristic has two phases as follows.

In the first phase, for each node $v$ and each time slot $t$ where the memory limit is exceeded, we sort the numerologies occupying $v$'s memory at time slot $t$ (i.e., $\bar{M}(v,t)=\{m~|~ \theta_m(t,v)\cdot \bar{x}^{ip}_m > 0, \forall m\in \bar{M}_p(i), \forall p\in P(i),\forall i\in I\} $) in non-decreasing order of their expected fidelity, and then iteratively remove numerologies in $\bar{M}(v,t)$ based on this order until the memory limit of node $v$ at time slot $t$ is satisfied to make the solution feasible.

In the second phase, we utilize the residual resources as much as possible. 
We iteratively allocate the numerology $m$ for each pair $i$ in non-increasing order of the value of those positive variables $\hat{x}^{ip}_m$ (i.e., the fractional solution of \ac{LP} (\ref{eq: revised primal objective}), (\ref{eq: only select one constraint}), (\ref{eq: memory constraint}), (\ref{eq: primal relaxed x})) until no numerology can be chosen or accommodated by the residual network.
The heuristic is a polynomial-time algorithm since the number of positive variables $\hat{x}^{ip}_m$ is polynomial.

\subsubsection{Extension to Support the {\ProblemNameAbbr}}
\label{subsubsec: heuristic algorithm for solution improvement 2}

We design the {\AlgoNameAbbrtwo} based on the observation of numerology characteristics in the context of small $\tau$ and $|T|$. 
However, as $\tau$ and $|T|$ increase, the {\AlgoNameAbbrtwo} may not guarantee solution feasibility for the {\ProblemNameAbbr}. 
To address this limitation, we extend the {\AlgoNameAbbrtwo} to support the {\ProblemNameAbbr} by introducing a heuristic as follows.
Following the heuristic proposed in Section \ref{subsubsec: heuristic algorithm for solution improvement 1}, we additionally add one step to remove the numerology from $\bar{M}(v,t)$ if its fidelity is less than the threshold $\widehat{F}$ in the first phase.
In the second phase, we only add the numerology $m$ with adequate fidelity for each pair $i$ in non-increasing order of the value of those positive variables $\hat{x}^{ip}_m$ until no $m$ can be chosen or accommodated.
This heuristic can ensure solution feasibility, and later, the numerical results show the modified solution can approach the optimum.

\section{The Design of {\AlgoNameone} ({\AlgoNameAbbrone})}
\label{sec: solution part 1}

Although the {\AlgoNameAbbrtwo} can approximate the optimum when $\tau$ and $|T|$ are small, its performance might decrease when $\tau$ or $|T|$ increases since it neglects fidelity loss.
However, it is non-trivial to find the optimal solution for the {\ProblemNameAbbr} since every \ac{SD} pair could have an exponential number of numerologies for selection. 
Fortunately, the separation oracle in the {\AlgoNameAbbrtwo} serves as inspiration for determining the optimal numerology. Following this inspiration and the intrinsic properties of {\ProblemNameAbbr}, we first design a \ac{DP} algorithm that can find the optimal numerology with the maximum expected fidelity on the selected path $p\in P(i)$ for any given \emph{single} \ac{SD} pair $i$.

Intuitively, we can derive a solution for the {\ProblemNameAbbr} by iteratively adding a new request based on the available resource by invoking the \ac{DP} algorithm until no more request can be satisfied.
Nonetheless, such a greedy algorithm tends to fall into a local optimum trap since it lacks an overview of current resource utilization for load balancing unless a proper index is provided for the greedy algorithm to estimate the efficiency of the current allocation.
Inspired by \cite{greedyIndicator}, we subtly design the \acl{REI} (\acs{REI}) to evaluate a numerology.
With \ac{REI}, our greedy algorithm can mitigate severe network congestion and increase the fidelity sum as {\color{black}$\tau$ or $|T|$ grows.}

\subsection{Numerology with the Max. Exp. Fidelity for Single Request}
\label{subsec: DP}

Similar to Section \ref{subsec: the separation oracle}, we exploit \ac{DP} to iteratively solve a larger subproblem by examining the optimum solution of each smaller subproblem to derive the optimal solution for \ac{ILP} (\ref{eq: primal objective})$-$(\ref{eq: x range constraint}) when the \ac{SD} pair $i$ is single.
Hence, we introduce the recursive function $f(T,(s,d))$ to return the maximum fidelity that can be achieved between node $s$ and node $d$ on a specific path $p$ within $T$ as well as the corresponding numerology.
Similar to Eq. (\ref{eq: dp root case}), $f(T,(s,d))$ requires an additional function $g(t,(s,d),(\sigma_s,\sigma_d))$ to derive its value, as shown in Eq. (\ref{eq: greedy dp root case}).
\begin{align}
f(T,(s,d)) = \max_{t\in T} \big\{g(t,&(s,d),(1,1)) \mid \notag \\
& g(t,(s,d),(1,1)) \ge \widehat{F} \big\}.
\label{eq: greedy dp root case}
\end{align}
Note that if the derived maximum fidelity is less than the fidelity threshold $\widehat{F}$, the corresponding path will not be considered.
Following Eq. (\ref{eq: dp two cases}), to derive $g(t,(s,d),(\sigma_s,\sigma_d))$, we define the function $h(t,(s,k,d),(\sigma_s,\sigma_d))$ in Eq. (\ref{eq: greedy dp state case}), which represents the maximum fidelity of the entangled pair $(s,d)$ that can be generated by limiting the amount of used memory on node $k$ in either the left or right subproblem.
The details are omitted due to the similarity, and the recurrence relation of $g(t,(s,d),(\sigma_s,\sigma_d))$ can be expressed as Eq. (\ref{eq: greedy dp two cases}).
In this way, we can identify the numerology with the maximum fidelity on any specific path for any given single \ac{SD} pair.
Subsequently, it is possible to find the numerology with the maximum expected fidelity among all paths in $P(i)$ since the size of predefined path set $P(i)$ is polynomial. To this end, we calculate $\Pr(p)\cdot F(\hat{m})$ of the numerology $\hat{m}$ returned by Eq. (\ref{eq: greedy dp root case}) for each $p\in P(i)$ and choose the numerology with the highest value.

\begin{table*}[t]
\small
\centering
\begin{minipage}{1\textwidth}
\begin{align} 
    & \hspace{-0.6cm} g(t,(s,d),(\sigma_s,\sigma_d))= 
    \begin{cases}
    F(s,d)
        ~
        & \quad\;\; \mbox{if } \substack{(s,d)\in E, \, t\ge 2, \\ c^t(s) \text{ and } c^{t-1}(s)\ge \sigma_s, \\ c^t(d) \text{ and } c^{t-1}(d)\ge \sigma_d;} \\
    \max \bigg\{
    F^{\tau}\big(g(t-1,(s,d),(\sigma_s,\sigma_d))\big), 
    \displaystyle\max_{k\in\mathcal{I}(s,d)} \big\{h(t-1,(s,k,d),(\sigma_s,\sigma_d))\big\}
    \bigg\}
        ~
        & \quad\;\; \mbox{else if } \substack{t\ge 2, \, c^t(s)\ge \sigma_s, \\ c^t(d)\ge \sigma_d;} \\
    -\infty 
        & \quad\;\; \mbox{otherwise.} 
    \end{cases} \notag
    \\[-7.5ex] \label{eq: greedy dp two cases}
    \\[6ex]
    & \hspace{-0.6cm} h(t,(s,k,d),(\sigma_s,\sigma_d)) = \max \bigg\{ 
    F^{\tau}_{s}\big(g(t,(s,k),(\sigma_s,2)), g(t,(k,d),(1,\sigma_d))\big), \;
    F^{\tau}_{s}\big(g(t,(s,k),(\sigma_s,1)), g(t,(k,d),(2,\sigma_d))\big)
    \bigg\}.
    \label{eq: greedy dp state case}
\end{align}
\begin{align}
    & \hspace{-0.60cm} \bar{f}(T,(s,d)) 
    = \min_{t\in T}\big\{\bar{g}(t,(s,d),(1,1))\big\}.
    \label{eq: greedy 2 dp root case} \\
    & \hspace{-0.60cm} \bar{g}(t,(s,d),(\sigma_s,\sigma_d))= 
    \begin{cases}
    r^t_s+r^t_d + r^{t-1}_s+r^{t-1}_d
        ~
        & \mbox{if } \substack{(s,d)\in E, \, t\ge 2, \\ c^t(s) \text{ and } c^{t-1}(s)\ge \sigma_s, \\ c^t(d) \text{ and } c^{t-1}(d)\ge \sigma_d;} \\
    r^t_s+r^t_d+
    \min \bigg\{\bar{g}(t-1,(s,d),(\sigma_s,\sigma_d)), \displaystyle\min_{k \in \mathcal{I}(s, d)} \big\{\bar{h}(t-1,(s,k,d),(\sigma_s,\sigma_d))\big\}\bigg\}
        ~
        & \mbox{else if } \substack{t\ge 2, \, c^t(s)\ge \sigma_s, \\ c^t(d)\ge \sigma_d;} \\
    \infty 
        & \mbox{otherwise.} 
    \end{cases} \notag
    \\[-7.5ex] \label{eq: greedy 2 dp two cases}
    \\[6ex]
    & \hspace{-0.60cm} \bar{h}(t,(s,k,d),(\sigma_s,\sigma_d)) = \min \bigg\{ 
    \bar{g}(t,(s,k),(\sigma_s,2)) + \bar{g}(t,(k,d),(1,\sigma_d)), \; 
    \bar{g}(t,(s,k),(\sigma_s,1)) + \bar{g}(t,(k,d),(2,\sigma_d))
    \bigg\}.
    \label{eq: greedy 2 dp state case}
\end{align}
\hrule
\end{minipage}
\end{table*}

\subsection{Greedy Algorithm with \ac{DP} Technique}
\label{subsec: Greedy DP}

We can design a greedy algorithm to solve the \ac{ILP} (\ref{eq: primal objective})$-$(\ref{eq: x range constraint}) by utilizing the \ac{DP} algorithm in Section \ref{subsec: DP}.
That is, we iteratively choose the numerology with the maximum expected fidelity by the \ac{DP} algorithm among all \ac{SD} pairs and then allocate the resources to the corresponding \ac{SD} pair until no more numerologies can be chosen.
However, such a naive algorithm may cause congestion.
To remedy the side effect, we design the \ac{REI} to help us choose a numerology while avoiding hot spots.
Specifically, the \ac{REI} is defined as follows:
\begin{align}
    \label{eq: CP}
    \mathfrak{R}(i,m) = \frac{\Pr(p)\cdot F(m)}{\sum_{v\in V}\sum_{t\in T}\frac{\theta_m(t, v)}{c^t(v)}},
\end{align}
where $p\in P(i)$ and $m\in M_p(i)$ for \ac{SD} pair $i$.
Note that the numerator and denominator denote the expected fidelity and resource cost of numerology $m$, respectively.
However, finding the numerology with the maximum $\mathfrak{R}(i,m)$ among all possible numerologies is computationally-intensive.
Thus, {\AlgoNameAbbrone} focuses on two types of candidate numerologies for each \ac{SD} pair $i$ as follows:
1) the numerology with the maximum expected fidelity on each path $p\in P(i)$, denoted by $m^{ip}_1$, and
2) the numerology with the minimum resource cost on each path $p\in P(i)$, denoted by $m^{ip}_2$.
Subsequently, {\AlgoNameAbbrone} selects the numerology with the highest $\mathfrak{R}(i,m)$ from all candidates across all \ac{SD} pairs, i.e., $\max_{i\in I, p\in P(i)}\{\mathfrak{R}(i,m^{ip}_1),\mathfrak{R}(i,m^{ip}_2)\}$.

Specifically, {\AlgoNameAbbrone} invokes the \ac{DP} algorithm in Section \ref{subsec: DP} to determine the $m^{ip}_1$ for each path $p\in P(i)$.
In addition, $m^{ip}_{2}$ for each $p$ can be found using a similar \ac{DP} algorithm designed in Eqs. (\ref{eq: greedy 2 dp root case})$-$(\ref{eq: greedy 2 dp state case}).
These equations describe the recurrence relation to search the numerology with the minimum resource cost, where $r^t_v$ denotes the resource cost of node $v$ at time slot $t$ and is set to $\frac{1}{c^t(v)}$ based on Eq. (\ref{eq: CP}).
In brief, $\bar{f}(T,(s,d))$ finds the numerology that can generate an entangled pair from node $s$ to node $d$ on specific path $p$ with the minimum resource cost within $T$, 
while $\bar{g}(t,(s,d),(\sigma_s,\sigma_d))$ and $\bar{h}(t,(s,k,d),(\sigma_s,\sigma_d))$ are additional functions similar to Eqs. (\ref{eq: greedy dp two cases}) and (\ref{eq: greedy dp state case}).
Note that the numerology will not be considered if the fidelity is less than the fidelity threshold $\widehat{F}$, and the \ac{DP} details are omitted due to the similarity.

Overall, {\AlgoNameAbbrone} iteratively chooses an \ac{SD} pair with the numerology that has the maximum \ac{REI} among all unsaturated \ac{SD} pairs and then allocates the resources asked by that numerology to saturate the \ac{SD} pair until no more \ac{SD} pairs can be served.
Then, we analyze the time complexity of {\AlgoNameAbbrone}. Both the \ac{DP} algorithm Eqs. (\ref{eq: greedy dp root case}) and (\ref{eq: greedy 2 dp root case}) take $O(|V|^3\cdot |T|)$ for each \ac{SD} pair $i\in I$ on specific path $p\in P(i)$. Thus, it needs $O(|V|^3\cdot |T|\cdot |P|)$ for each \ac{SD} pair $i\in I$ calculating for all paths, where $|P|$ denotes the maximum size of path set $P(i)$ between all \ac{SD} pairs. 
Since {\AlgoNameAbbrone} chooses one \ac{SD} pair among all pairs for each round and the number of chosen \ac{SD} pairs is at most $|I|$, the overall time complexity is $O(|V|^3\cdot |T|\cdot |P|\cdot |I|^2)$.


\section{Performance Evaluation} \label{sec: evaluation}

\subsection{Simulation Settings}
The model and \emph{default} parameters are as follows.
The \ac{QN} topology is generated using the Waxman model \cite{Waxman}, with $|V| = 100$ nodes deployed over a $150 \times 300$ km region, and an average edge length of $30$ km.
For clarity, a network topology example is given in Appendix \ref{app: topo} of the supplementary material.
We select $|I| = 50$ random \ac{SD} pairs within a batch of $|T| = 13$ time slots, where each time slot length is $\tau = 2 \text{ ms}$. This setup ensures that the total duration of a batch ($\tau\times|T|$) does not exceed $40\textrm{ ms}$, consistent with recent advances in quantum memory \cite{bradley2019ten}.
The choice of $\tau=2\textrm{ ms}$ is reasonable since a single swapping process requires approximately $1.1 \textrm{ ms}$ \cite{pompili2021realization}.
For each edge, we set $\lambda = 0.045/\textrm{km}$ \cite{sangouard2011quantum} and the entangling time $\mathfrak{T} = 0.25 \textrm{ ms}$ \cite{pompili2021realization}. Thus, the average entangling probability is around $0.9$, with the number of entangling attempts per time slot given by 
$\xi = \lfloor \frac{\tau}{\mathfrak{T}} \rfloor = 8$.
The initial fidelity for each edge is randomly sampled from the range $[0.7,0.98]$ \cite{zhao2022e2e}.
The average memory limit of a node (i.e., $c^t(v)$) is set between $6$ to $14$ units, aligned with \cite{Haldar2024Reducing} ($10$ to $16$ per node).
The swapping probability is set to $0.9$ \cite{shi2020concurrent, zhao2021redundant}.
Following \cite{bradley2019ten}, the parameters in Eq. (\ref{deco formula}) are set as $A = 0.25$, $B = 0.75$, $\mathcal{T} = 40 \text{ ms}$, and $\kappa = 2$. 
The fidelity threshold $\widehat{F}$ is set to $0.5$.
Then, we apply GREEDY \cite{pant2019routing}, Q-CAST \cite{shi2020concurrent}, and REPS \cite{zhao2021redundant} to generate the predefined path set $P(i)$ for each pair $i\in I$. For simplicity, they are denoted by G, Q, and R.
Each result is averaged over $50$ trials.

We compare {\AlgoNameAbbrtwo} and {\AlgoNameAbbrone} with the following methods.
\begin{enumerate}
    \item {\Merge} \cite{sangouard2011quantum} first conducts entangling for any two adjacent nodes until all entangled links on the path are created. After that, it greedily maximizes the number of swapping processes for each time slot until the end-to-end entangled link is built. Thus, {\Merge} has a near-balanced tree structure.
    \item {\Linear} \cite{farahbakhsh2022opportunistic} performs swapping operations one by one, starting from the source towards the destination, after all entangled links on the path are created.
    Thus, the numerology that {\Linear} tends to use is a near-biased tree structure.
    \item {\ASAP} \cite{Haldar2024Reducing} performs swapping as soon as two adjacent entangled links are successfully generated. {\ASAP} needs to bind the resources for re-entangling and swapping until a successful end-to-end entangled link is achieved.
    \item {\UB} is the fractional solution of \ac{LP} (\ref{eq: revised primal objective}), (\ref{eq: only select one constraint}), (\ref{eq: memory constraint}), (\ref{eq: primal relaxed x}), derived by the combinatorial algorithm with the separation oracle in {\AlgoNameAbbrtwo} to know the upper bound of the optimum. Note that it is not necessarily feasible for {\ProblemNameAbbr}.
\end{enumerate}%

\begin{figure}
\centering
    \subfigure[\#Requests vs Exp. Fid. Sum]{\label{fig: 8a}\includegraphics[width=.234\textwidth]{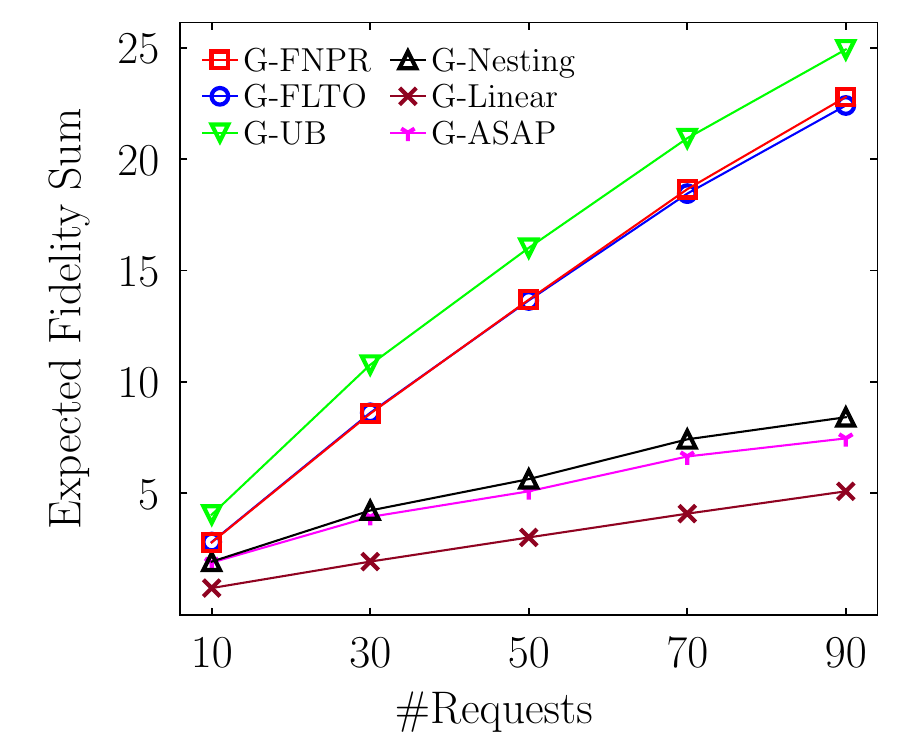}}
    \subfigure[\#Requests vs \#Acc. Requests]{\label{fig: 8b}\includegraphics[width=.234\textwidth]{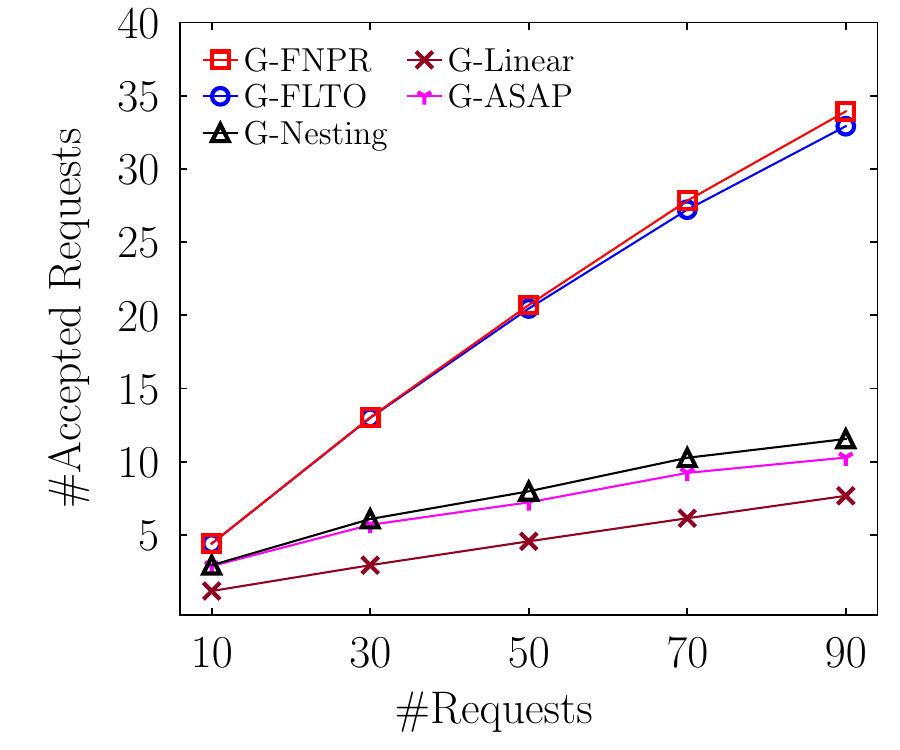}}
    \subfigure[\#Requests vs Exp. Fid. Sum]{\label{fig: 8c}\includegraphics[width=.234\textwidth]{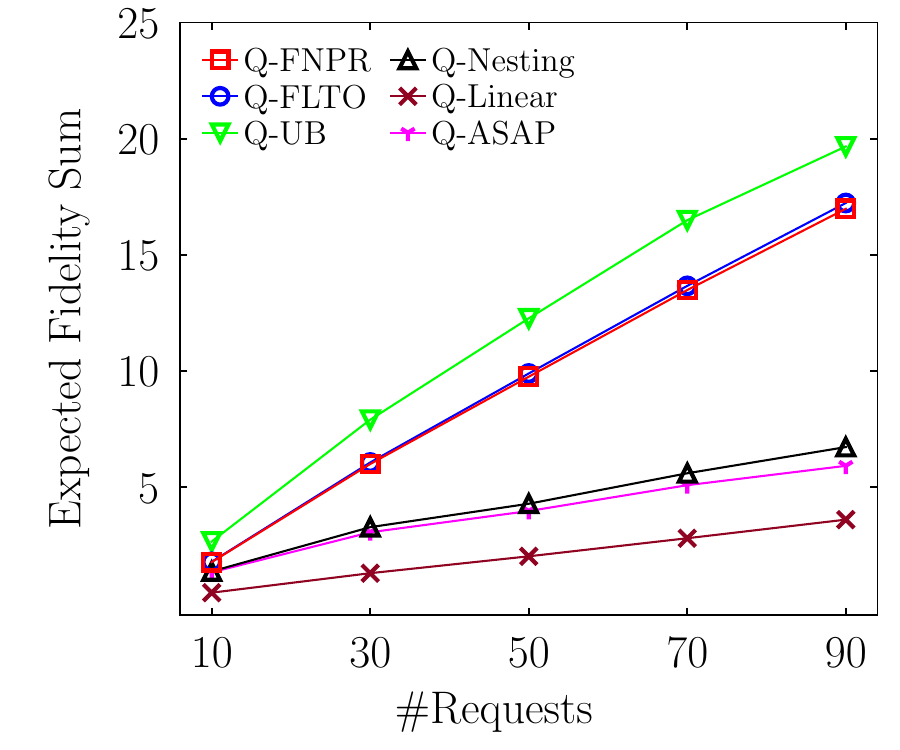}}
    \subfigure[\#Requests vs \#Acc. Requests]{\label{fig: 8d}\includegraphics[width=.234\textwidth]{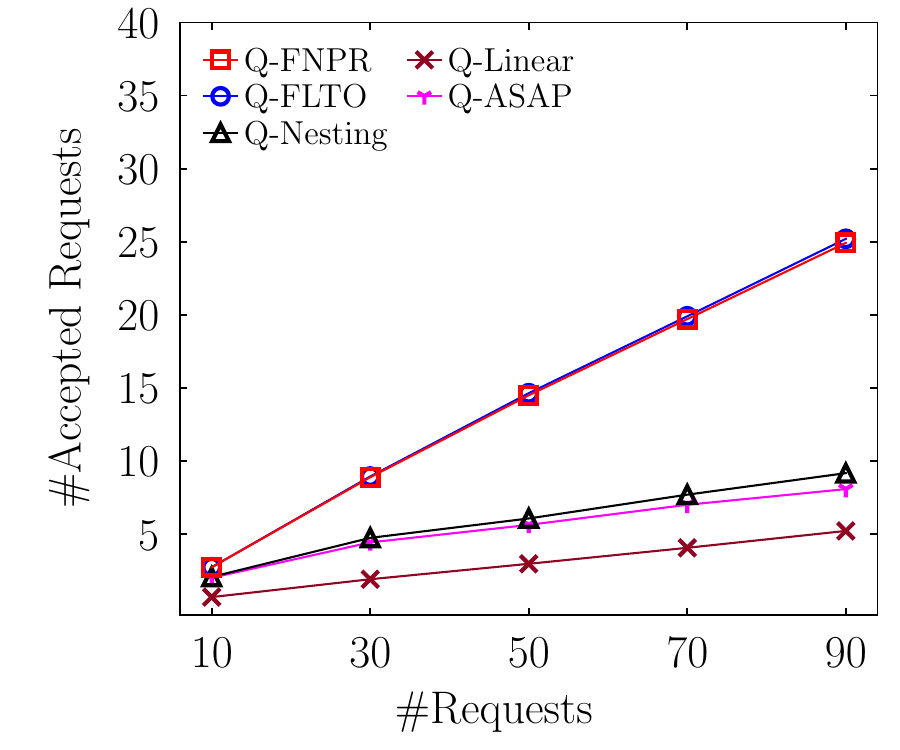}}
    \subfigure[\#Requests vs Exp. Fid. Sum]{\label{fig: 8e}\includegraphics[width=.234\textwidth]{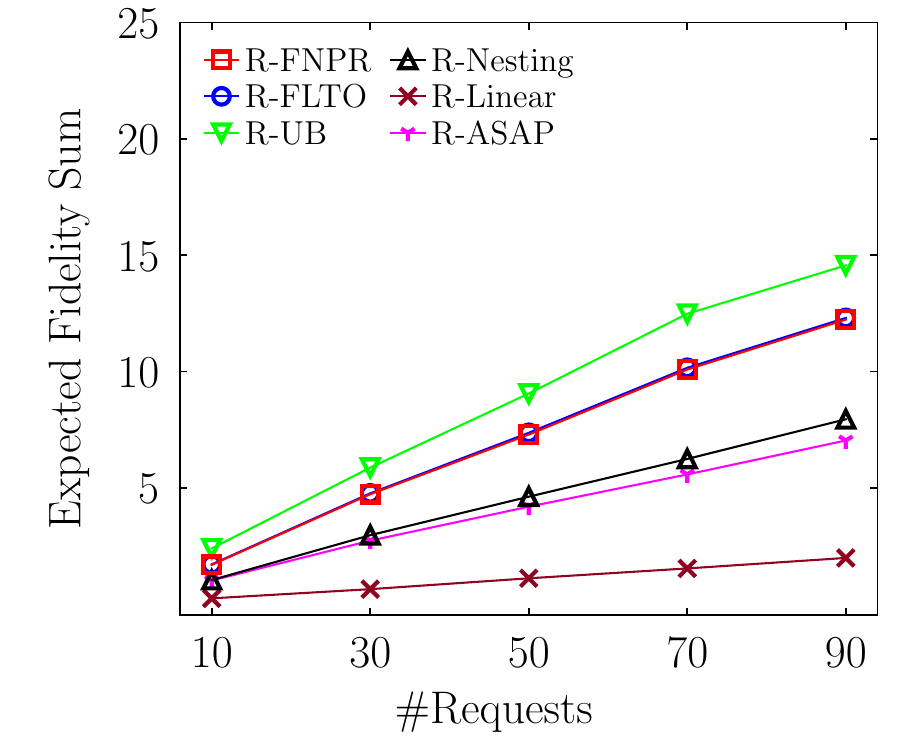}}
    \subfigure[\#Requests vs \#Acc. Requests]{\label{fig: 8f}\includegraphics[width=.234\textwidth]{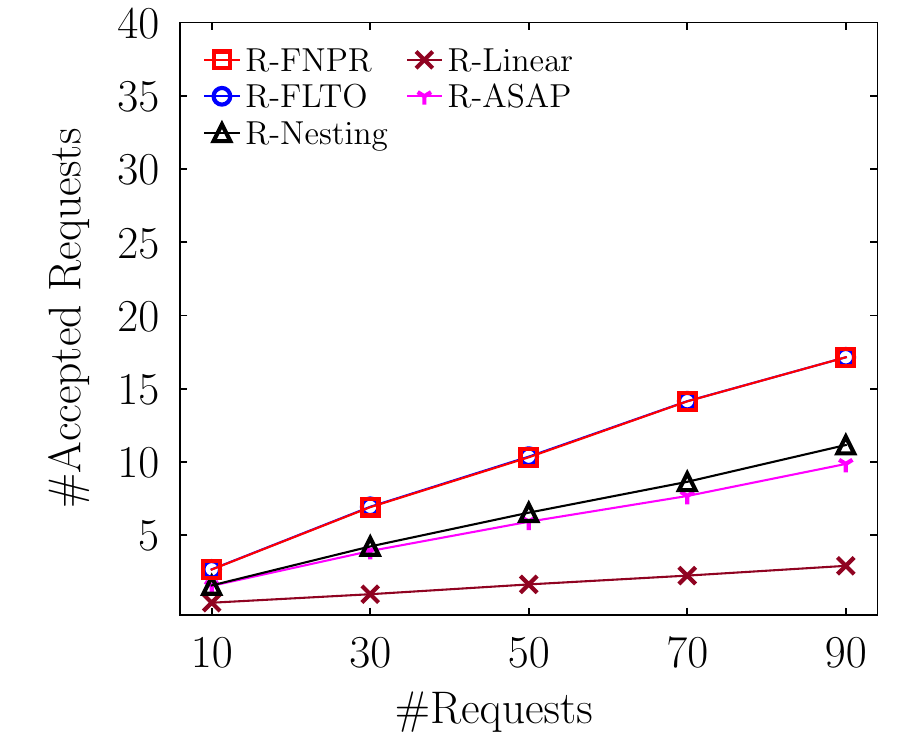}}
    \caption{Effect of different parameters and predefined paths on various metrics.%
    }
\label{fig: experiment result}
\end{figure}
\begin{figure}
\centering
    \subfigure[Swap. Prob. vs Exp. Fid. Sum]{\label{fig: 9a}\includegraphics[width=.234\textwidth]{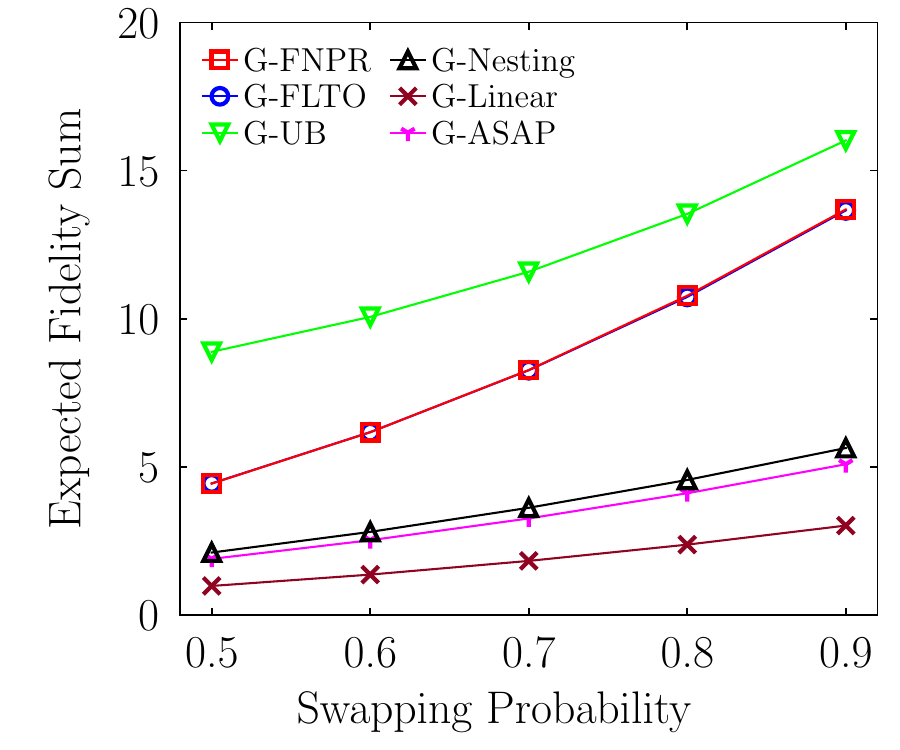}}
    \subfigure[Swap. Prob. vs \#Acc. Requests]{\label{fig: 9b}\includegraphics[width=.234\textwidth]{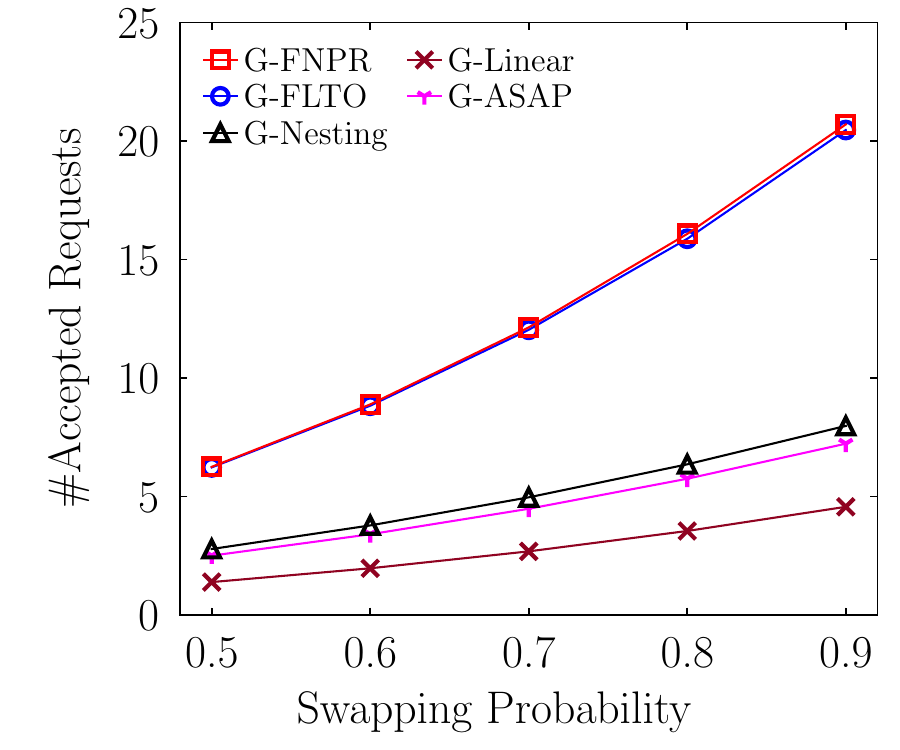}}
    \subfigure[Ent. Time vs Exp. Fid. Sum]{\label{fig: 9c}\includegraphics[width=.234\textwidth]{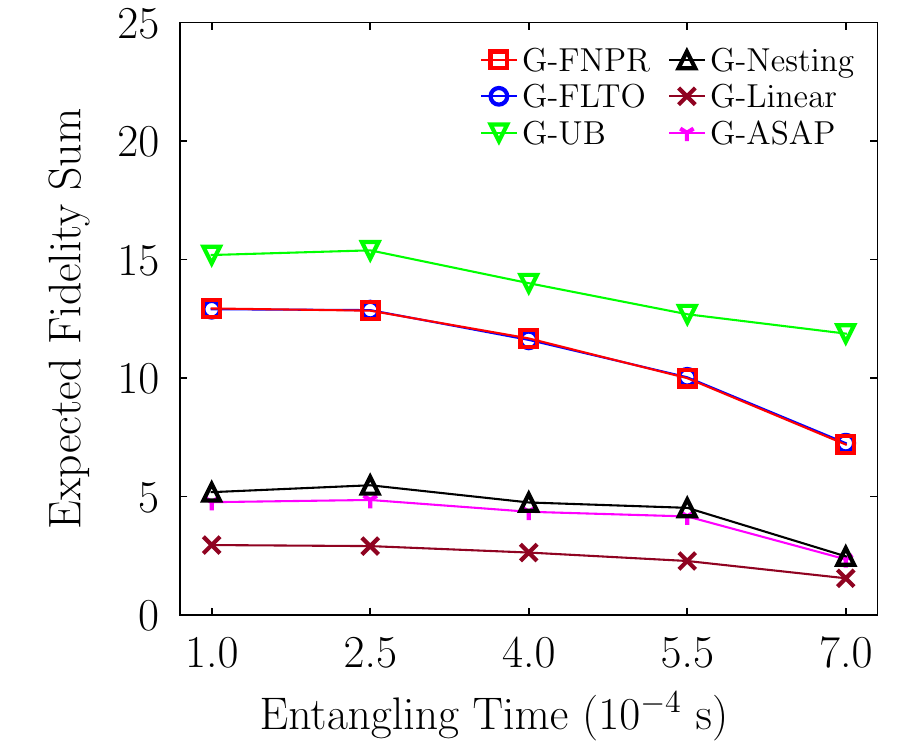}}
    \subfigure[Ent. Time vs \#Acc. Requests]{\label{fig: 9d}\includegraphics[width=.234\textwidth]{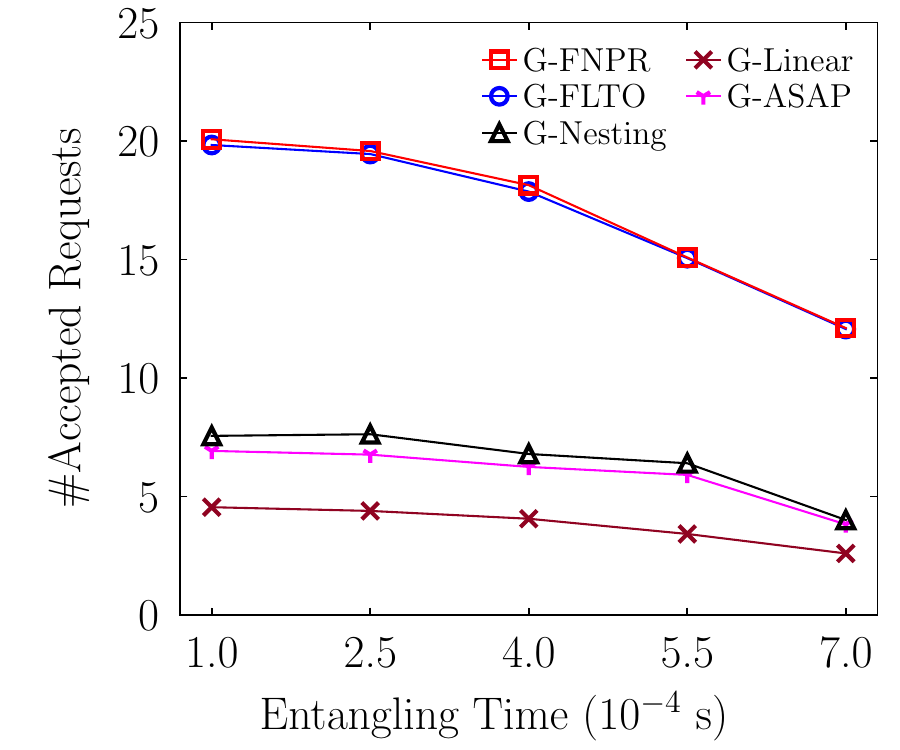}}
    \subfigure[$\tau$ vs Exp. Fid. Sum]{\label{fig: 9e}\includegraphics[width=.234\textwidth]{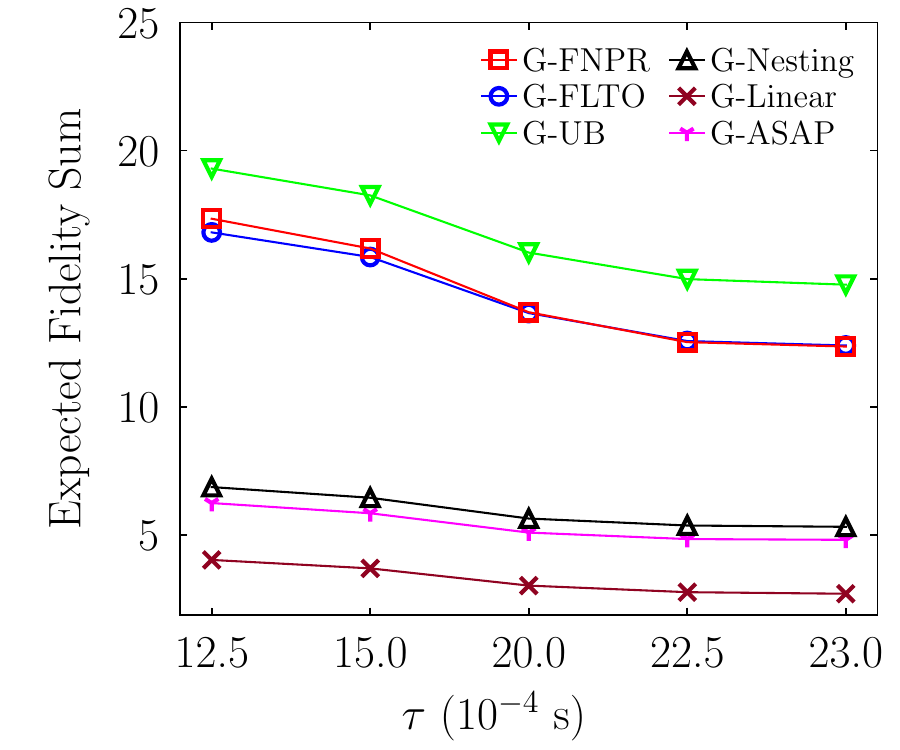}}
    \subfigure[$\tau$ vs \#Acc. Requests]{\label{fig: 9f}\includegraphics[width=.234\textwidth]{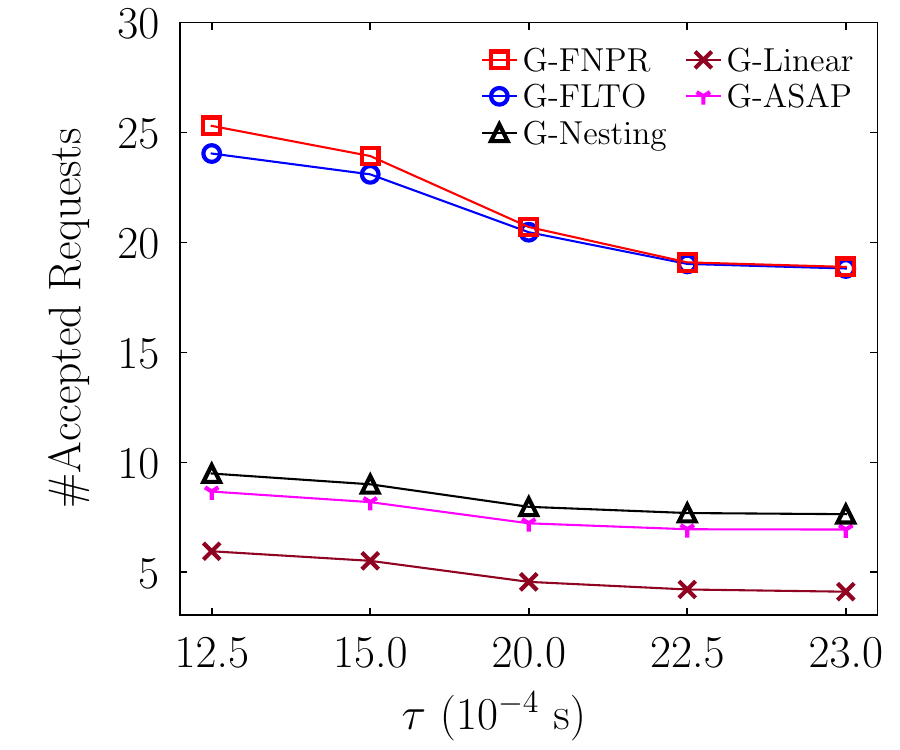}}
    \caption{Effect of different parameters on various metrics.}
\label{fig: experiment result2}
\end{figure}

\subsection{Numerical Results}

Figs. \ref{fig: experiment result}$-$\ref{fig: experiment result4} illustrate the expected fidelity sum and the number of accepted requests under different parameters. 
Overall, both {\AlgoNameAbbrtwo} and {\AlgoNameAbbrone} consistently demonstrate superior performance across all parameter settings, effectively enhancing fidelity and throughput while efficiently utilizing the resources of the quantum network.

\subsubsection{Effect of Number of Requests}

Fig. \ref{fig: experiment result} shows the expected fidelity sum and number of accepted requests across varying numbers of requests. 
As the number of requests increases, both the expected fidelity sum and the number of accepted requests generally exhibit an upward trend since more numerologies can be considered for selection to fully utilize the network resources.
No matter which routing algorithm is employed, {\AlgoNameAbbrtwo} and {\AlgoNameAbbrone} significantly outperform {\Merge}, {\Linear}, and {\ASAP}.
The results show that {\AlgoNameAbbrtwo} (and {\AlgoNameAbbrone}) averagely outperforms {\Merge}, {\Linear}, and {\ASAP} on expected fidelity sum by up to {\NMOutformMerge} ({\FLTOOutformMerge}), {\NMOutformLinear} ({\FLTOOutformLinear}), and {\NMOutformASAP} ({\FLTOOutformASAP}), respectively.
This is because {\AlgoNameAbbrtwo} can achieve near-optimum under a moderate $\tau$ and $|T|$ by the combinatorial algorithm with the separation oracle,
while {\AlgoNameAbbrone} uses \ac{REI} to choose numerologies, balancing resource utilization and fidelity.

\subsubsection{Effect of Swapping Probability}
In the literature, the swapping probability for simulations is typically between $0.5$ and $1$. To make the model more comprehensive and realistic, we compare the performance of algorithms under different swapping probabilities ranging from $0.5$ to $0.9$.
Figs. \ref{fig: 9a} and \ref{fig: 9b} show the expected fidelity sum and the number of accepted requests under a different swapping probability. 
Generally, as the swapping probability increases, both the expected fidelity sum and the number of accepted requests tend to rise. Additionally, {\AlgoNameAbbrtwo} and {\AlgoNameAbbrone} consistently outperform the other algorithms across all swapping probabilities, indicating they
are more robust and efficient in managing resources, leading to superior performance regardless of the swapping probability.

\subsubsection{Effect of Entangling Time}
The timing relationship between entangling and swapping may be affected by distance in reality, causing $\xi$ to vary under certain conditions. To achieve more comprehensive comparisons, we consider variations in the entangling time length ranging from $0.1 \text{ ms}$ to $0.7 \text{ ms}$, with corresponding $\xi$ values ranging from $8$ to $2$.
Figs. \ref{fig: 9c} and \ref{fig: 9d} show the expected fidelity sum and the number of accepted requests for different entangling time lengths. Generally, as entangling time length increases, both the expected fidelity sum and the number of accepted requests tend to decrease. This is because a smaller $\xi$ allows for fewer entangling attempts, leading to a lower entangling probability.
Additionally, {\AlgoNameAbbrtwo} and {\AlgoNameAbbrone} outperform the other algorithms.

\subsubsection{Effect of $\tau$}

Figs. \ref{fig: 9e} and \ref{fig: 9f} illustrate the expected fidelity sum and the number of accepted requests under a different $\tau$. Note that varying $\tau$ affects $\xi$, as the number of entangling attempts depends on the time slot length.
Although a larger $\tau$ allows more entangling attempts, leading to higher entangling probability, the results show that a larger $\tau$ conversely results in a lower expected fidelity sum and fewer accepted requests.
This is because the impact of decoherence over a longer time slot outweighs the benefits of increased entangling probability in our setting. 
Thus, some numerologies will not be admitted due to their low fidelity.
Fig. \ref{fig: 9e} further confirms that {\AlgoNameAbbrtwo} outperforms {\AlgoNameAbbrone} as $\tau$ is small, while {\AlgoNameAbbrone} is slightly better when $\tau$ is large, as described in Section VI.

\begin{figure}
\centering
    \subfigure[$|T|$ vs Exp. Fid. Sum]{\label{fig: 10a}\includegraphics[width=.234\textwidth]{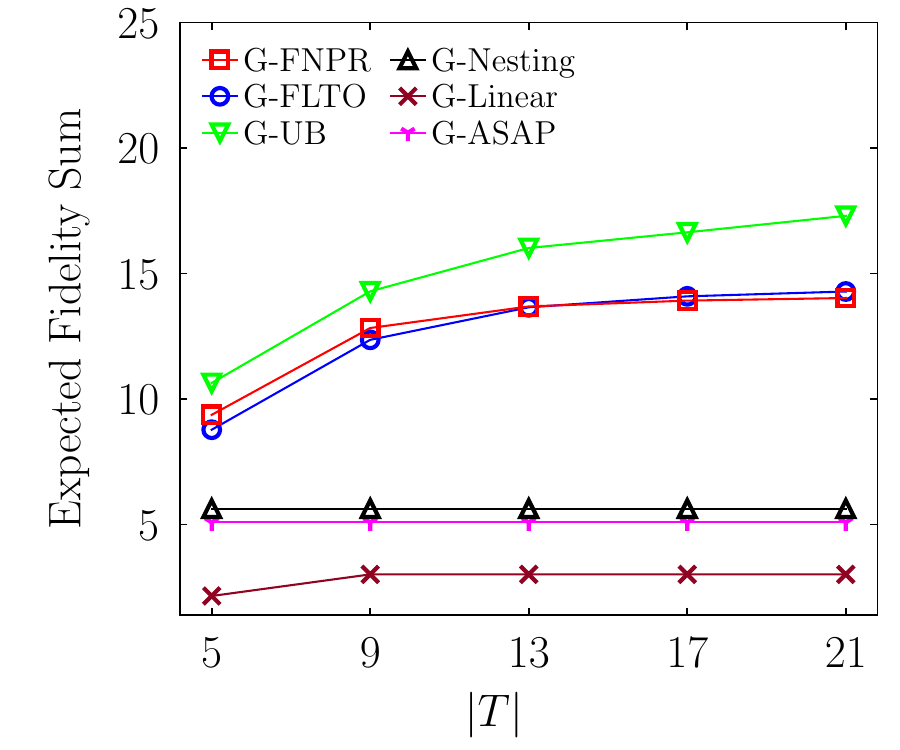}} 
    \subfigure[$|T|$ vs \#Acc. Requests]{\label{fig: 10b}\includegraphics[width=.234\textwidth]{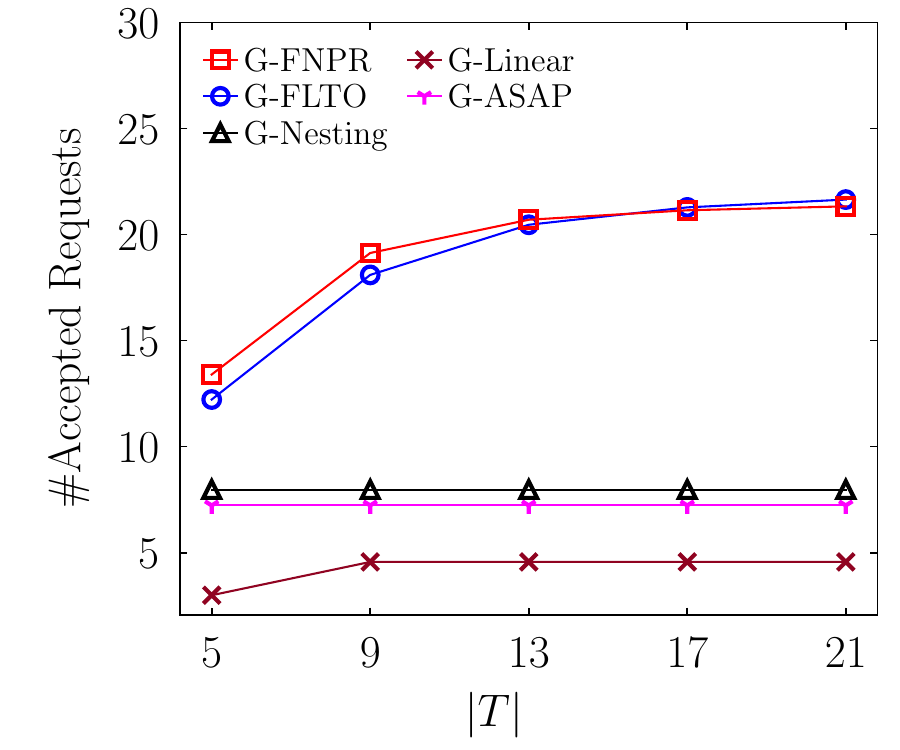}} 
    \subfigure[Avg. \#Mem. vs Exp. Fid. Sum]{\label{fig: 10c}\includegraphics[width=.234\textwidth]{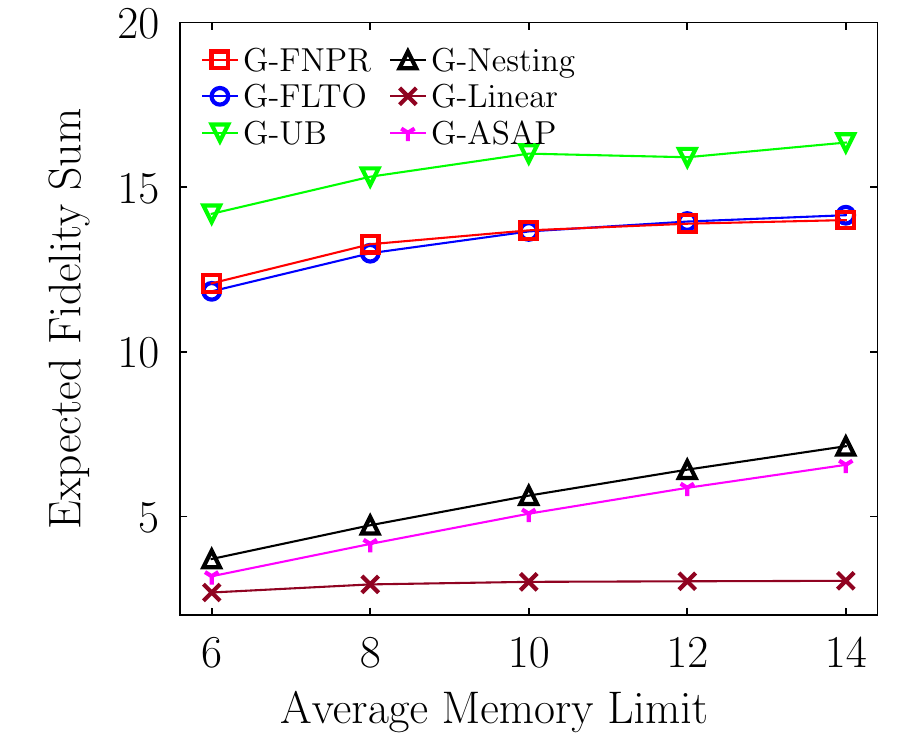}} 
    \subfigure[Avg. \#Mem. vs \#Acc. Requests]{\label{fig: 10d}\includegraphics[width=.234\textwidth]{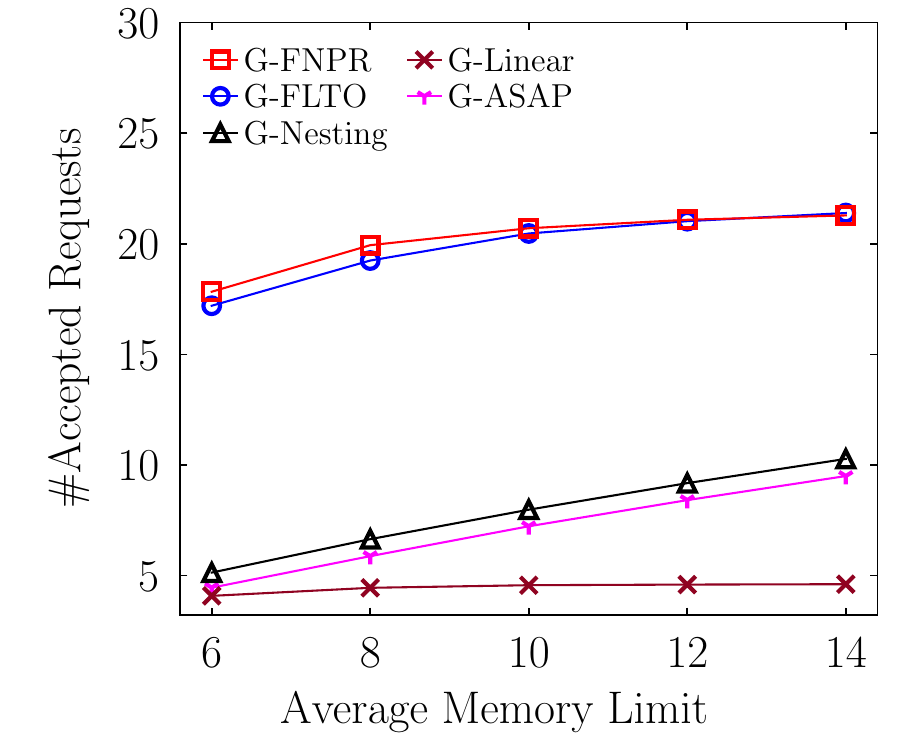}}
    \subfigure[Min. Init. Fid. vs Exp. Fid. Sum]{\label{fig: 10e}\includegraphics[width=.234\textwidth]{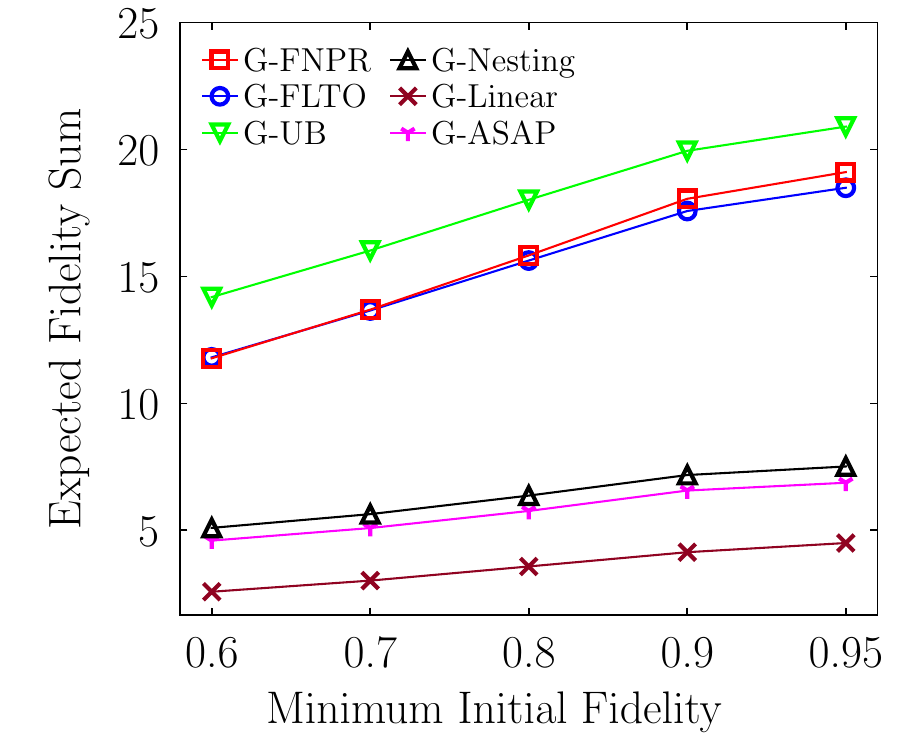}}  
    \subfigure[Min. Init. Fid. vs \#Acc. Requests]{\label{fig: 10f}\includegraphics[width=.234\textwidth]{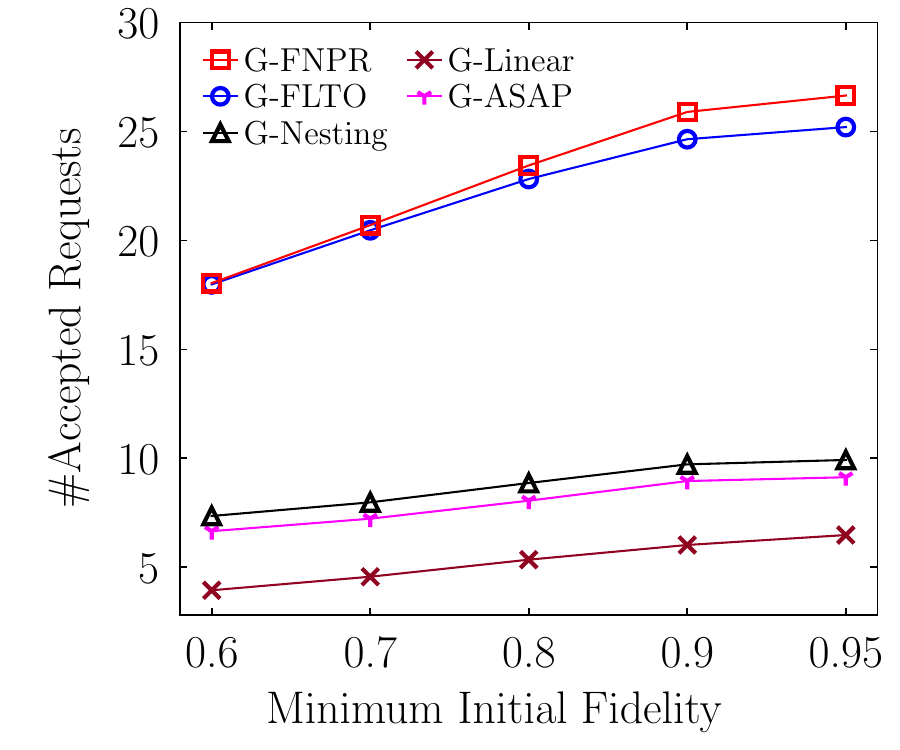}}
    \subfigure[Fid. Thr. vs Exp. Fid. Sum]{\label{fig: 10g}\includegraphics[width=.234\textwidth]{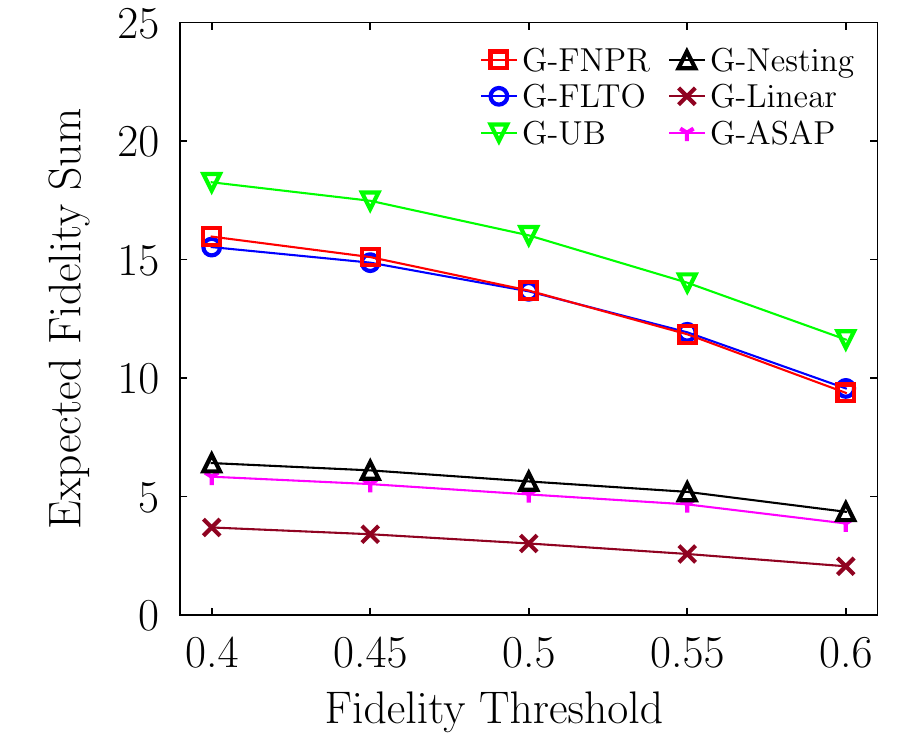}} 
    \subfigure[Fid. Thr. vs \#Acc. Requests]{\label{fig: 10h}\includegraphics[width=.234\textwidth]{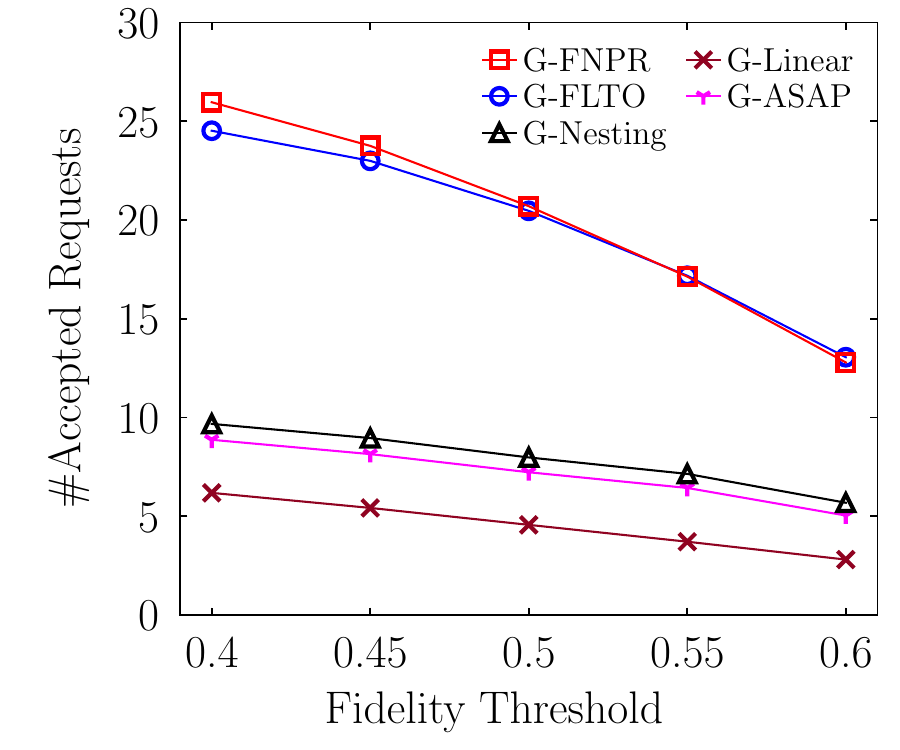}}
    \caption{Effect of different parameters on various metrics.}
\label{fig: experiment result3}
\end{figure}

\subsubsection{Effect of Resource Allocation}

Figs. \ref{fig: 10a}$-$\ref{fig: 10d} show the expected fidelity sum and the number of accepted requests under different $|T|$ and average memory limits. 
Generally, as $|T|$ or the amount of memory increases, both the fidelity sum and the number of accepted requests tend to increase, benefiting from sufficient resources.
In Figs. \ref{fig: 10a} and \ref{fig: 10b}, {\AlgoNameAbbrtwo} slightly outperforms {\AlgoNameAbbrone} as $|T|$ is small since {\AlgoNameAbbrtwo} can be close to the optimum solution as $\frac{F_{\max}}{F_{\min}}\approx 1$. 
However, in scenarios with a large $|T|$, {\AlgoNameAbbrone} proves more suitable since it can avoid selecting inefficient numerologies via our \ac{DP} algorithm with \ac{REI} when the resources are sufficient. The above results show that each has its own merits.

\subsubsection{Effect of Initial Fidelity and Threshold}

Figs. \ref{fig: 10e}$-$\ref{fig: 10f} and \ref{fig: 10g}$-$\ref{fig: 10h} show the expected fidelity sum and the number of accepted requests at varying minimum initial fidelity levels and fidelity thresholds. 
Generally, as the minimum initial fidelity increases, both the expected fidelity sum and the number of accepted requests increase since most numerologies can achieve high fidelity.
{\AlgoNameAbbrtwo} performs the best under higher minimum initial fidelity because it can approach the optimum solution as $\frac{F_{\max}}{F_{\min}}\approx 1$.
Besides, as the fidelity threshold increases, the expected fidelity sum tends to decrease inevitably; however, {\AlgoNameAbbrtwo} and {\AlgoNameAbbrone} still outperform others in all cases.

\begin{figure}
\centering
    \subfigure[Ent. Prob. vs Exp. Fid. Sum]{\label{fig: app_a}\includegraphics[width=.234\textwidth]{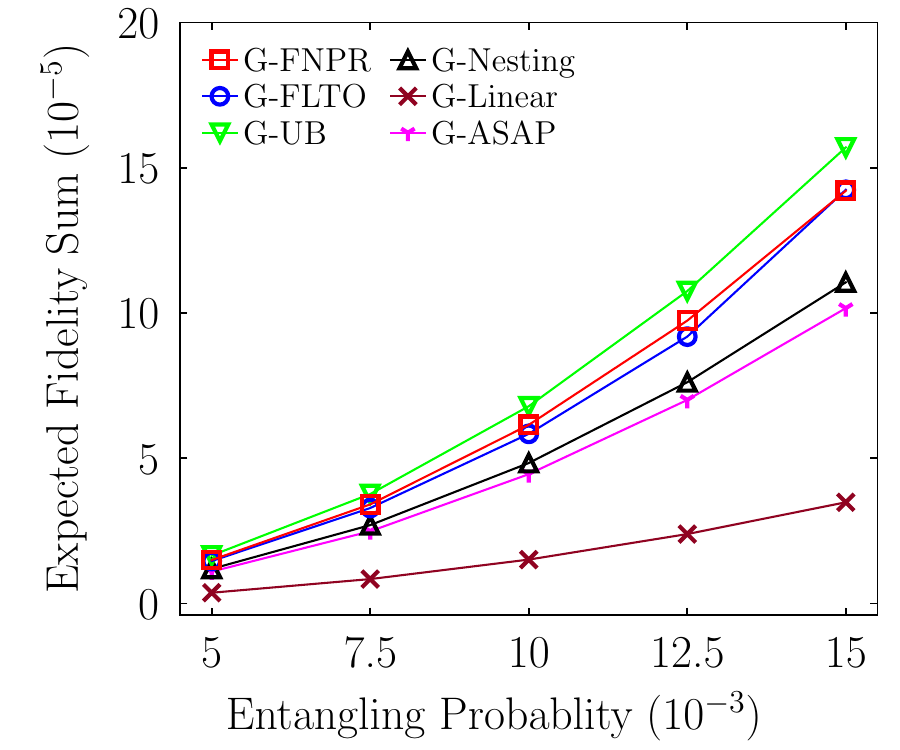}} 
    \subfigure[Ent. Prob. vs \#Acc. Requests]{\label{fig: app_b}\includegraphics[width=.234\textwidth]{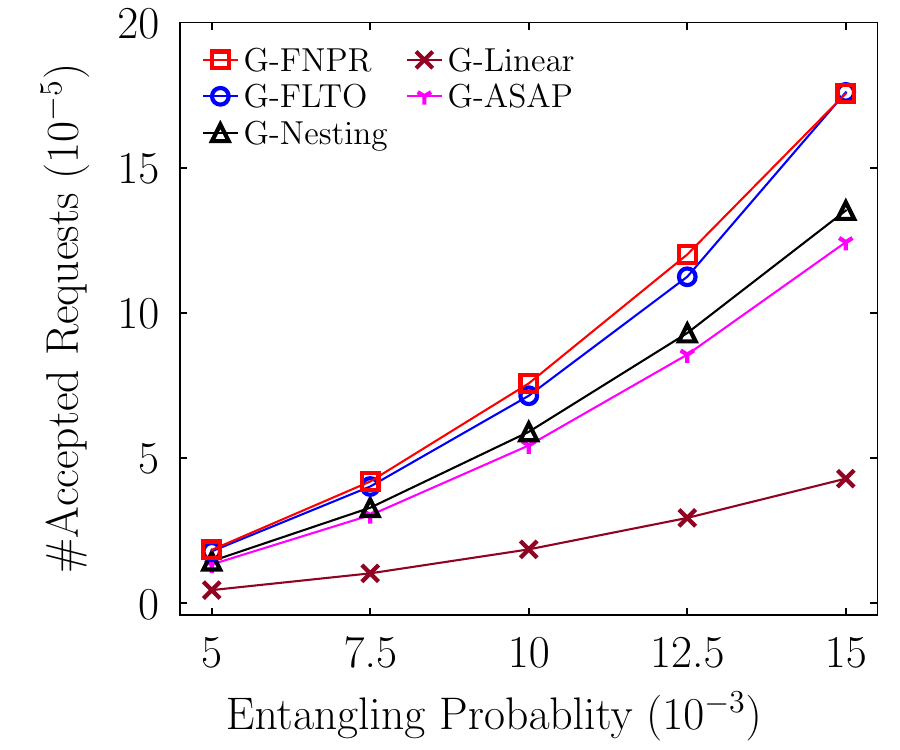}}
    \subfigure[Ent. Prob. vs Exp. Fid. Sum]{\label{fig: app_c}\includegraphics[width=.234\textwidth]{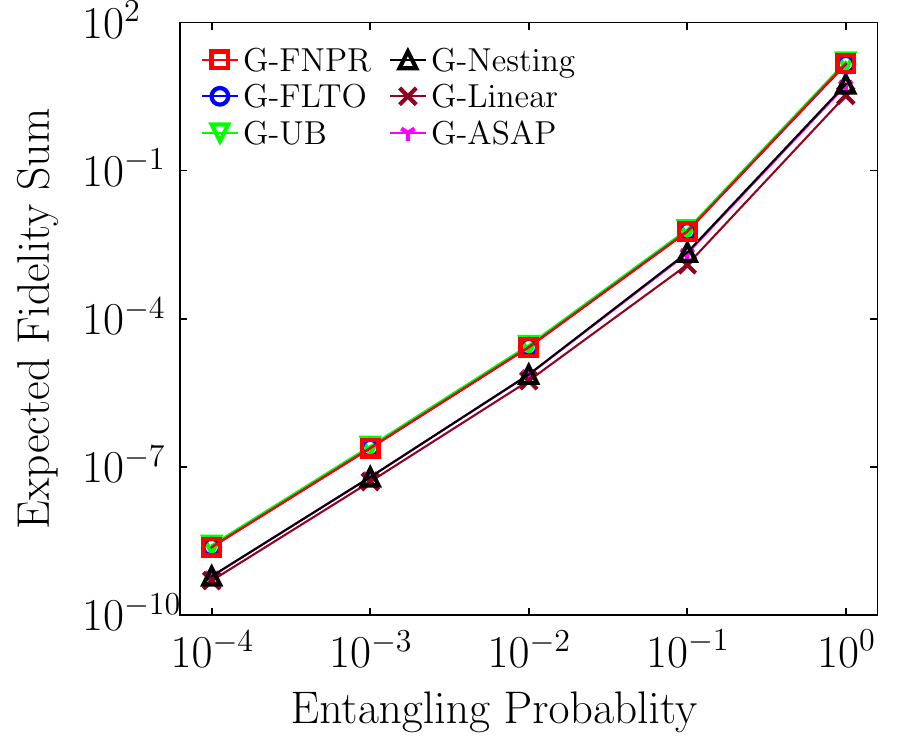}} 
    \subfigure[Ent. Prob. vs \#Acc. Requests]{\label{fig: app_d}\includegraphics[width=.234\textwidth]{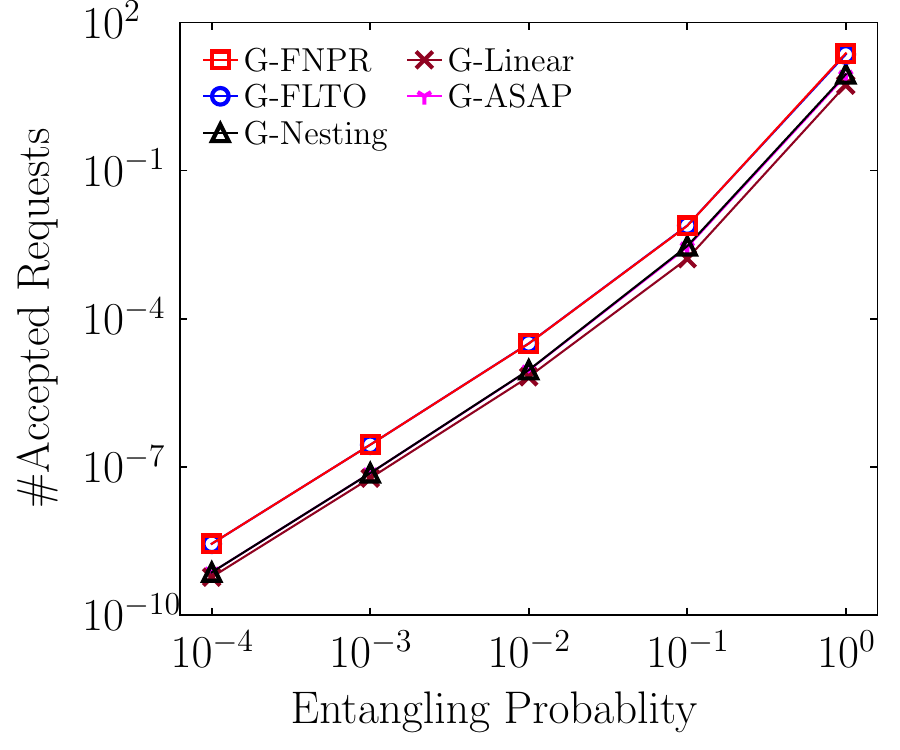}}
    \caption{Effect of different entangling probabilities on various metrics.}
\label{fig: experiment result4}
\end{figure}

\subsubsection{Effect of Entangling Probability}
To further observe the effect of low entangling probability, we conduct experiments by treating the entangling probability (i.e., $\Pr(u,v)$) as an abstract input parameter, as shown in Fig. \ref{fig: experiment result4}.
Figs. \ref{fig: app_a} and \ref{fig: app_b} show the expected fidelity sum and number of accepted requests across varying entangling probabilities. 
In this simulation, the entangling probability is set to range from $0.005$ to $0.015$, which covers the realistic parameter range discussed in \cite{metro-solid-state}.
Due to the low entangling probability, the fidelity sum and number of accepted requests are low under this setting for all simulated methods. 
Even under such adverse conditions, the proposed {\AlgoNameAbbrtwo} and {\AlgoNameAbbrone} still outperform the other algorithms by $20\%$ ($18\%$) to $75\%$ ($75\%$) on average.
This is because {\AlgoNameAbbrtwo} utilizes the combinatorial algorithm with randomized rounding to achieve near-optimal solutions, while {\AlgoNameAbbrone} invokes the \ac{DP} techniques to design an effective greedy algorithm. 
Both algorithms are designed to solve the {\ProblemNameAbbr} suitably. 
To more deeply observe the performance changes under different entangling probabilities, we conducted simulations across a wider range of entangling probabilities (i.e., from $0.0001$ to $1$), as shown in Figs. \ref{fig: app_c} and \ref{fig: app_d}. The results show that our proposed algorithms consistently outperform the other algorithms across all entangling probability levels, which ensures the robustness of the proposed algorithms.

\section{Conclusion} 
\label{sec: conclusions}

This paper proposes a promising short time slot protocol for entangling and swapping scheduling with a novel optimization problem {\ProblemNameAbbr}.
The {\ProblemNameAbbr} asks for the solution that selects the numerology (strategy tree) for each accepted request while maximizing the fidelity sum of all accepted requests.
To solve the {\ProblemNameAbbr}, we design two new algorithms.
{\AlgoNameAbbrtwo} is a bi-criteria approximation algorithm by solving an \ac{LP} with a separation oracle and rounding the solution for the cases where the time slot length and the number of time slots in a batch are small.
{\AlgoNameAbbrone} is a greedy algorithm with a proper index to call two \ac{DP}-based algorithms adaptively for the other cases.
Finally, the simulation results manifest that our algorithms can outperform the existing methods by 
up to $60\sim 78\%$ in general, and by $20\sim 75\%$ even under low entangling probabilities.

\bibliographystyle{IEEEtran}
\bibliography{References}



\appendices

\section{Discussion of the Counter-Intuitive Example} \label{app: discussion1}

We conducted additional simulations and clarified the reasons behind this counter-intuitive example.
There are two main factors explaining why Fig. \ref{fig: treeskew} has better fidelity than Fig. \ref{fig: treefull}:
1) In our decoherence model (i.e., Eq. (1)), we set $\kappa = 2$. Note that $\kappa$ controls the decoherence speed. For example, in Fig. \ref{fig: kappadiff}, we compare the two strategy trees in Figs. \ref{fig: treeskew} and \ref{fig: treefull} under different values of $\kappa$ ranging from $1$ to $2.3$. As $\kappa$ increases, the fidelity of the skewed strategy tree in Fig. \ref{fig: treeskew} tends to surpass that of the complete strategy tree in Fig. \ref{fig: treefull} because a higher $\kappa$ reduces decoherence rate.
2) Most of the swapping processes in Fig. \ref{fig: treeskew} consume two entangled pairs with similar fidelity and thus suffer less fidelity loss due to swapping processes. 
From the perspective above, we can also conclude that if all the links along the path have similar initial fidelity, the complete strategy tree will conversely outperform the other. 
For example, if all the generated pairs $(v_1,v_2)$, $(v_2,v_3)$, $(v_3,v_4)$, and $(v_4,v_5)$ have initial fidelity $0.98$, then the final fidelity of the strategy trees in Figs. \ref{fig: treeskew} and \ref{fig: treefull} will be $0.889$ and $0.891$, respectively.

\begin{figure}[t]
\centering
    \includegraphics[width= .45\textwidth]{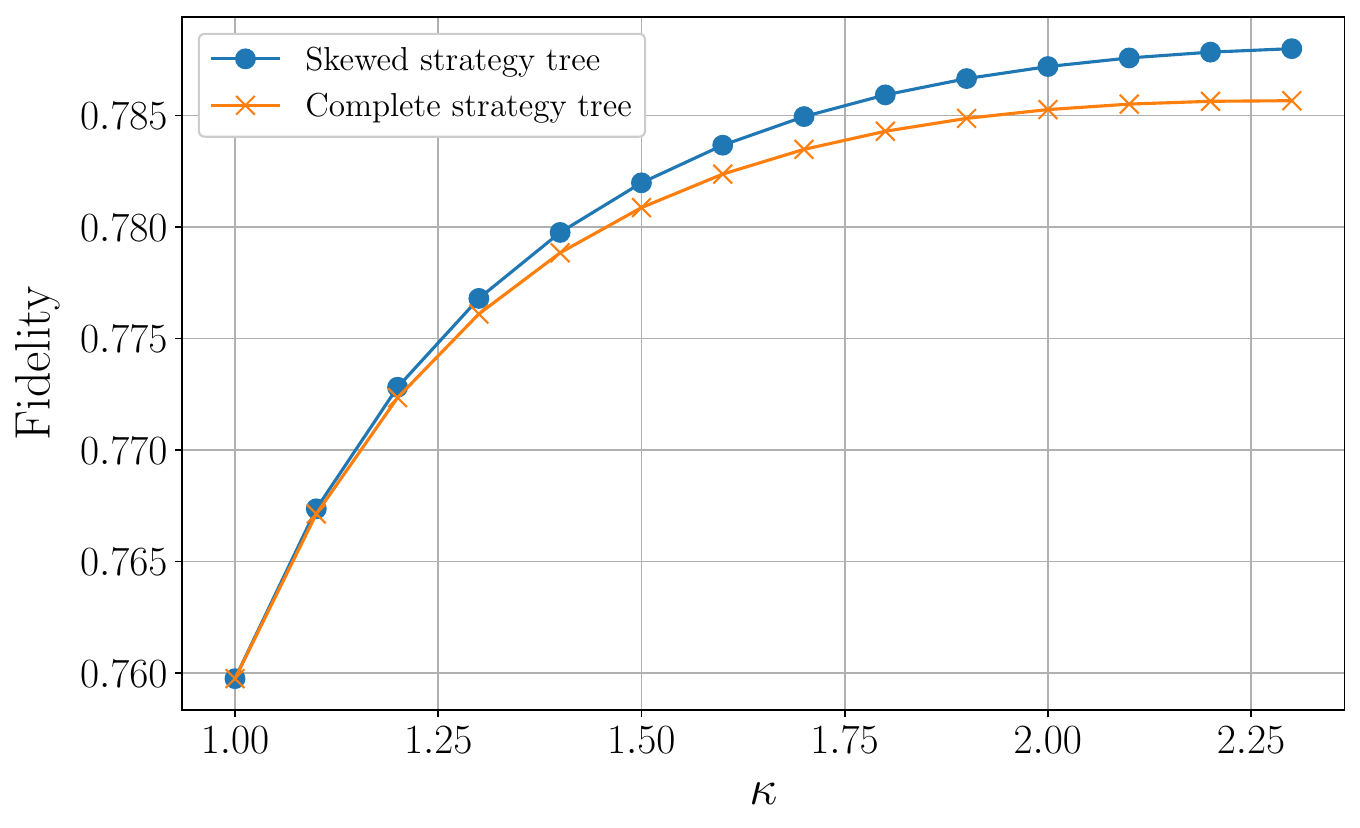}
    \caption{Comparison of fidelity between the strategy trees in Figs. \ref{fig: treeskew} and \ref{fig: treefull} under different values of $\kappa$.}
    \label{fig: kappadiff}
\end{figure}

\section{Example of Network topology} 
\label{app: topo}

Fig. \ref{fig: network_topology} shows a representative illustration of the network topology generated using the Waxman model.

\begin{figure}[t]
\centering
    \includegraphics[width=.42\textwidth]{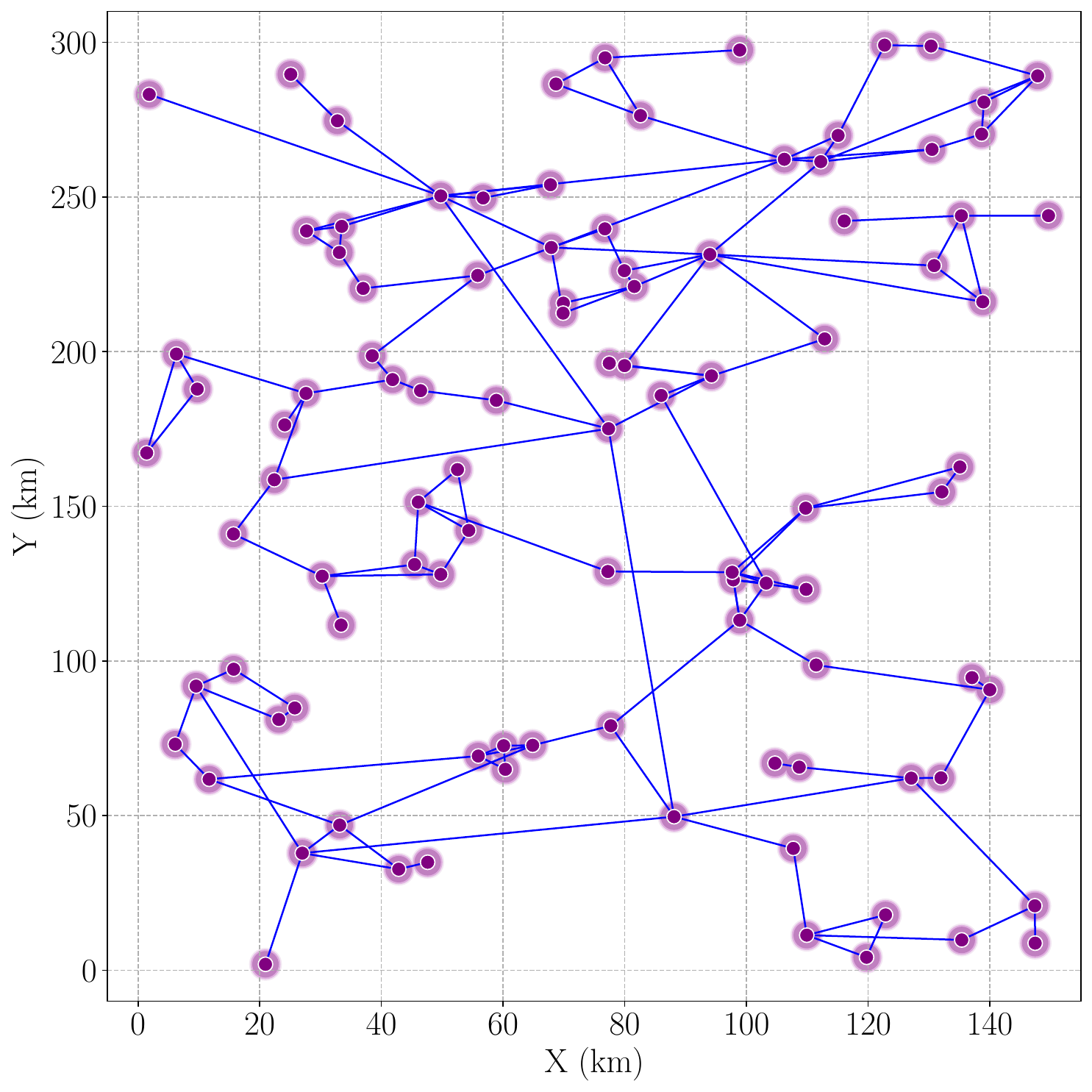}
    \caption{Network topology sample generated using the Waxman model.}
    \label{fig: network_topology}
\end{figure}
\end{document}